\newcommand{\be}{\begin{equation}}
\def\eeeForColoration\end{equation}
\newcommand{\ee}{\end{equation}}
\DeclareMathOperator{\last}{last}
\DeclareMathOperator{\first}{first}
\newcommand{\nats}{\mathbb{N}}
\newcommand{\NN}{\nats}
\newcommand{\ov}{\widehat}
\newcommand{\set}[1]{\{#1\}}
\newcommand{\pfun}{\rightharpoonup}
\DeclareMathOperator{\dom}{dom}
\DeclareMathOperator{\ima}{im}
\newcommand{\lab}{t}
\newcommand{\Proba}{\mathbb{P}}
\newcommand{\aPath}{w}
\newcommand{\states}{S}
\newcommand{\pt}{p}
\newcommand{\distrib}{\Delta}
\newcommand{\mc}{\mathcal{M}}
\newcommand{\proba}{\Proba}
\DeclareMathOperator{\pathes}{Path}
\newcommand{\npconp}{{\sc np}$ \cap ${\normalfont co}{\sc np}}
\newcommand{\exptime}{{\sc exptime}}
\newcommand{\nexptime}{{\sc nexptime}}
\newcommand{\ptime}{{\sc ptime}}
\newcommand{\tree}{t}
\newcommand{\stateFormula}{\psi}
\newcommand{\pathFormula}{\phi}
\newcommand{\pf}{\pathFormula}
\renewcommand{\sf}{\stateFormula}
\newcommand{\pCTLFormula}{\xi}
\newcommand{\fo}{\pCTLFormula}
\newcommand{\PCTLE}{\exists}
\newcommand{\PCTLA}{\forall}
\newcommand{\PCTLPp}{\proba_{>0}}
\newcommand{\PCTLPo}{\proba_{=1}}
\renewcommand{\next}{X}
\newcommand{\until}{U}
\newcommand{\always}{G}
\newcommand{\ctl}{{CTL}}
\newcommand{\ctls}{{\ctl}$^*$}
\newcommand{\pctls}{p\ctls}
\newcommand{\ECTL}{ECTL}
\newcommand{\allop}{$[\PCTLE,\PCTLA,\PCTLPp,\PCTLPo]$}
\newcommand{\pop}{$[\PCTLPp,\PCTLPo]$}
\newcommand{\ECTLa}{\ECTL\allop}
\newcommand{\ctlsa}{\ctls\allop}
\newcommand{\automata}{\mathcal{A}}
\newcommand{\Astates}{Q}
\newcommand{\Atrans}{\to}
\newcommand{\Q}{\Astates}
\newcommand{\QA}{\Astates_A}
\newcommand{\QE}{\Astates_E}
\newcommand{\QN}{\Astates_N}
\newcommand{\Qall}{F_\forall}
\newcommand{\Qone}{F_1}
\newcommand{\Qzero}{F_{>0}}
\newcommand{\ignore}{\sharp}
\newcommand{\lang}{\mathcal{L}}
\newcommand{\ndautomata}{\mathcal{B}}
\newcommand{\blank}{\circ}
\newcommand{\BSP}{{\bf BIN-SAT}}
\newcommand{\SP}{{\bf MC-SAT}}
\DeclareMathOperator{\inde}{index}
\DeclareMathOperator{\lar}{LAR}
\renewcommand{\index}{\inde}
\newcommand{\limch}{limited choice}
\newcommand{\what}{\limch\ for Adam}
\begin{document}

\iftrue
\bibliographystyle{plainurl}

\title{Alternating Nonzero Automata}

\author[1]{Paulin Fournier}
\author[1]{Hugo Gimbert}
\affil[1]{LaBRI, CNRS, Universit{\' e} de Bordeaux, France.\\
\texttt{\{paulin.fournier,hugo.gimbert\}@labri.fr}}


\maketitle

\fi

\begin{abstract}
We introduce 
a new class of automata on infinite trees
called
\emph{alternating nonzero automata},
which extends the class of non-deterministic nonzero automata.
We reduce the emptiness problem
for alternating nonzero automata
to the same problem for non-deterministic ones, which implies decidability.
We obtain as a corollary algorithms for the satisfiability of a probabilistic 
temporal 
logic extending both CTL* and the qualitative fragment of pCTL*.
\end{abstract}

\iftrue
\section{Introduction}
\fi

The theory of automata on infinite trees is
rooted in Rabin's seminal theorem which establishes an 
effective correspondence between the monadic second order logic (MSO)
theory of the infinite binary tree and the non-deterministic automata on this tree~\cite{rabinsem}. In this correspondence, the satisfiability of the logic is dual to the emptiness of the algorithm and both these algorithmic problems are mutually reducible to one another.

This elegant setting has been partially extended to probabilistic logics~\cite{LS1982,brazdil2008controller,DBLP:conf/lfcs/MichalewskiM16,DBLP:journals/corr/MichalewskiMB16,DBLP:conf/icalp/Bojanczyk16}
and automata with probabilistic winning conditions
~\cite{rabinsem, pazbook,DBLP:journals/jacm/BaierGB12,DBLP:journals/tocl/CarayolHS14,DBLP:conf/icalp/Bojanczyk16}. In this paper we make another step in this direction: we show a correspondence between the logic \ctlsa\ 
and nonzero alternating automata with limited choice. Moreover we show  that the emptiness problem of the automata is decidable and obtain as a corollary the decidability of the satisfiability of the logic.

\paragraph*{Automata.}
Alternating nonzero automata
are an alternating version of \emph{non-deterministic nonzero automata} introduced in~\cite{DBLP:journals/corr/BojanczykGK17},
which themselves are equivalent to \emph{non-deterministic zero automata} introduced in~\cite{DBLP:conf/icalp/Bojanczyk16}.

An alternating nonzero automaton  
takes as input a binary tree. 
Some states of the automaton are controlled by Eve, while other states are controlled by Adam, and the player controlling the current state chooses the next transition. Some transitions are \emph{local transitions}, in which case the automaton stays on the same node of the input tree while other are \emph{split transitions} in which case the automaton proceeds to the left son or to the right son of the current node with equal probability $\frac{1}{2}$.

This interaction between Eve and Adam is seen as a game where Eve and Adam play according to some strategies. Once the strategies are fixed, one obtains a Markov chain whose trajectories are all possible plays consistent with the strategies. The winner is determined with respect to winning conditions introduced in~\cite{DBLP:conf/icalp/Bojanczyk16,DBLP:journals/corr/BojanczykGK17}, using
a total order on the set of states (used to compute the limsup of a play which is the largest state seen infinitely often during the play)
and three subsets of states,
respectively called the \emph{sure}, \emph{almost-sure} and \emph{positive states}.
Eve wins if and only if the three acceptance  conditions hold:

{\noindent \bf sure winning:} every play has limsup in sure states; and 

{\noindent \bf almost-sure winning:} almost-every play has limsup in almost-sure states; and 

{\noindent \bf positive winning:} whenever the play enters a positive state there is positive probability that the play never exits positive states.

The input tree is accepted by the alternating automaton iff Eve has a winning strategy.

Alternating nonzero automata generalize both classical alternating automata with parity conditions~\cite{Chandra:1981:ALT:322234.322243, MULLER1987267} (when all states are almost-sure and positive) 
as well as non-determin\-istic nonzero automata~\cite{DBLP:journals/corr/BojanczykGK17} (in case Eve controls all states).

We do not know whether the emptiness problem for these automata is decidable or not, however we show that the answer is positive for the subclass of
alternating nonzero automata with \emph{\what}.
In these automata, some choices of Adam are canonical, at most one in every state, and Adam may perform at most a bounded number of 
non-canonical choices during a single play.

We establish some properties of alternating nonzero automata with \what.

\begin{itemize}
\item
First, we show that the emptiness problem for alternating nonzero automata with \what\ is in 
\nexptime $\cap$ co-\nexptime\ (Theorem~\ref{theo:recogalt}). 
The proof is an \exptime\ reduction to the emptiness problem for non-deterministic automata. This proof
relies on the positional determinacy of the acceptance games for Eve (Lemma~\ref{lem:det})
and a characterization of positional winning strategies for Eve (Lemmas~\ref{caracsure},~\ref{defindex} and~\ref{lem:caracnonzero}).
\item
Second, we show that in the particular case where the sure winning condition 
is a B\"uchi condition,
emptiness of non-deterministic nonzero automata
is in \ptime\ (Theorem~\ref{theo:complexitiyemptiness})
hence, in case of 
a B\"uchi sure winning condition, 
emptiness of nonalternating nonzero automata is in \exptime\  (Theorem~\ref{theo:recogalt}).
\end{itemize}

\paragraph*{Logic.}
The temporal logic \ctls\ introduced by Emerson and Halpern \cite{emerson1986sometimes} and its fragments \ctl\ and LTL are    prominent tools to specify properties of discrete event systems.

%

A variant of \ctls\ is the logic
\pctls\ \cite{hansson1994logic} in which the universal and existential path quantifiers are \emph{replaced} by probabilistic path quantifiers which set upper or lower bounds on the probability of a path property in a Markov chain.
For example the formula $\proba_{\geq \frac{1}{2}}(FGa)$ specify that
with probability at least $\frac{1}{2}$ eventually all the visited states are labelled with $a$.
To our knowledge, the satisfiability problem for this logic is an open problem. 

However, for
the qualitative fragment of \pctls, where only two probabilistic quantifiers $\proba_{>0}$ and $\proba_{=1}$ are available, the satisfiability
is decidable~\cite{brazdil2008controller}. In a variant of \pctls\ called p\ECTL\ the path subformula are replaced by deterministic B\"uchi automaton, and the satisfiability of the qualitative fragment is 2-\exptime\ complete~\cite{brazdil2008controller}, the same complexity as for \ctls~\cite{vardi1985improved}.

Remark that neither \pctls\ nor  p\ECTL\ 
includes the path operators $\forall$ and $\exists$,
thus  these two logics are incomparable in expressivity with \ctls.
For example, on the alphabet $\{a,b\}$, the \ctls\ formula $\phi_1=\forall F G \neg b$, and the \pctls\ formula $\phi_2=\PCTLPo( F G \neg b)$ specify, that \emph{every} branch, respectively \emph{almost-every} branch, of the model has finitely many $b$.
Neither $\phi_1$ can be expressed in p\ctls\ nor $\phi_2$
can be expressed in \ctls.

\smallskip 

In this paper, we consider the logic \ctlsa\ which is an extension of both \ctls\ and qualitative \pctls
and establish several properties of this logic.

\begin{itemize}
\item
The satisfiability by an arbitrary $\Sigma$-labelled Markov chain reduces 
to the satisfiability by 
$(\Sigma\cup \{\circ\})$-labelled 
a binary tree
with $\blank$ a fresh letter
(Theorem~\ref{theo:reduc}).
\item
The satisfiability of \ctlsa\ reduces to the emptiness
of alternating nonzero automata with finite choice for Adam
thus it is decidable in 3-\nexptime $\cap$co-3-\nexptime.
In the variant \ECTLa, where path formula are deterministic B\"uchi automata, this reduction gives a
2-\nexptime $\cap$ co-2-\nexptime\ complexity
and for the fragment \ctl\allop\ the complexity is \nexptime $\cap$ co-\nexptime\ (Theorem~\ref{theo:pctls}).
\item
For the fragments \ctls\pop,
\ECTL\pop\
and \ctl\pop\
(i.e. qualitative p\ctls, p\ECTL\ and p\ctl{} respectively),
the $\Qall$ acceptance condition of the automaton is a B\"uchi condition and we retrieve the optimal complexity bounds of~\cite{brazdil2008controller,Brazdil2008}, i.e. 3-\exptime, 2-\exptime\
and \exptime, respectively.
\end{itemize}


\paragraph*{Organization of the paper.} 
Section~\ref{sec:nonzero} introduces alternating nonzero automata,
an example is given in Section~\ref{sec:example}.
Section~\ref{sec:nondet} focuses on non-deterministic automata,
and provide an optimal algorithm to decide emptiness
when the $\Qall$ condition is B\"uchi.
In Section~\ref{sec:emptiness} 
we prove that emptiness is decidable (\ref{theo:recogalt})
when Adam has limited choice.
Section~\ref{sec:pctls} presents our complexity results
for the satisfiability of \ctlsa\ and its variants and fragments.
\section{Alternating nonzero automata}\label{sec:nonzero}


An alternating nonzero automaton on a finite alphabet $\Sigma$
is a finite-state machine processing binary trees,
equipped with a game semantics:
every tree is either accepted or rejected by the machine
depending on who wins the acceptance game on the tree.

\paragraph*{Trees.} A $\Sigma$-labelled binary tree is a function $\tree:\{0,1\}^*\to\Sigma$.
An element $n\in\{0,1\}^*$ is called a \emph{node} of the tree
and has exactly two sons
$n0$ and $n1$.
 We use the usual notions of ancestors and descendants.
A node $n'$ is \emph{(strictly) below} $n$ if $n$ is a (strict) prefix of $n'$.
A \emph{path} in the tree is a finite or infinite sequence of nodes $n_0,n_1,\ldots$
such that for every $k$ the node $n_{k+1}$ is a son of the node $n_k$.

A branch $b$ is an element of $\{0,1\}^\omega$.
If a node $n$ is a prefix of $b$ we say that $n$ \emph{belongs} to $b$ or that $b$ \emph{visits} $n$.
The set of branches is equipped with the uniform probability measure, denoted $\mu$,
corresponding to an infinite random walk taking at each step either direction $0$ or $1$ with equal probability $\frac{1}{2}$.

\paragraph*{Automata.}
An  alternating nonzero automaton on alphabet $\Sigma$
is presented as a tuple
\[
\automata=(\Q,q_0,\QE,\QA,\Atrans,
\Qall, \Qone,\Qzero
)\text{ where:}
\]
\begin{itemize}
  \item $\Astates$ is a finite set of states, equipped with a total order $\leq$, containing the initial state $q_0$.
  \item $(\QE,\QA)$ is a partition of $\Q$ into Eve and  Adam states.
  \item $\to$ is the set of transitions,
  there are two types of transitions:
\begin{itemize}
\item  \emph{local transitions} are
tuples $(q,a,q')$ with $q,q'\in\Q$ and $a\in \Sigma$, denoted $q\to_a q'$. 
\item  \emph{split transitions} are tuples $(q,a,q_0,q_1)\in \Q\times \Sigma\times Q^2$, denoted $q\to_a (q_0,q_1)$.
  \end{itemize}
  \item
  $\Qall$, $\Qone$ and $\Qzero$
  are subsets of $Q$
  defining the acceptance condition. 
    \end{itemize}

The input of such an automaton is an infinite binary tree $t : \{0,1\}^* \to \Sigma$. The source (resp. the target) of a local transition $q\to_a q'$ is $q$ (resp $q'$).
The source (resp. the targets) of a split transition $q \to_a (q_0,q_1)$ is $q$ (resp $q_0$ and $q_1$).
A state is said to be controlled by Eve or Adam 
whether it belongs to $\QE$ or $\QA$. 
The controller of a transition is the controller of its source state.
We always assume that
    \begin{itemize}
    \item[{\bf (HC)}]
    the automaton is {\bf complete}: for every state $q$ and letter $a$ there is at least one transition with source $q$ on $a$.
    \end{itemize}

The (HC) condition makes it easier to define the game semantics of the automaton.

\paragraph*{Game semantics.}
%
%
%

The acceptance of an input binary tree by the automaton is defined by mean of
a stochastic game between Eve and Adam called the \emph{acceptance game}.

The game of acceptance of a binary tree $\lab:\{0,1\}^*\to \Sigma$ by $\automata$
is a two-player stochastic game with perfect information
played by two strategic players Eve and Adam.
The vertices of the game are all pairs $(n,q)$ where $n\in\{0,1\}^*$ is
a node of the infinite binary tree and
$q$ is a state of the automaton.
The game starts in the initial vertex $(\epsilon,q_0)$.

Each vertex $(n,q)$ is controlled by either Eve or Adam depending whether $q\in \QE$ or $q \in \QA$.
The controller of the current state
%
chooses any transition with source $q$ and letter $\lab(n)$.
Intuitively, depending whether the transition is a local or a split transition,
the automaton stays on the current node $n$ or move with equal probability
$\frac{1}{2}$ to either node $n0$ or $n1$.
If the transition is a local transition $q\to_{\lab(n)} q'$,
the new vertex of the game
is $(n,q')$.
If the transition is a split transition $q \to_{\lab(n)} (r_0,r_1)$ then
the new vertex is chosen randomly with equal probability $\frac{1}{2}$
between vertices $(n0,r_0)$ or $(n1,r_1)$.
%

A play is a finite or infinite sequence of vertices $\pi=(n_0,q_0)(n_1,q_1)\ldots $.
We denote $\first(\pi) = (n_0,q_0)$
 and $\last(\pi) = (n_k,q_n)$ (for finite plays).

 A strategy for Eve associates with every finite play
whose last vertex is controlled by Eve
a transition with source $q_n$ and letter $\lab(n_k)$
(such a transition always exists since the automaton is complete).
Strategies for Adam are defined in a symmetric way.
Strategies of Eve are usually denoted $\sigma$ while strategies for Adam are denoted $\tau$.

\paragraph*{Measuring probabilities.}
Once both players Eve and Adam have chosen some strategies
$\sigma$ and $\tau$, this defines naturally
a non-homogenous Markov chain whose states
are the vertices of the game.
According to Tulcea theorem,
if we equip the set of plays with the $\sigma$-field generated by cylinders,
then there is a unique probability measure $\Proba^{\sigma,\tau}$
such that 
after a play
$\pi=(n_0,q_0)\ldots (n_k,q_k)$,
if $\delta(\pi)$ denotes the transition chosen by Eve or Adam
after $\pi$ (depending whether $q_k \in \QE$ or $q_k \in \QA$),
the probability to go to vertex $(n_{k+1},q_{k+1})$
is:
\[
\begin{cases}
 1 & \text{ if $\delta(\pi)$ is the local transition $q_k\to_{\lab(n_k)} q_{k+1}$}\enspace,\\
 \frac{1}{2}
& \text{ if $\delta(\pi)$ is the split transition $q_k\to_{\lab(n_k)} (r_0,r_1)$ and }\\
&\hspace{2cm}\begin{cases}
\text{$n_{k+1}=n_k0$ and $q_{k+1}= r_0$}\enspace; or\\
\text{$n_{k+1}=n_k1$ and $q_{k+1}=r_1$}\enspace.
\end{cases}\\
 0 & \text{ otherwise\enspace.}
\end{cases}
\]
This way we obtain a probability measure $\Proba^{\sigma,\tau}$
on the set of infinite plays.

\paragraph*{Consistency and reachability.}
If a finite play $\pi$ is the prefix of another finite or infinite play $\pi'$
we say that $\pi'$ is a \emph{continuation} of $\pi$.
A finite $\pi$ play is \emph{consistent}
with a strategy $\sigma$ or, more simply, is a \emph{$\sigma$-play}
if there exists a strategy $\tau$ such that $\pi$ may occur 
in the non-homogenous Markov chain induced by $\sigma$ and $\tau$.
In this case, the number $N$ of split transitions which occurred 
in $\pi$ is exactly the depth of the node of $\last(\pi)$
and
\[
\Proba^{\sigma,\tau}(\{ \text{ continuations of $\pi$ }\}) = 2^{-N}\enspace.
\]
A vertex $w$ is \emph{$\sigma$-reachable} if there exists a finite $\sigma$-play from the initial vertex to $w$. 
 An infinite play is consistent with $\sigma$ if all its prefixes are.

\paragraph*{Bounded vs. unbounded plays.}
There are two kinds of infinite plays: \emph{bounded plays} are plays whose sequence of nodes is ultimately constant, or equivalently which ultimately use only local transitions while \emph{unbounded plays}
use infinitely many split transitions.

Bounded plays consistent  with $\sigma$ and $\tau$
are the atoms of $\Proba^{\sigma,\tau}$:
a play $\pi$ is bounded and consistent
with $\sigma$ and $\tau$ iff $\Proba^{\sigma,\tau}(\{\pi\})>0$.

In this paper we will focus on subclasses of automata
whose structural restrictions forbids the existence of bounded plays
(see the {\bf (NLL)} hypothesis below).

So in practice, every play $\pi=(n_0,q_0)(n_1,q_1)\ldots$ we consider
will visit a sequence of nodes $n_0,n_1,n_2,\ldots$
which enumerates all finite prefixes of an infinite branch $b\in\{0,1\}^\omega$ of the binary tree,
in a weakly increasing order: for every index $i$ either $n_{i+1}=n_{i}$
(the player controlling $(n_i,q_i)$ played a local transition) or $n_{i+1}=n_i d$ for some $d\in \{0,1\}$
(the player controlling $(n_i,q_i)$ played a split transition and the play followed direction $d$). 

\paragraph*{Winning strategies.} Whether Eve wins the game
 is defined as follows.
The \emph{limsup} of an infinite play $(n_0,q_0)(n_1,q_1)\ldots$
is $\limsup_i q_i$ i.e. the largest automaton state visited infinitely often.
An infinite play $\pi'$ is a \emph{positive continuation} of $\pi$
if all states of $\pi'$ visited after $\pi$ belongs to $\Qzero$.

Eve wins with $\sigma$ against $\tau$ if the three following conditions are satisfied.
\begin{itemize}
\item {\bf Sure winning.} Every play consistent with $\sigma$ and $\tau$ has limsup in $\Qall$.
\item {\bf Almost-sure winning.} Almost-every play consistent with $\sigma$ and $\tau$ has limsup in $\Qone$.
\item{\bf Positive winning.}
For every finite play 
$\pi$
consistent with $\sigma$ and $\tau$
whose last state belongs to $\Qzero$,
the set of positive continuations of $\pi$
 has nonzero probability.
\end{itemize}

We say that \emph{Eve wins} the acceptance game
if she has a \emph{winning strategy} i.e. a strategy
which wins 
the acceptance game against any strategy of Adam.

%
%
%
%

%

\paragraph*{B\"uchi conditions.}
A B\"uchi condition is a set of states $R\subseteq Q$ which is upper-closed with respect to $\leq$\enspace. Then a play has limsup in $R$ iff it visits $R$ infinitely often.

\paragraph*{Language of an automaton.}
\begin{definition}[Acception and language]
A binary tree is \emph{accepted} by the automaton if Eve
has a winning strategy in the acceptance game.
The language of the automaton is the set of its accepted trees.
\end{definition}

We are interested in the following decision problem:

\medskip 
{\bf Emptiness problem: }
Given an automaton, decide whether its language is empty or not.
\medskip 

The use of game semantics makes the following closure properties trivial.
\begin{lemma}[Closure properties]\label{lem:closure}
The class of languages recognized by alternating nonzero automata
is closed under union and intersection.
\end{lemma}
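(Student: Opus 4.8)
The plan is to prove closure under union and intersection by a direct product-and-disjoint-union construction on automata, exploiting the game semantics. For union, given automata $\automata_1$ and $\automata_2$ with initial states $q_0^1,q_0^2$, I would build an automaton whose state space is the disjoint union of the two state spaces plus a fresh initial state $q_0$ controlled by Eve. From $q_0$, on every letter $a$, add the two local transitions $q_0\to_a q_0^1$ and $q_0\to_a q_0^2$; the remaining transitions, the partition into Eve/Adam states, and the three acceptance sets $\Qall,\Qone,\Qzero$ are inherited from the two components (taking unions of the respective sets). For the total order, place all of $\automata_1$'s states below all of $\automata_2$'s states, with $q_0$ at the bottom; since $q_0$ is visited at most once, it is irrelevant to the limsup, and since every infinite play eventually stays inside one component, its limsup lies in that component and is evaluated exactly as in the original automaton. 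The key observation is that Eve's first move is precisely a choice of which component to play in, after which the acceptance game coincides with the acceptance game of the chosen component on the same tree; hence Eve wins the new game iff she wins in $\automata_1$ or in $\automata_2$, i.e. the language is $\lang(\automata_1)\cup\lang(\automata_2)$.

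For intersection the construction is dual: take a fresh initial state $q_0$ controlled by \emph{Adam}, with local transitions $q_0\to_a q_0^1$ and $q_0\to_a q_0^2$ on every letter $a$, and inherit everything else as before. Now Adam chooses which component the play proceeds in, and for Eve to win she must have a winning strategy whichever component Adam selects; thus Eve wins iff the tree is accepted by both $\automata_1$ and $\automata_2$, giving $\lang(\automata_1)\cap\lang(\automata_2)$. In both constructions one should check the standing hypothesis \textbf{(HC)}: completeness of the product automaton is immediate since from $q_0$ there are (two) transitions on every letter and all other states retain all their original outgoing transitions.

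The routine part is verifying that the three winning conditions transfer correctly. Since $q_0$ occurs only as the very first vertex of any play, it never belongs to the limsup of an infinite play, so \textbf{sure} and \textbf{almost-sure winning} reduce to the corresponding conditions inside the chosen component. For \textbf{positive winning}, I would ensure $q_0\notin\Qzero$ (it need not be, as it is fresh), so the positive-continuation condition is never triggered at the root and otherwise is inherited verbatim from the component where the play lives. The disjointness of the state spaces guarantees no spurious interaction between the two components: once a play has entered component $i$ it can never reach a state of component $3-i$. A formal argument would set up, for each strategy $\sigma$ of Eve in the product automaton, the induced strategy in the component selected by $\sigma$'s (or Adam's) first move, and conversely; the probability measures $\Proba^{\sigma,\tau}$ on plays then match up after the first step, from which all three conditions follow.

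I do not expect a serious obstacle here: the only mild subtlety is the treatment of the total order on states and the fresh initial state relative to the limsup-based acceptance, but since $q_0$ is transient and the two blocks of states are ordered consecutively, the limsup of any infinite play is unaffected by the merge. One should also note that both constructions are polynomial (indeed linear) in the sizes of the input automata, and that they preserve the structural subclasses considered later in the paper (non-determinism, and \what{}) since the fresh state adds no new choices beyond the single root branching, which is either an Eve choice between two options or an Adam choice between two options; in the latter case it contributes at most one extra non-canonical choice to any play, consistent with the bounded-choice requirement once the bound is incremented by one.
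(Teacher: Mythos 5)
Your construction is exactly the paper's: a disjoint union of the two automata together with a fresh initial state not in $\Qzero$, with two transitions to the original initial states, controlled by Eve for union and by Adam for intersection. The paper's proof is a one-sentence sketch of this same idea; your additional verifications (completeness, the ordering and transience of the fresh state, preservation of the structural subclasses) are correct elaborations of it.
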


\paragraph*{Normalization.}
We assume all automata to be normalized
in the sense where they satisfy:
\begin{itemize}
\item {\bf (N1)}
  every split transition whose source is in $\Qzero$ has at least one successor in $\Qzero$; and
  \item {\bf (N2)}
every local transition whose source is in $\Qzero$ has its target in $\Qzero$ as well.
\end{itemize}

We can normalize an arbitrary automaton
by removing 
all transitions violating  {\bf (N1)} and  {\bf (N2)}.
This will not change the language
because such transitions
 are never used by positively winning strategies of Eve.
 This normalization could lead to a violation of the completeness hypothesis,
 {\bf (HC)}. In this case we can also delete the corresponding states without modifying the language of the automaton.

If one would drop  {\bf (HC)}  then the game graph may have dead-ends and the rules of the game would have to be extended to handle this case, typically the player controlling the state in the dead-end loses the game. This extension does not bring any extra expressiveness to our model of automaton, we can always make an automaton complete by adding local transitions leading to losing absorbing states.

%
Moreover,
we assume:
\begin{itemize}
\item {\bf (N3)}
$\Qone \subseteq \Qall\enspace.$
\end{itemize}
This is w.l.o.g. since replacing $\Qone$ 
with $\Qone \cap \Qall$ does not modify the language of the automaton.

%




\section{An example: the language of PUCE trees\label{sec:example}}
A tree $\lab$ on the alphabet $\set{a,b}$ is  \emph{positively ultimately constant everywhere}
(\emph{PUCE} for short) if
for
every node $n$,
\begin{enumerate}
\item[i)] the set of branches visiting $n$ and with finitely many $a$-nodes has $>0$ probability; and
\item[ii)] the set of branches visiting $n$ and with finitely many $b$-nodes has $>0$ probability.
\end{enumerate}

\paragraph*{No regular tree is PUCE.}
There are two cases. If the regular tree has a node $n$ which is the root of a subtree labelled with only $a$ or $b$
then clearly the tree is not PUCE.
Otherwise, by a standard pumping argument,
every node labelled $a$ (resp. $b$) has a descendant labelled $b$ (resp. $a$)
at some depth $\leq |S|$, where $S$ is the set of states of the regular tree.
But in this second case from every node $n$ there is probability at least $\frac{1}{2^{|S|}}$
to reach a descendant with a different label, thus almost-every branch of the regular tree has infinitely many $a$ and $b$,
and the tree is not PUCE either.

\paragraph*{There exists a PUCE tree.}
However it is possible to build a non-regular tree $t$ whose every node satisfies both $i)$ and $ii)$. For that, we combine together two partial non-regular trees.
Let $H\subseteq \{0,1\}^*$ be a subset of nodes such that 
a) the set of branches which visit no node in $H$ has probability $\frac{1}{2}$,
b) no node of $H$ is a strict ancestor of another node in $H$ ($H$ is a cut), 
c) every node in $\{0,1\}^*$ is either a descendant or an ancestor of a node in $H$. For example we can choose 
$
H= \{00, 100, 0100, 11000, 011000, 1010000, 11100000, \\010100000,011100000,\ldots \}
$.

To obtain $t$, we combine two partial trees $t_a$ and $t_b$ whose domain
is $\{0,1\}^* \setminus H$ and $t_a$ is fully labeled with $a$ while $t_b$ is fully labelled with $b$.
Since $H$ is a cut, the nodes in $H$ are exactly the leaves of $t_a$ and $t_b$.
To obtain $t$, we plug a copy of $t_b$ on every leaf of $t_a$ and a copy of $t_a$ on every leaf of $t_b$.
Then from every node, according to c) there is non-zero probability to enter either $t_a$ or $t_b$
and according to a) there is non-zero probability to stay in there forever.

\paragraph*{An automaton recognizing PUCE trees.}
We can design one automaton for each of the two conditions
and combine them together with an extra state controlled by Adam
(cf proof of Lemma~\ref{lem:closure}).

We provide an alternating nonzero automaton checking condition ii), the automaton for condition i) is symmetric.
The state space is:
\[
Q = \set{s <  w < g  < \sharp}\enspace.
\]

Intuitively, Adam uses states $s$ to search for a node $n$ from which condition i) does not hold.
Once on node $n$, Adam switches to state $w$ and challenges Eve to find a path to an $a$-node $n'$ which is the root of an $a$-labelled subtree $T_n$ of $>0$ probability. For that Eve navigates the tree in state $w$ to node $n'$, switches to state $g$ on node $n'$, stays in $g$ as long as the play stays in $T_n$ and switches definitively to  $\sharp$
whenever leaving $T_n$. 

Formally, the only state controlled by Adam is $s$, i.e. $\QA=\{s\}$, from which Adam can choose, independently of the current letter, between two split transitions $s \to (s,\sharp)$ and $s\to (\sharp,s)$ and a local transition $s \to w$.
The state $\sharp$ is absorbing.
From state $w$, Eve can guess the path to $n'$ using the split transitions:
\[
w \to (\sharp, w) \quad w \to (w,\sharp)\enspace.
\]
Once $n'$ is reached Eve can switch to state $g$ with a local transition $w \to g$ and, whenever the current node is an $a$-node,
she can choose among  three split transitions:
\[
g \to_a (g,g) \quad
g \to_a (g,\sharp) \quad
g \to_a (\sharp,g) \enspace.
\]
The acceptance conditions are:
\begin{align*}
\Qall=\Qone=Q \setminus \{w\}
\quad
\quad
\Qzero= \set{ g }\enspace,
\end{align*}
so that from $w$ Eve is forced to eventually switch to $g$ (otherwise $\limsup=w\not\in\Qall$) and the $a$-subtree labelled by $g$ must have positive probability for Eve to win.
Adam may never exit the pathfinding state $s$, in which case Eve wins.

\section{Non-deterministic nonzero automata\label{sec:nondet}}
Non-deterministic \emph{zero} automata were introduced in~\cite{DBLP:conf/icalp/Bojanczyk16}, followed by a variant
of equivalent expressiveness,
non-deterministic \emph{nonzero} 
automata~\cite[Lemma 5]{DBLP:conf/icalp/BojanczykGK17}.
In those automata,  Adam is a dummy player,
i.e. 
$\QA=\emptyset$
and
moreover all transitions are split-transitions.

\begin{theorem}\label{theo:complexitiyemptiness}
The emptiness problem for non-deterministic nonzero automata is in \npconp.
If $\Qall$ is a B\"uchi condition then emptiness can be decided in \ptime.
\end{theorem}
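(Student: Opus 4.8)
The plan is to reduce emptiness to a purely finite problem on the state space and then recognise it as a parity game, which yields the \npconp{} bound and collapses to a \ptime{} one when $\Qall$ is B\"uchi.

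\emph{Step 1: positional reformulation.} Since $\QA=\emptyset$ and every transition is a split transition, each play follows a single branch, so a strategy of Eve on an input tree $\lab$ is --- together with $\lab$ itself, which is essentially irrelevant thanks to completeness \textbf{(HC)} --- just a labelling $\rho:\{0,1\}^*\to Q$ with $\rho(\epsilon)=q_0$ and, at every node $n$, a transition $\rho(n)\to_{a}(\rho(n0),\rho(n1))$. I would first prove that the language is nonempty iff there is such a labelling obtained by unfolding from $q_0$ a \emph{finite} choice function: a set $W\subseteq Q$ with $q_0\in W$ and, for each $q\in W$, one transition $\delta(q)=\bigl(q\to_{a}(q^0,q^1)\bigr)$ with $q^0,q^1\in W$. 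A positionality statement of this type for (non)zero automata is already of the kind available in~\cite{DBLP:conf/icalp/Bojanczyk16,DBLP:journals/corr/BojanczykGK17}; one can either invoke it or re-derive it, by taking an accepted tree, extracting a regular one by the usual pumping argument, and collapsing its run to a memoryless one while checking that the sure, almost-sure and positive conditions survive.

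\emph{Step 2: finite characterisation and the co-\textsc{NP} bound.} Fix $(W,\delta)$, let $G_\delta$ be the finite graph on $W$ with edges $q\to q^0$ and $q\to q^1$, and let $M_\delta$ be the Markov chain on $W$ taking each of these two edges with probability $\frac{1}{2}$. Unfolding $\delta$ from $q_0$ yields a labelled tree on which: the limsup of any branch is the maximum (for $\leq$) of some cycle of $G_\delta$ reachable from $q_0$, and every such maximum is realised by some branch; a $\mu$-random branch almost surely has limsup the maximum of some bottom SCC of $M_\delta$ reachable from $q_0$; and from a node labelled $q\in\Qzero$ the event ``stay in $\Qzero$ forever'' has positive probability iff $q$ reaches, along a path lying in $\Qzero$, a bottom SCC of $M_\delta$ contained in $\Qzero$ --- here \textbf{(N1)} and \textbf{(N2)} are used to ensure every $\Qzero$-state has at least one $\Qzero$-successor. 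Hence $(W,\delta)$ witnesses nonemptiness iff every reachable cycle of $G_\delta$ has its maximum in $\Qall$, every reachable bottom SCC of $M_\delta$ has its maximum in $\Qone$, and every reachable $q\in W\cap\Qzero$ reaches a bottom SCC of $M_\delta$ included in $\Qzero$ by a path inside $\Qzero$. Each of these three conditions is decidable in polynomial time for a given $(W,\delta)$, so nonemptiness admits a polynomial-size certificate, i.e.\ emptiness is in co-\textsc{NP}.

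\emph{Step 3: the \textsc{NP} and \ptime{} bounds.} The almost-sure and positive conditions are Markov-chain conditions that can be guaranteed by a polynomial least-fixpoint computation of the largest $\delta$-closed region $W^\star$ in which they are satisfiable (an \textsc{mdp}-style attractor/safety computation); on $W^\star$ only the sure condition remains, and it holds for some $\delta$ iff Eve wins the natural ``build-the-run'' parity game on $W^\star$ --- Eve picks $\delta(q)$, the adversary picks a direction, and one gives state $q$ a priority that is increasing in the order $\leq$ and even exactly when $q\in\Qall$, so that the largest priority seen infinitely often along a branch is even iff the limsup state of that branch lies in $\Qall$. Consequently emptiness is equivalent to the adversary winning that parity game, which by positional determinacy of parity games again admits a polynomial-size certificate; hence emptiness is in \textsc{NP}, and therefore in \npconp. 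When $\Qall$ is a B\"uchi (upper-closed) condition the priority assignment collapses: the sure condition becomes ``every reachable cycle of $G_\delta$ meets $\Qall$'', i.e.\ a B\"uchi objective, the build-the-run game is then a B\"uchi game, solvable in \ptime, and composed with the polynomial fixpoint above and the polynomial checks of Step 2 this yields the \ptime{} bound.

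\emph{Main obstacle.} The delicate points are Step 1 --- that a \emph{single} positional $\delta$ can meet all three conditions at once, whereas combined $\omega$-regular objectives are in general not positional in games --- and the claim in Step 3 that the almost-sure and positive requirements decouple from the sure one via a polynomial fixpoint without destroying solvability of the residual parity game. Making this decoupling precise, and verifying that the positive condition interacts correctly with the normalisation hypotheses, is where the bulk of the argument lies.
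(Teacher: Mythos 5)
Your reduction collapses at Step~1: it is not true that non-emptiness is witnessed by a single memoryless choice function $\delta:W\to W\times W$ whose unfolding satisfies all three conditions simultaneously, and this is precisely why the paper does not argue this way. Concretely, take a one-letter alphabet, $Q=\{q_1<q_2<q_3\}$, $\Qall=\{q_2,q_3\}$ (upper-closed, hence B\"uchi), $\Qone=\{q_3\}$, $\Qzero=\emptyset$, initial state $q_1$, and transitions $q_1\to(q_2,q_2)$, $q_1\to(q_3,q_1)$, $q_2\to(q_1,q_1)$, $q_3\to(q_1,q_1)$. The first choice at $q_1$ makes every branch have limsup $q_2\notin\Qone$, so the almost-sure condition fails; the second leaves the all-right branch at $q_1$ forever, so the sure condition fails; hence no positional $\delta$ is accepting. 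Yet the run which alternates the two transitions on successive visits to $q_1$ along each branch is accepting: every branch is forced through $q_2$ infinitely often, and almost every branch takes the left successor of $q_1\to(q_3,q_1)$ infinitely often, so its limsup is $q_3$. The language is therefore non-empty while your certificate does not exist, which breaks both the co-{\sc np} bound of Step~2 and the claimed equivalence with the parity/B\"uchi game in Step~3. The ``collapse to a memoryless run by pumping'' you propose cannot be repaired, because the conjunction of the sure and almost-sure conditions is genuinely non-positional here.

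The correct witness --- and the heart of the paper's proof of the B\"uchi case --- is a set $W$ together with \emph{two} positional policies on $W$: an almost-sure (and positive) policy $\sigma_1$, and a policy $\sigma_2$ winning the B\"uchi game in which the adversary chooses directions (Lemma~\ref{caracbuc}). Necessity is obtained by extracting these two policies separately from an arbitrary accepting run; sufficiency is the non-trivial part your proposal is missing: $\sigma_1$ and $\sigma_2$ are interleaved into a non-positional strategy that plays $\sigma_1$ as long as the play is ``perfect'', switches to $\sigma_2$ until $\Qall$ is revisited, and switches back, the point being that perfect plays occur with probability at least $2^{-|Q|-1}$ (Lemma~\ref{lem:perfect}), so that almost surely only finitely many switches occur. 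Your Step~3 decoupling (first secure the almost-sure and positive conditions on a region, then solve the residual game there) is in the right spirit, but it has to be an alternating greatest fixpoint of the two region computations (the paper's $W\mapsto Y(X(W))$), and its correctness rests entirely on the two-policy characterization above, not on positionality of a single strategy. Note finally that the \npconp{} bound for the general case is not reproved in the paper at all: it is imported from~\cite{DBLP:journals/corr/BojanczykGK17}.
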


The first statement is established in~\cite[Theorem 3]{DBLP:journals/corr/BojanczykGK17}.
The second statement is proved in the appendix.
The proof idea is as follows.
Assume the alphabet to be a singleton,
which is w.l.o.g. for non-deterministic automata.
The existence of a winning strategy for Eve
can be witnessed by a subset $W\subseteq Q$
which contains the initial state and 
two positional winning strategies $\sigma_1,\sigma_2: W \to W\times W$.
Strategy $\sigma_1$ should be almost-surely and positively winning
while strategy $\sigma_2$ should be surely winning. These two strategies  can be combined into a (non-positional) strategy for Eve which satisfies the three objectives, thus witnesses non-emptiness of the automaton.

\section{Deciding emptiness of automata with \what\label{sec:emptiness}}

In this section, we introduce the class of automata with \emph{\what},
and show that emptiness of these automata is decidable.

For that we rely on a characterization of positional strategies of Eve
which satisfy the surely and almost-surely winning conditions
(Lemma~\ref{caracsure},
Lemma~\ref{defindex})
and the positively winning condition (Lemma~\ref{lem:caracnonzero}).
Then we represent the positional strategies of Eve as labelled trees,
called \emph{strategic trees} (Definition~\ref{def:st}).
Finally we show that the language 
 of  strategic trees 
whose corresponding positional strategy is winning
can be recognized by a non-deterministic nonzero automaton (Theorem~\ref{theo:recogstrat}).
 
\subsection{Automata with \what}

In the rest of the paper, we focus on the class of automata with \what. Our motivation is that these automata capture the logic we are interested in and their acceptance games have good properties.
In particular the existence of positional winning strategies for Eve is one of the key properties used to decide emptiness.

To define the class of automata with \limch\ for Adam,
we rely on the transition graph of the automaton.
\begin{definition}[Equivalent and transient states]
The transitions of the automaton
define a directed graph called the \emph{transition graph} and denoted
$G_\to$. The vertices of $G_\to$ are 
$\Q$ and the edges are labelled with $\Sigma$, those are 
 all triplets $(q,a,r)$ such that
$q\to_a r$ is a local transition 
or such that
 $q\to_a(r,q')$ 
or
$q\to_a(q',r)$
is a split transition for some state $q'$.

Two states $q,r$ are \emph{equivalent}, denoted $q\equiv r$,
if they are in the same connected component
of $G_\to$.

A state is \emph{transient} if it does not belong to any connected component
of $G_\to$, or equivalently if there is no cycle on this state in $G_\to$.
\end{definition}

\begin{definition}
An automaton has \what\ if for every state $q$ controlled by Adam,
\begin{itemize}
\item
all transitions with source $q$ are local transitions; and
\item
for every letter $a$,
at most one of the (local) transitions $q\to_a q'$
satisfies $q \equiv q'$.
Such a transition is called a \emph{canonical} transition.
\end{itemize}
\end{definition}

In a \what\ automaton, the only freedom of choice of Adam, apart from playing canonical transitions, is deciding to go to a lower connected component of the transition graph. This non-canonical decision can be done only finitely many times,
hence the name \emph{limited choice}.

In the classical (non-probabilistic) theory of alternating automata,
similar notions of limited alternation have already been considered, for example \emph{hesitant alternating automata}~\cite{ltl}.

\begin{definition}[Canonical plays and transient vertices]
A \emph{canonical play} is a play in which Adam only plays canonical transitions.
A vertex $(n,q)$ of an acceptance game is \emph{transient} if it has no immediate successor $(n',q')$ (by a local or a split transition) such that $q \equiv q'$.
\end{definition}

In the acceptance game of an automaton with \what, every infinite play visit finitely many transient vertices and has a canonical suffix.

\paragraph*{The no local loop assumption.}
We assume that every automata with \what\ also satisfies:
\begin{itemize}
 \item {\bf (NLL)}
    the automaton has {\bf no local loop}: there is no letter $a$ and sequence of local transitions $q_0 \to_a q_1 \to_a \cdots \to_a q_i$
    such that $q_0=q_i$.
\end{itemize}
Under the hypothesis (NLL),
for every infinite play $\pi$ there is a unique branch of the binary tree
$b\in\{0,1\}^\omega$
 whose every prefix is visited 
by $\pi$. We say that $\pi$ \emph{projects} to $b$.

Assuming (NLL) does not reduce expressiveness.
\begin{lemma}\label{NLL}
Given an automaton $\automata$ with \what\ and set of 
states $Q$
one can effectively construct
another automaton $\automata'$ with \what\
satisfying {\bf (NLL)} and recognizing the same language.
\end{lemma}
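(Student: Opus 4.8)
The plan is to contract the \emph{local segments} of plays. Since a block of consecutive local transitions in a play stays on one node, it reads a single letter $a$, so it is a path in the subgraph $L_a$ of local $a$-transitions; the automaton satisfies (NLL) precisely when each $L_a$ is acyclic, and then every local segment has length $<|Q|$ and is followed by a split transition. Two phenomena must therefore be removed: finite local segments that revisit a state, and infinite local segments. The first is essentially harmless — if Eve wins, she wins with a strategy whose consistent plays never return to a game vertex inside a local segment (shortcut the detour), so these segments may be taken loop-free — and the delicate point is the second, since an infinite local segment is exactly a bounded play, i.e. a positive-probability atom of $\Proba^{\sigma,\tau}$, and the transformation must reproduce whether Eve wins it.

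For that, note first that by the normalization hypotheses (N1)--(N2) a local segment that enters $\Qzero$ stays in $\Qzero$, so a bounded play never threatens the positive-winning condition; combined with (N3) this means a bounded play is compatible with a win of Eve if and only if its limsup lies in $\Qone$. Hence, at a fixed $a$-node, the residual behaviour ``keep playing local $a$-transitions forever'' is governed by a finite parity-type game on $L_a$ with winning condition $\limsup\in\Qone$, which is positionally determined; I would solve it once and for all for every letter $a$ and every strongly connected component $C$ of $L_a$, recording Eve's winning region there.

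The automaton $\automata'$ is built over state set $Q\times 2^Q$, the second component $X$ being the set of states visited so far in the current local segment, reset to a singleton by every split transition and enlarged by every local transition to a state not in $X$; split transitions and local transitions $q\to_a q'$ with $q'\notin X$ are lifted verbatim (with the reset on splits). A \emph{loop-closing} local transition $q\to_a q'$ with $q'\in X$ is deleted and replaced by a small gadget that plays out, in accelerated loop-free form, the residual ``stay in the component $C$ of $q'$ forever'' sub-game above, finally reaching a fresh absorbing state placed inside $\Qall\cap\Qone$ or outside $\Qall$ according to the winner recorded for $C$ at $a$, with ownership inside the gadget assigned so as to mirror who can enforce which limsup outcome; (HC) is then repaired as usual with losing sinks. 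One checks that along a local segment $X$ strictly grows and $|X|\le|Q|$, so $L'_a$ is acyclic and (NLL) holds; that every Adam state still has only local transitions, at most one of which preserves $\equiv$ in the new transition graph (the other lifts go to strictly lower components, as do the gadget edges of an Adam state), so $\automata'$ has \what, noting that this possible shrinking of the set of canonical transitions changes no rule of the acceptance game; and finally that the evident correspondence between plays of $\automata'$ (with gadget-terminations expanded back into cycling behaviour) and plays of $\automata$, together with a translation of strategies in both directions, preserves the sure, almost-sure and positive winning conditions segment by segment, so $\automata'$ and $\automata$ have the same language.

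The main obstacle is the gadget for infinite local segments. Because a local cycle is in general shared between both players, no single player can unilaterally commit to looping, so the residual sub-game cannot simply be handed to one player; it must be unfolded carefully enough that the set of outcomes the two players can enforce is exactly the one they could enforce by cycling in $\automata$, while staying finite and loop-free. Proving that the two-way strategy translation respects all three acceptance conditions around these atoms, and checking that the $Q\times 2^Q$ product leaves the equivalence relation $\equiv$ (and hence the \what\ structure) intact, is where the real work lies.
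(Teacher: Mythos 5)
Your plan has the same overall shape as the paper's construction (make local segments acyclic, and turn the ``loop forever'' atoms into split-looping sinks judged essentially by whether their limsup lies in $\Qone$), and your observations that bounded plays are positive-probability atoms and that (N1)--(N3) tame the $\Qzero$ condition for them are correct. But the step you dismiss as ``essentially harmless'' is where the construction breaks. It is \emph{not} true that finite local detours can be shortcut: a detour that revisits a state can be the only way to raise the limsup of an \emph{unbounded} play. Concretely, take a single-letter automaton with states $\top<q_1<q_2$, local transitions $q_1\to q_2$ and $q_2\to q_1$ (the only transition from $q_2$), split transitions $q_1\to(q_1,\top)$, $q_1\to(\top,q_1)$, $\top\to(\top,\top)$, initial state $q_1$, and $\Qall=\{\top,q_2\}$, $\Qone=\{\top\}$, $\Qzero=\emptyset$. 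Eve accepts every tree: she selects one branch and at each of its nodes plays $q_1\to q_2\to q_1$ before exiting by the split from $q_1$, so that the selected (measure-zero) branch has limsup $q_2\in\Qall$ while almost every play is absorbed in $\top\in\Qone$. Omitting the detours gives limsup $q_1\notin\Qall$ on the selected branch, and looping forever at a node gives an atom with limsup $q_2\notin\Qone$; both lose. In your $\automata'$ the exit split is only available at $(q_1,\{q_1\})$, i.e.\ before $q_2$ has been seen, and the return edge $q_2\to q_1$ is loop-closing, hence routed into a gadget whose sink is rejecting (Eve does not win the loop-forever game for $\Qone$); so $\automata'$ is empty while $\automata$ is universal. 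The paper's \emph{exit-profile} is exactly the missing ingredient: the player traversing a component announces a set of pairs (largest state seen during the possibly cyclic local segment, exit transition), so the maximal state of a detour is injected into the state of $\automata'$ without the cycle being replayed. The paper's own remark that Eve ``might need one extra bit of memory in order to implement local loops with priority in $\Qall\setminus\Qone$'' is precisely this phenomenon.

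The second gap is the one you yourself flag as ``where the real work lies'': for genuinely infinite local segments, precomputing a single winner per component for the objective $\limsup\in\Qone$ is not the right object. The decision to loop forever is not localized at one loop-closing edge; the same local interaction simultaneously offers looping outcomes and exit outcomes (with, for Adam, the option of defecting to a lower component at any of his states), and the relative value of these outcomes depends on the rest of the global game, so it cannot be resolved to a bit in advance. What must be transported into $\automata'$ is the whole menu of outcomes the controlling player can commit to: the paper has that player announce both a loop-profile (the achievable looping limsups) and an exit-profile, lets the opponent select one element, and only then realizes it as either the chosen exit transition or a $(q',*)$ sink with a split self-loop. Since your sketch neither contains this idea nor repairs the detour problem above, the proof as it stands does not establish the lemma.
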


The interest of the {\bf (NLL)} assumption is to make the acceptance game acyclic, which in turn guarantees positional determinacy for Eve, as shown in the next section.

The transformation performed in the proof of Lemma~\ref{NLL}
creates an exponential blowup of the state space of the automaton, which is bad for complexity.
We could do without this blowup by dropping the {\bf (NLL)} assumption, in which case Eve might need one extra bit of memory in order to implement local loops with priority in $\Qall\setminus\Qone$.

However, we prefer sticking to the {\bf (NLL)} assumption, which makes the alternating automata and their accepting games simpler and is anyway not restrictive when it comes to translating temporal logics into alternating automata: the natural translation produces automata with no local loop.

Another interest of the {\bf (NLL)} assumption is:
\begin{lemma}\label{probmu}
Assume the automaton has the (NLL) property.
Let $\mu$ be the uniform measure on the set of branches of the infinite binary tree, equipped with the usual Borel $\sigma$-field.
Let $t$ be an input tree, $\sigma$ and $\tau$ be two strategies 
in the corresponding acceptance game and $X$ be a measurable set of plays
consistent with $\sigma$ and $\tau$.
Let $Y\subseteq \{0,1\}^\omega$ be the set of infinite branches that $X$ projects to.
If $X$ is measurable then $Y$ is measurable and
\[
\Proba^{\sigma,\tau}(X) = \mu(Y) \enspace.
\]
\end{lemma}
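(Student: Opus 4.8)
The plan is to exploit the (NLL) assumption to set up a measure-preserving correspondence between the plays consistent with $\sigma$ and $\tau$ and the branches of the binary tree they project to. Under (NLL) there are no bounded plays, so every infinite play $\pi$ consistent with $\sigma,\tau$ visits an increasing sequence of nodes $n_0 \leq n_1 \leq \cdots$ (weakly increasing in the prefix order) that enumerates all prefixes of a unique branch $b(\pi)\in\{0,1\}^\omega$; this defines the projection map $\pi \mapsto b(\pi)$ whose image is $Y$. First I would show this projection is measurable and, more importantly, that when restricted to the set of $\sigma,\tau$-consistent plays it is essentially a bijection onto $\{0,1\}^\omega$: given strategies $\sigma,\tau$ and a branch $b$, the play consistent with $\sigma$ and $\tau$ that follows the directions prescribed by $b$ at each split transition is uniquely determined (the players' choices are forced by the strategies, and the random choices are forced by $b$). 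So the projection admits a measurable section $s_{\sigma,\tau}: \{0,1\}^\omega \to \pathes$, and on the $\sigma,\tau$-consistent plays the projection and this section are mutually inverse.

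Next I would match up the two probability spaces. The cylinder over a finite branch prefix $n\in\{0,1\}^*$ of depth $N$ has $\mu$-measure $2^{-N}$. Its preimage under the projection, intersected with the consistent plays, is exactly the set of continuations of the unique finite $\sigma,\tau$-consistent play $\pi_n$ ending at node $n$ (this play exists and is unique, again because $\sigma$ and $\tau$ resolve all non-random choices). By the consistency-and-reachability discussion in the excerpt, the number of split transitions in $\pi_n$ is $N$ and $\Proba^{\sigma,\tau}(\{\text{continuations of }\pi_n\}) = 2^{-N} = \mu(\text{cylinder of }n)$. Hence the pushforward measure $\Proba^{\sigma,\tau} \circ (b(\cdot))^{-1}$ and $\mu$ agree on all cylinders, and since cylinders generate the Borel $\sigma$-field of $\{0,1\}^\omega$ and form a $\pi$-system, the two measures coincide on all Borel sets by the uniqueness part of Carathéodory / the $\pi$-$\lambda$ theorem. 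Measurability of $Y$ follows because $Y$ is the image of the measurable set $X$ under the measurable projection; more carefully, I would identify $Y$ with $s_{\sigma,\tau}^{-1}(X')$ where $X'$ is $X$ intersected with the consistent plays, which is measurable since $s_{\sigma,\tau}$ is measurable, and then $\mu(Y) = \Proba^{\sigma,\tau}(X') = \Proba^{\sigma,\tau}(X)$, the last equality because $X$ is assumed consistent with $\sigma$ and $\tau$.

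Putting these together: $\Proba^{\sigma,\tau}(X) = (\Proba^{\sigma,\tau}\circ (b(\cdot))^{-1})(Y) = \mu(Y)$, using that $X$ and $(b(\cdot))^{-1}(Y)$ differ only by plays of probability zero (in fact they coincide up to the null set of plays that fail to project to a branch, which under (NLL) is empty). I expect the main obstacle to be the bookkeeping around measurability: verifying that the projection map is measurable with respect to the cylinder $\sigma$-fields on both sides, and that the "section" $s_{\sigma,\tau}$ is well-defined and measurable, requires being careful that $\sigma$ and $\tau$ really do determine a unique play once the branch is fixed — this uses both (HC) (so the game never gets stuck) and the fact that the only stochastic choices are the $\tfrac12$-$\tfrac12$ split directions, which are precisely what the branch $b$ records. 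Everything else is a routine extension-of-measure argument via the $\pi$-system of cylinders.
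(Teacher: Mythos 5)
Your proof is correct, and it is the argument the authors evidently intend: the paper states Lemma~\ref{probmu} without proof, so there is no written proof to compare against, but your route --- observing that under (NLL) and (HC) the projection restricts to a bijection between $(\sigma,\tau)$-consistent plays and branches, matching the measure of the cylinder over a depth-$N$ node with the $2^{-N}$ probability of the continuations of the unique consistent play reaching it, and concluding by the $\pi$-$\lambda$ theorem --- is exactly the standard way to fill this gap and uses precisely the facts the paper sets up in its ``consistency and reachability'' paragraph. One small inaccuracy in your final parenthetical: $b(\cdot)^{-1}(Y)$ and $X$ differ not by plays that fail to project to a branch (under (NLL) there are none) but by \emph{inconsistent} plays projecting into $Y$; these form a $\Proba^{\sigma,\tau}$-null set, so the conclusion stands unchanged.
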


\subsection{Positional determinacy of the acceptance game}

A crucial property of automata with \what\ is that their acceptance games are positionally determined for Eve.

\begin{definition}[Positional strategies]
A strategy $\sigma$ of Eve in an acceptance game
is \emph{positional} if for every finite plays $\pi,\pi'$ whose last vertices are controlled by $Eve$
and coincide, i.e. $\last(\pi)=\last(\pi')\in \{0,1\}^*\times \QE$,
then  $\sigma(\pi)=\sigma(\pi')$.
\end{definition}

\begin{lemma}[Positional determinacy for Eve]\label{lem:det}
Every acceptance game of an automaton with \what\
 is positionally determined for Eve: if Eve wins then she has a positional winning strategy.
\end{lemma}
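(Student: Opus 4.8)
The plan is to prove positional determinacy for Eve via a transfinite/inductive argument over the ``rank'' of vertices induced by the connected-component structure of the transition graph, exploiting the \what{} assumption and \textbf{(NLL)}. First I would recall that, under \textbf{(NLL)}, every play follows a weakly increasing sequence of nodes converging to a single branch, and that every infinite play has a canonical suffix visiting only one connected component $C$ of $G_\to$; the non-canonical moves of Adam strictly decrease the component in a fixed partial order, so only finitely many of them occur. This gives a well-founded ranking: assign to each state $q$ its position in a topological order of the condensation DAG of $G_\to$, and argue by induction on this rank that from every vertex from which Eve wins she has a positional winning strategy, where the induction hypothesis is applied to vertices in strictly lower components (reached after a non-canonical Adam move or after an Eve split/local transition that leaves the component).

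The core of the argument handles a single ``slice'' of vertices $(n,q)$ with $q$ ranging over one fixed bottom component $C$ of the remaining graph. Within such a slice, all of Adam's moves are canonical local transitions staying inside $C$ (by \what{}), so Adam has essentially no choice except the single canonical transition per state and letter, or an exit to a strictly lower component, which the induction hypothesis already covers. Thus inside $C$ the game is, from Eve's perspective, a one-player stochastic game with an $\omega$-regular-flavoured objective built from the three conditions (limsup in $\Qall$ surely, in $\Qone$ almost-surely, and the $\Qzero$ positivity condition along the branch measure $\mu$, using Lemma~\ref{probmu}). I would invoke positional determinacy for one-player stochastic games with prefix-independent / Muller-like winning conditions — or reprove it directly — to extract a positional winning strategy for Eve on the slice, taking care that ``positional'' here means depending only on $(n,q)\in\{0,1\}^*\times\QE$, which is compatible with the tree structure because the node component of a vertex is itself determined by the branch prefix and the automaton is finite-state.

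The delicate point, and the one I expect to be the main obstacle, is gluing the per-component positional strategies into a single global positional strategy while preserving \emph{all three} winning conditions simultaneously — in particular the positive-winning condition, which is a local ``everywhere'' requirement: whenever a $\sigma$-play reaches a $\Qzero$-state, the set of positive continuations must have nonzero probability. One must check that a vertex from which Eve wins, once entered, is only ever followed (under the glued strategy) by vertices from which Eve still wins, so that the component-level guarantees compose; this uses the normalization hypotheses \textbf{(N1)}, \textbf{(N2)}, \textbf{(N3)} to ensure $\Qzero$ is closed under the relevant transitions and that $\Qone\subseteq\Qall$. A secondary subtlety is that the sure-winning condition quantifies over \emph{all} of Adam's strategies including non-measurable ones, so the reduction to branch measure (Lemma~\ref{probmu}) must only be used for the almost-sure and positive parts, while sure winning is handled combinatorially on the tree of plays. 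Once consistency of ``Eve-winning'' along plays is established, a straightforward union over components yields the desired positional winning strategy, completing the induction and hence the lemma.
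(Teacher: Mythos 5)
Your structural observations are sound (Adam makes only finitely many non-canonical moves, every infinite play is eventually confined to one connected component where it is canonical, the sure condition must be handled combinatorially while the other two go through Lemma~\ref{probmu}), but the central step of your plan is a genuine gap. You propose to extract a positional strategy inside a bottom component by invoking ``positional determinacy for one-player stochastic games with prefix-independent / Muller-like winning conditions''. No such off-the-shelf result applies here: the arena is the infinite binary tree, not a finite MDP, and --- more importantly --- the objective is not a linear-time condition on individual plays. The positive-winning requirement is a branching condition (for \emph{every} finite $\sigma$-play ending in $\Qzero$, the set of positive continuations must have nonzero probability), and the conjunction of this with the sure and almost-sure limsup conditions is exactly the non-standard objective whose positional determinacy the lemma asserts. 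Writing ``or reprove it directly'' defers precisely the content of the proof: nothing in your plan explains how to reconcile the different behaviours that a general winning strategy $\sigma$ may exhibit at two distinct plays ending in the same vertex $(n,q)$, which is the whole difficulty.

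The paper's proof supplies the missing mechanism explicitly and performs no induction over components; the component structure enters only through the fact that every infinite play has a canonical suffix. It fixes a total order $\preceq$ on finite plays in which shorter plays are smaller, assigns to every vertex $w$ the $\preceq$-minimal play $f(w)$ from the initial vertex admitting a \emph{canonical $\sigma$-extension} to $w$ (well-defined by acyclicity, which comes from \textbf{(NLL)}), lets $h(w)$ be that canonical extension, and sets $\sigma'(w)=\sigma(h(w))$. The key invariant $(\dagger)$ is that $f$ is non-increasing along stretches where Adam plays canonically, hence stabilizes on every infinite $\sigma'$-play; consequently every $\sigma'$-play has a suffix that is a suffix of a genuine $\sigma$-play, which yields the sure condition, and the almost-sure and positive conditions follow by transferring probabilities between $(\sigma',\tau)$-plays and corresponding $(\sigma,\tau')$-plays through the branch measure of Lemma~\ref{probmu}. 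If you kept your inductive skeleton you would still have to reproduce this canonical-representative construction inside each slice, so the induction buys nothing and the determinacy theorem you cite cannot replace it.
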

\begin{proof}[Sketch of proof]
Since the (NLL) hypothesis is assumed, the underlying acceptance game is acyclic. The construction of a positional winning strategy $\sigma'$ from a (non-positional) winning strategy $\sigma$
relies on the selection of a canonical way of reaching a $\sigma$-reachable vertex $w$ 
with a $\sigma$-play $\pi(w)$ and setting $\sigma'(w)=\sigma(\pi(w))$. 
\end{proof}

\subsection{On winning positional strategies of Eve}

In the next section we show how to use use automata-based techniques to decide the existence of a (positional) winning strategy for Eve. These techniques rely on characterizing whether a positional strategy of Eve is surely, almost-surely and positively winning.

\subsubsection{Surely and almost-surely winning conditions}

We characterize (almost-)surely winning strategies.

\begin{definition}[$q$-branches]
Let $q \in Q$ and $\sigma$ a strategy.
An infinite branch of the binary tree is a $q$-branch in $\sigma$ 
if at least one $\sigma$-play which projects to this branch
has limsup $q$.
\end{definition}

\begin{lemma}\label{caracsure}
Assume the automaton has \what.
Let $\sigma$ be a positional strategy for Eve.
Then $\sigma$ is surely winning iff 
for every $q \in (Q\setminus \Qall)$
there is no $q$-branch in $\sigma$.
Moreover $\sigma$ is almost-surely winning iff for every $q \in (Q\setminus \Qone)$
the set of $q$-branches in
$\sigma$ has measure $0$.
\end{lemma}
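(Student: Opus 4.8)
The plan is to prove both equivalences by relating plays to the branches they project to, using the (NLL) hypothesis (which guarantees each play has a well-defined projected branch) and Lemma~\ref{probmu} (which transports probabilities of sets of plays to the measure $\mu$ of the set of projected branches).

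First I would handle the sure winning characterization. For the "only if" direction, suppose $\sigma$ is surely winning but there is some $q \in Q \setminus \Qall$ and a $q$-branch $b$ in $\sigma$: by definition there is a $\sigma$-play $\pi$ projecting to $b$ with $\limsup \pi = q$. I would extend $\pi$ to an infinite play consistent with $\sigma$ and \emph{some} $\tau$ — here is where completeness (HC) and the fact that $\sigma$ already prescribes Eve's moves is used, Adam's moves along $\pi$ can be collected into a strategy $\tau$ — and this play has limsup $q \notin \Qall$, contradicting sure winning. For the "if" direction, suppose no $q \in Q\setminus\Qall$ has a $q$-branch; take any $\tau$ and any play $\pi$ consistent with $\sigma,\tau$; it projects to some branch $b$ and has some limsup $q'$; then $b$ is a $q'$-branch in $\sigma$ (witnessed by $\pi$ itself, since $\pi$ is in particular a $\sigma$-play), so by hypothesis $q' \in \Qall$, i.e.\ sure winning holds.

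Next, the almost-sure characterization. For the "only if" direction, assume $\sigma$ is almost-surely winning; fix $q \in Q\setminus\Qone$ and let $B_q$ be the set of $q$-branches in $\sigma$. I would like to exhibit a strategy $\tau$ of Adam such that the set of $\sigma,\tau$-plays with limsup $q$ projects onto (a set of the same measure as) $B_q$; then almost-sure winning forces $\mu(B_q)=0$ via Lemma~\ref{probmu}. The idea is to build $\tau$ that, whenever Adam has a choice, steers toward witnessing limsup $q$ — but this needs care because the witnessing $\sigma$-plays for different branches may require incompatible Adam choices. The clean way is to use \emph{positionality of $\sigma$} together with the limited-choice / canonical structure: along any branch, once we are in the connected component of $q$, the play is essentially canonical (Adam's only non-canonical moves go strictly downward in the component order and happen finitely often), so a positional $\sigma$ together with a positional choice of Adam's canonical moves already determines, for $\mu$-almost-every branch $b \in B_q$, a play projecting to $b$ with limsup $q$. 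Collecting these positional Adam choices into a single strategy $\tau$, the set $X$ of $\sigma,\tau$-plays with limsup $q$ projects to a set $Y \supseteq B_q$ up to measure zero, and almost-sure winning gives $\Proba^{\sigma,\tau}(X)=0$, hence $\mu(B_q) \le \mu(Y) = 0$ by Lemma~\ref{probmu}. For the "if" direction: fix any $\tau$; the set of $\sigma,\tau$-plays whose limsup lies outside $\Qone$ is a countable union, over $q \in Q\setminus\Qone$, of sets $X_q$ of plays with limsup $q$; each $X_q$ projects into $B_q$, which has measure $0$ by hypothesis, so by Lemma~\ref{probmu} $\Proba^{\sigma,\tau}(X_q)=0$, and a union bound over the finitely many such $q$ gives that almost-every $\sigma,\tau$-play has limsup in $\Qone$.

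The main obstacle I expect is the "only if" direction of the almost-sure part: turning the pointwise-in-$b$ existence of a witnessing $\sigma$-play with limsup $q$ into a \emph{single} Adam strategy $\tau$ that simultaneously witnesses limsup $q$ on $\mu$-almost-all of $B_q$. This is exactly where the \what\ hypothesis must be invoked — the fact that non-canonical Adam moves occur only finitely often and that, within a connected component, the combined positional behaviour of Eve (via positional $\sigma$) and of Adam (via a positional selection of canonical moves) is forced — so that a uniform $\tau$ exists and the measurability required to apply Lemma~\ref{probmu} holds. The other directions are routine once the play-to-branch correspondence and Lemma~\ref{probmu} are in hand.
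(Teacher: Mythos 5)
Your treatment of the sure-winning equivalence and of the sufficiency direction of the almost-sure equivalence coincides with the paper's own proof: both rest on the observation that every $\sigma$-play is consistent with $\sigma$ and some strategy $\tau$ of Adam, and on Lemma~\ref{probmu} applied to the finite union of the sets of $q$-branches over $q\in Q\setminus\Qone$. Those three directions are fine.

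The gap is in the necessity direction of the almost-sure part, exactly where you flagged the main obstacle. Your proposed resolution --- ``collecting these positional Adam choices into a single strategy $\tau$'' so that the set $X$ of $(\sigma,\tau)$-plays with limsup $q$ projects onto a set $Y\supseteq B_q$ up to measure zero --- does not go through. The witnessing plays $\pi_b$ for distinct branches $b\in B_q$ may require Adam to make \emph{different non-canonical} choices at the \emph{same} vertex with the \emph{same} history (for instance at the root, before any split transition has separated the branches); these choices are mutually exclusive, so no single strategy $\tau$, positional or not, realizes all of them, and $Y$ can miss a positive-measure portion of $B_q$. Hence the inequality $\mu(B_q)\le\mu(Y)$ that your argument needs is false in general: the \what\ hypothesis only bounds the number of non-canonical moves, it does not force Adam's behaviour before the canonical suffix begins. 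The missing idea, which is how the paper proceeds, is to argue by contraposition and decompose $B_q$ as the countable union of the sets $Z_w=\{b\in B_q\mid w_b=w\}$, where $w_b$ is the first vertex of the canonical suffix of $\pi_b$. If $\mu(B_q)>0$, some $Z_w$ has $\mu(Z_w)>0$; one then fixes a finite $\sigma$-play $\pi_w$ reaching $w$ and a strategy $\tau_w$ enforcing $\pi_w$ with positive probability and playing canonically afterwards. Positionality of $\sigma$ together with canonicity of $\tau_w$ forces the continuation along any $b\in Z_w$ to coincide with $\pi_b$ after $w$, so Lemma~\ref{probmu} gives $\Proba^{\sigma,\tau_w}(\limsup\notin\Qone)\geq\mu(Z_w)>0$. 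In short, it suffices (and is only possible) to realize a positive-measure \emph{subset} of $B_q$ with a single $\tau$, not almost all of it.
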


\begin{proof}
We denote $\mu$ the uniform probability measure on $\{0,1\}^\omega$.
For every state $q$, $Y_q$ denotes the set of $q$-branches in $\sigma$. 

We show the first statement about sure winning.
For every $\sigma$-play $\pi$ there exists a strategy $\tau$ of Adam
such that $\pi$ is consistent both with $\sigma$ and $\tau$.
Thus there is $q\in (Q\setminus \Qall)$ such that $Y_q \neq \emptyset$
iff there is a strategy $\tau$ of Adam and 
a play consistent with $\sigma$ and $\tau$ with limsup in $Q\setminus\Qall$,
iff $\sigma$ is \emph{not} accepting.

We show that
the condition $\forall q \in Q \setminus \Qone, \mu(Y_q) = 0$ is sufficient for $\sigma$ to be almost-surely winning.
Let $\tau$ be a strategy of Adam
and $Y'$ the set of branches of plays consistent with $\sigma$ and $\tau$ which have limsup in $Q \setminus \Qone$.
Then $Y' \subseteq \bigcup_{q\in Q \setminus \Qone} Y_q$.
According to Lemma~\ref{probmu}, 
$\Proba^{\sigma,\tau}(\limsup \not \in \Qone)
=
\mu(Y')\leq \mu\left(\bigcup_{q\in Q \setminus \Qone} Y_q\right) = 0$.
Thus $\sigma$ is almost-surely winning.

We show that the condition $\mu(Y_q)>0$ for some $q\in Q \setminus \Qone$ is sufficient for $\sigma$ \emph{not} to be almost-surely winning.
For every infinite branch $b\in Y_q$ choose one $\sigma$-play $\pi_b$ with $\limsup \in Q\setminus \Qone$.
Since the automaton has \what, a suffix of $\pi_b$ is canonical, let $w_b$ be the first vertex of this suffix.
For every $\sigma$-reachable vertex $w$ denote $Z_{w} =\{ b \in Y_q \mid w_b = w \}$. 
Since $Y_q$ is the countable union of the sets $(Z_w)_{w\text{ $\sigma$-reachable}}$
there is at least one $\sigma$-reachable vertex $w$
such that $\mu(Z_{w})>0$.
Let $\pi_w$ be a finite $\sigma$-play to $w$.
Let $\tau_w$ a strategy for Adam which enforces $\pi_w$ with positive probability
and plays canonically in every continuation of $\pi_w$ whenever possible.
We show that $\Proba^{\sigma,\tau_w}(\limsup \not \in \Qone)>0$.
Let $X_w$ be the set of continuations of $\pi_w$ consistent with $\sigma$ and
$\tau_w$ whose branch belongs to $Z_w$.
An easy induction shows that every play $\pi'\in X_w$ with branch $b$ coincide with $\pi_b$ after $w_b$ ($\sigma$ is positional and 
$\tau_w$ plays only canonical moves).
Thus every play in $X_w$ has $\limsup \in Q \setminus \Qone$.
Then
$\Proba^{\sigma,\tau_w}(\limsup \not \in \Qone)
\geq
\Proba^{\sigma,\tau_w}(X_w)
= 
\mu(Z_w) > 0$,
according to Lemma~\ref{probmu}.
\end{proof}

Whether a branch is a $q$-branch can be checked by computing a system of $\sigma$-indexes.
Intuitively, all $\sigma$-reachable vertices receives a finite index, such that
along a $\sigma$-play the index does not change
except when Adam performs a non-canonical move or when two plays merge on the same vertex, in which case the smallest index is kept. After a non-canonical move of Adam, a new play may start in which case it receives a fresh index not used yet in the current neither in the parent node. For this less than $2|Q|$ indices are required.
The important properties of $\sigma$-indexes are:
\begin{lemma}[Characterization of $q$-branches]\label{defindex}
Every positional strategy $\sigma$ of Eve can be
associated with a function
\[
\index_\sigma:\{0,1\}^*\times Q \to \left\{0,1,\ldots ,2|Q|, \infty\right\}^Q
\]
 with the following properties.


First, $\index_\sigma$ can be computed on-the-fly along a branch.
For every node $n$ denote $\sigma_n$ the restriction of $\sigma$ on $\{n\}\times Q$.
 Then 
 $\index_\sigma(\epsilon)$ only depends on $\sigma_\epsilon$.
And for every node $n$ and $d\in\{0,1\}$,
 $\index_\sigma(nd)$ only depends on 
 $\index_\sigma(n)$ and $\sigma_{nd}$.

Second, a vertex  $(n,q)$ is reachable from the initial vertex by a $\sigma$-play
  iff $\index_\sigma(n)(q)$ is finite.

Third, let $b\in\{0,1\}^\omega$ be an
infinite branch of the binary tree,
visiting successively the nodes
$n_0,n_1,n_2, \ldots$.
Denote $R^\infty(b)$
the set of pairs 
$(k,q)\in \{0,\ldots, 2|Q|\}\times Q$ 
such that:
\begin{itemize}
\item
$k \in \index_\sigma(n_i)(Q)$
for every $i\in \NN$ except finitely many;
\item
and $k= \index_\sigma(n_i)(q)$
for infinitely many $i\in \NN$.
\end{itemize}

Then for every state $q$, the branch $b$ is a $q$-branch if
and only if
there exists $k\in \{0,1,\ldots ,2|Q|\}$
such that
$q=\max \{ r \in Q \mid (k,r) \in R^\infty(b)\}$.
\end{lemma}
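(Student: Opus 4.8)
\textbf{Proof plan for Lemma~\ref{defindex}.}

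The plan is to construct $\index_\sigma$ explicitly as a bookkeeping device tracking ``threads'' of $\sigma$-plays along a branch, and then verify the three stated properties essentially by induction on the branch. First I would define, for each node $n$, an auxiliary equivalence relation on the $\sigma$-reachable vertices at $n$: two vertices $(n,q),(n,q')$ are in the same \emph{thread} if there is a finite $\sigma$-play reaching one of them and continuing (via canonical Adam moves and the forced moves dictated by $\sigma$ at Eve vertices) to the other, i.e.\ they lie on a common canonical $\sigma$-play segment inside node $n$. Because the automaton has \what\ and satisfies (NLL), inside a single node only local transitions are used, there are no local loops, and Adam's only canonical local transition from any state is unique; hence each such thread, restricted to one node, is essentially a finite path, and at most $|Q|$ threads can be ``active'' at a node. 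The index function then assigns to each thread a colour from $\{0,\dots,2|Q|\}$, and sets $\index_\sigma(n)(q)$ to be the colour of the thread containing $(n,q)$ if it is $\sigma$-reachable, and $\infty$ otherwise.

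Next I would specify the update rule witnessing the first (on-the-fly) property. At the root, $\index_\sigma(\epsilon)$ is computed purely from which vertices $(\epsilon,q)$ are reachable under $\sigma_\epsilon$ together with canonical Adam moves, grouping them into threads and assigning colours, say, greedily by the $\leq$-order on $Q$. For the step from $n$ to $nd$: a split transition taken at node $n$ sends a thread into direction $d$; several parent threads may be merged at $nd$ (when two $\sigma$-plays reach the same vertex), in which case the merged thread keeps the \emph{smallest} parent colour; and a non-canonical Adam move at $nd$ may spawn a genuinely new thread, which receives a \emph{fresh} colour not used by any thread currently at $nd$ nor inherited from $n$. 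Since at most $|Q|$ colours can be in use at the parent and at most $|Q|$ new ones are needed at the child, the range $\{0,\dots,2|Q|\}\cup\{\infty\}$ suffices; and by construction $\index_\sigma(nd)$ depends only on $\index_\sigma(n)$ and $\sigma_{nd}$ (and the letter $t(nd)$, which is part of the input and hence available). The second property, that $(n,q)$ is $\sigma$-reachable iff $\index_\sigma(n)(q)$ is finite, is then immediate from the definition together with a routine induction: a vertex at $nd$ is $\sigma$-reachable iff it lies on a thread that is either inherited (via a split from a reachable thread at $n$) or freshly spawned, which is exactly when it gets a finite colour.

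The third property is the substantive one, and I expect it to be the main obstacle. One direction: if $b$ is a $q$-branch, fix a $\sigma$-play $\pi$ projecting to $b$ with $\limsup\pi = q$. Since the automaton has \what, $\pi$ has a canonical suffix, and from that point on the thread of $\pi$ never merges with a thread of strictly smaller colour that was created earlier (merges keeping the minimum can only make the colour \emph{decrease}, and it can decrease only finitely often), so $\pi$'s thread stabilises to some colour $k$; then $k\in\index_\sigma(n_i)(Q)$ for all but finitely many $i$, and $k = \index_\sigma(n_i)(q_i)$ for the states $q_i$ visited by $\pi$ infinitely often, whose maximum is $\limsup\pi = q$; one must check no state $r > q$ shares colour $k$ infinitely often, which follows because any such $r$ would lie on the same eventually-canonical thread as $\pi$ and hence would be visited by a play cofinal with $\pi$, contradicting $\limsup\pi = q$ (here NLL and the thread-is-a-path structure inside a node are used to rule out $r$ hiding on the thread without being visited). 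Conversely, given $k$ with $q = \max\{r : (k,r)\in R^\infty(b)\}$, I would reconstruct a $\sigma$-play: follow, node by node, the thread coloured $k$ (well-defined cofinally by hypothesis), using at each node the canonical Adam transitions and the $\sigma$-prescribed Eve transitions that generate that thread, choosing at split points the direction dictated by $b$; this produces an infinite $\sigma$-play $\pi$ projecting to $b$, and its $\limsup$ is exactly the largest state appearing cofinally on the colour-$k$ thread, namely $q$. The delicate points to nail down are that a single colour is not re-used for two simultaneously-active unrelated threads (guaranteed by the freshness rule in the update), and that ``the thread coloured $k$'' is eventually unambiguous and genuinely carries an infinite $\sigma$-play — both of which reduce to the combinatorics of the finitely-many-merges / bounded-number-of-colours argument together with König's lemma applied inside the acyclic acceptance game.
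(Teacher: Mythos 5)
Your proposal is correct and follows essentially the same route as the paper: the paper defines $\index_\sigma$ by induction on the acyclic game graph, propagating indices forward along canonical $\sigma$-edges, taking the minimum at merges and allocating a fresh index unused at the current node and its father (whence the bound $2|Q|$) when a non-canonical move of Adam spawns a new thread; for the third property it likewise argues that the index along a play is eventually non-increasing after Adam's last non-canonical move and hence stabilises to some $k_\infty$, and for the converse it reconstructs by induction a canonical $\sigma$-play visiting exactly the index-$k$ vertices of each node. Your ``threads and colours'' are precisely the paper's indices, and the delicate points you flag (non-reuse of a live colour, uniqueness of the colour-$k$ thread cofinally) are handled, rather more tersely, in the same way there.
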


\subsubsection{Checking the positively winning condition}

In order to check with a non-deterministic automaton
wheth\-er a positional strategy is positively winning,
we rely on the notion of \emph{positive witnesses}.
The point of positive witnesses is to turn the verification of up to $|Q|$ 
  positively-winning conditions
- depending on the decisions of Adam,
there may be up to $|Q|$ different $\sigma$-reachable vertices on a given node  -
into a single one. This single condition can then be  checked by a non-deterministic nonzero automaton
equipped with a single positively-winning condition.

\paragraph*{Everywhere thick subtrees.}
We need the notion of everywhere thick subtrees.
We measure sets of infinite branches with the uniform probability measure $\mu$ on $\{0,1\}^\omega$.

\begin{definition}[Subtree]
A set of nodes $T\subseteq \{0,1\}^*$ is a \emph{subtree}
if it contains a node $r$, 
called the root of $T$, 
such that every node $n\in T$ is a descendant of $r$,
$T$ contains all nodes on the path from $r$ to $n$.
\end{definition}

\begin{definition}[Everywhere thick sets of nodes]
For every set $T\subseteq \{0,1\}^*$ of nodes
denote $\vec{T}$ the set of branches in $\{0,1\}^\omega$
whose every prefix belongs to $T$.
Then $T$
is \emph{everywhere thick} if
starting from every node $n \in T$
there is nonzero probability to stay in $T$,
i.e. if $\mu\left(\vec{T} \cap n\{0,1\}^\omega\right)>0$.
\end{definition}

Everywhere thick subtrees are almost everywhere.

\begin{lemma}
\label{lem:thick}
Let $P\subseteq \{0,1\}^\omega$ be a measurable set of infinite branches.
Assume $\mu(P)>0$.
Then 
there exists an everywhere thick subtree $T$, with root $\epsilon$ such that $\vec{T} \subseteq P$.
 \end{lemma}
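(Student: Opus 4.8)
The goal is to show that any measurable set of branches with positive measure contains an everywhere thick subtree rooted at $\epsilon$.

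\medskip

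The plan is to build $T$ by a downward pruning procedure starting from the full tree $\{0,1\}^*$, repeatedly removing nodes from which the conditional probability of staying in $P$ drops to zero, and showing the remaining set is everywhere thick, still contains $\epsilon$, and has $\vec T \subseteq P$ up to a null set. Concretely, for a node $n$ let $f(n) = \mu\left(P \cap n\{0,1\}^\omega\right) / \mu(n\{0,1\}^\omega) = 2^{|n|}\mu\left(P\cap n\{0,1\}^\omega\right)$ be the conditional measure of $P$ below $n$. Call $n$ \emph{heavy} if $f(n) > 0$ and \emph{light} otherwise. First I would observe that $\epsilon$ is heavy since $\mu(P) > 0$. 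Then I would take $T$ to be the set of nodes $n$ such that every prefix of $n$ (including $n$ itself) is heavy; equivalently, $T$ is the connected component of $\epsilon$ in the heavy nodes, viewed as a subtree. This $T$ is a subtree with root $\epsilon$ by construction (it is prefix-closed relative to $\epsilon$), and every node of $T$ is heavy.

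\medskip

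Next I would prove $T$ is everywhere thick, i.e. for every $n\in T$, $\mu\left(\vec T \cap n\{0,1\}^\omega\right) > 0$. The key point is a Lebesgue-density / martingale argument: the function $b \mapsto f(b{\restriction}_k)$ (the conditional measure along the first $k$ nodes of $b$) is a bounded martingale under $\mu$, hence converges $\mu$-almost surely, and by the Lebesgue density theorem its limit equals the indicator $\mathbf{1}_P(b)$ for $\mu$-almost every $b$. In particular, for $\mu$-almost every $b\in P$, all but finitely many prefixes of $b$ have $f > 0$, i.e. are heavy; and in fact, once $f(b{\restriction}_k)$ is bounded away from $0$ it stays heavy forever along $b$ (this is immediate: $f(m) > 0$ implies at least one son $md$ has $f(md) \geq f(m)/2 \cdot$ something — actually just note $f(m) = \tfrac12 f(m0) + \tfrac12 f(m1)$, so $f(m) > 0$ forces some son heavy, but to get \emph{every} prefix heavy we use the density limit being $1$ on $P$ a.e.). So let $P' = \{ b \in P \mid \text{every prefix of } b \text{ is heavy}\}$; by the above $\mu(P') = \mu(P) > 0$ and $P' \subseteq \vec T$. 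For an arbitrary $n \in T$, apply the same reasoning to the conditional measure $\mu(\cdot \mid n\{0,1\}^\omega)$ restricted to $P \cap n\{0,1\}^\omega$, which has positive conditional measure since $n$ is heavy; the density argument then gives a positive-measure set of branches through $n$ all of whose prefixes below $n$ are heavy, and since $n\in T$ all prefixes of $n$ are heavy too, so these branches lie in $\vec T$. Hence $\mu\left(\vec T \cap n\{0,1\}^\omega\right) > 0$, proving $T$ is everywhere thick.

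\medskip

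Finally, the claim $\vec T \subseteq P$ needs a small adjustment: as defined, $\vec T$ is the set of branches all of whose prefixes are heavy, and by the density theorem this differs from $P$ only on a $\mu$-null set, so we do not literally get $\vec T \subseteq P$. To fix this I would instead take $P$ and first replace it by the set $\widetilde P$ of its density points lying in $P$, which has the same measure, and note that the statement only ever gets applied with $P$ itself a set we are free to shrink to a positive-measure subset; alternatively, restrict $T$ further to nodes $n$ with $f(n) \geq 1 - \varepsilon$ for a fixed small $\varepsilon$ chosen so that this restricted set is still nonempty and everywhere thick (possible since density points of $P$ have $f \to 1$). The branches of such a restricted subtree have conditional density of $P$ bounded below by $1-\varepsilon$ along every prefix, and by the martingale limit they all lie in (the density-point version of) $P$. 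I expect the main obstacle to be exactly this last bookkeeping step — making the passage from "almost every prefix heavy" to "$\vec T \subseteq P$ on the nose" clean — together with getting the everywhere-thick condition to hold at \emph{every} node of $T$ and not merely almost everywhere, both of which are handled by invoking the Lebesgue density theorem (equivalently Lévy's upward martingale convergence theorem) at the level of conditional measures below each node.
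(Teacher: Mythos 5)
Your definition of $T$ via the heavy nodes and your verification of everywhere-thickness essentially match the paper's construction (the paper's thickness argument is more elementary: for $n\in T$, a branch of $P_n$ outside $\vec{T}$ must enter one of the countably many light nodes $n'$, each of which carries the null set $P_{n'}$, so $\mu(\vec{T}\cap n\{0,1\}^\omega)=\mu(P_n)>0$; no martingale or density theorem is needed). The genuine gap is the one you flag at the end, and neither of your proposed repairs closes it. Consider $P=\{0,1\}^\omega\setminus D$ with $D$ a countable dense set of branches: then $f(n)=1$ for every node, so every node is heavy, the thresholded tree $\{n\mid f(n)\geq 1-\varepsilon\}$ is still all of $\{0,1\}^*$, and $\vec{T}=\{0,1\}^\omega\not\subseteq P$; moreover every branch is a density point of $P$, so $\widetilde P=P$ and passing to density points changes nothing. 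L\'evy's theorem only gives $f\to\mathbf{1}_P(b)$ along the prefixes of $\mu$-almost every $b$; having conditional density $1$ along every prefix does not force an individual branch into $P$.

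The missing ingredient is topological rather than measure-theoretic. The paper first invokes inner regularity of the Borel measure $\mu$ on the Polish space $\{0,1\}^\omega$ to replace $P$ by a closed subset of positive measure, so that one may assume w.l.o.g.\ that $P$ itself is closed. Closedness in the product topology means precisely that a branch belongs to $P$ as soon as each of its prefixes is visited by some branch of $P$; since every node of $T$ is heavy, hence visited by $P$, this yields $\vec{T}\subseteq P$ on the nose. After this one preliminary reduction, your construction (indeed its simpler version) completes the proof.
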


The proof relies on the inner-regularity of $\mu$,
so that $P$ can be assumed to be a closed set,
i.e. a subtree from which we can prune leaves whose subtree has probability $0$.

\paragraph*{Positive witnesses.}

Positive witnesses can be used to check whether a strategy is positively winning:

\begin{definition}[Positive plays and witnesses]
\label{defnonzero}
Let $\lab$ be a $\Sigma$-labelled binary tree
and $\sigma$ a positional strategy of Eve in the acceptance game of $\lab$.
Let $Z$ be the set of $\sigma$-reachable vertices whose state is in $\Qzero$.

A play is positive if all vertices it visits belong to $\{0,1\}^*\times \Qzero$.
A positive witness for $\sigma$ is a pair $(W,E)$
where:
\begin{align*}
 &W \subseteq Z \text{ are the \emph{active} vertices},\\
&E \subseteq \{0,1\}^* \times \{0,1\}\text{ is the set of \emph{positive edges},}
\enspace
\end{align*}
and they have the following properties.
\begin{itemize}
\item[a)]
From every vertex $z \in Z$ there is a positive and canonical finite $\sigma$-play starting in $z$ which reaches a vertex in $W$ or a transient vertex.
\item[b)]
Let $z=(n,q) \in W$.
Then $(n,0)\in E$ or $(n,1)\in E$, or both.
If $z \to z'$ is a local transition
then $z' \in W$ as well
whenever ($q\in Q_E$ and $z \to z'$ is consistent with $\sigma$)
or ($q\in Q_A$ and
$z\to z'$ is canonical).
If $z$ is controlled by Eve and $\sigma(z)$ is a split transition
 $q \to (q_0,q_1)$ then
$
((n,0)\in E \implies (n0,q_0) \in W)$
and
$((n,1)\in E \implies (n1,q_1) \in W)$.
 \item[c)]
 The set of nodes $\{ nd\in\{0,1\}^* \mid (n,d) \in E \}$
 is everywhere thick.
  \end{itemize}
\end{definition}

\begin{lemma}
[Characterization of positively winning strategies]
\label{lem:caracnonzero}
Assume the automaton has \what.
A positional strategy
 $\sigma$ for Eve is positively winning
iff there exists a positive witness for $\sigma$. 
\end{lemma}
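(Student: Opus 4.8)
The plan is to prove both implications separately. For the easy direction, suppose a positive witness $(W,E)$ exists for $\sigma$, and let $\pi$ be a finite $\sigma$-play ending in a vertex $z=(n,q)$ with $q\in\Qzero$; I must show the set of positive continuations of $\pi$ has nonzero probability. By property a) I can extend $\pi$ by a positive canonical $\sigma$-play to a vertex that is either transient or in $W$; since every infinite play has a canonical suffix and visits only finitely many transient vertices, after finitely many such steps I reach a vertex in $W$ (a transient vertex has a positive canonical successor by a), and repeating drives the play down the finitely-many transient vertices until it lands in $W$). Once in $W$, I use property b): following $\sigma$ (and, for Adam's canonical moves, staying canonical) the play stays in $W$ along local transitions, and along split transitions it stays in $W$ provided the play follows a direction $d$ with $(n,d)\in E$. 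So the branches realized by positive continuations staying inside $W$ contain $\vec T$ where $T$ is the subtree determined by the edge set $E$; by property c) this set of nodes is everywhere thick, so $\mu(\vec T\cap n\{0,1\}^\omega)>0$. By Lemma~\ref{probmu} (applied to a strategy $\tau$ of Adam that enforces $\pi$ with positive probability and plays canonically afterwards), the corresponding set of plays has positive $\Proba^{\sigma,\tau}$-probability, and all of them are positive continuations; hence $\sigma$ is positively winning.

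For the converse, assume $\sigma$ is positively winning; I must build a positive witness. Let $Z$ be the set of $\sigma$-reachable vertices with state in $\Qzero$. For each $z=(n,q)\in Z$, positively winning gives that the set of positive continuations of some finite $\sigma$-play to $z$ has positive probability; by the normalization hypotheses (N1), (N2) the automaton is arranged so that from $z$ one can choose, among the transitions available to the controller ($\sigma$ for Eve, canonical for Adam), moves staying in $\Qzero$. Consider, for a single fixed vertex $z_0\in Z$, the tree of positive canonical $\sigma$-plays from $z_0$ that stay in $\Qzero$; its set of realized branches has positive measure, so by Lemma~\ref{lem:thick} it contains an everywhere thick subtree $T_{z_0}$ rooted at $z_0$'s node. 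This thick subtree designates, at each of its nodes, at least one outgoing direction that stays in $\Qzero$ forever with positive probability. Take $W$ to be the union over $z\in Z$ of the vertices visited along these thick subtrees (more precisely, the vertices reachable inside them), and take $E$ to be the union of the direction sets of the thick subtrees. Properties a) and b) then follow by construction — a) because from any $z\in Z$ we chose the thick subtree rooted there, and b) because inside a thick subtree the $\sigma$-moves (and canonical Adam-moves) stay inside $W$ by definition and the thick-subtree edges are exactly those recorded in $E$ — and c) holds because a finite union (over the finitely many states, hence finitely many vertices per node) of everywhere thick sets of nodes is again everywhere thick.

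The main obstacle is the converse direction, specifically getting the single set $W$ and single edge set $E$ to simultaneously witness positivity for all the up-to-$|Q|$ different $\sigma$-reachable $\Qzero$-vertices that may sit on a given node, while keeping $E$ everywhere thick. The delicate point is that the everywhere-thick subtrees $T_z$ obtained from Lemma~\ref{lem:thick} for different $z$ live on overlapping sets of nodes, and one must check that merging their edge sets does not destroy thickness and that the closure conditions in b) remain consistent (in particular that, once a play following $\sigma$ enters $W$, every $\sigma$-dictated local move and every $E$-dictated split move keeps it in $W$). Handling this cleanly is where the notion of positive witness was designed to pay off: thickness is preserved under finite unions and under taking supersets of the node set, and since there are only finitely many vertices per node the union is finite, so the argument goes through. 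The remaining care is purely bookkeeping: verifying that positive canonical plays can always be prolonged (using (N1)–(N2) and the fact that transient vertices are finitely many along any play) so that the "reach $W$ or a transient vertex" clause of a) is genuinely usable in the forward direction.
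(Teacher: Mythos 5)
There are genuine gaps in both directions, and the converse is the serious one. In the forward direction you quietly replace the adversary: positive winning requires that for \emph{every} strategy $\tau$ of Adam and every finite $(\sigma,\tau)$-play $\pi$ ending in $\Qzero$, the positive continuations \emph{consistent with that same $\tau$} have nonzero $\Proba^{\sigma,\tau}$-probability, whereas you construct a convenient $\tau$ that ``plays canonically afterwards.'' An arbitrary $\tau$ may play non-canonical moves, and then the canonical play supplied by property a) need not be consistent with it. The paper closes this by an induction on the topological position of the current state: either some positive continuation consistent with the given $\tau$ drops to a strictly lower component (apply the inductive hypothesis there), or else $\tau$ is forced to play canonically on all positive continuations, which simultaneously makes the play from a) consistent with $\tau$ and rules out its ``or a transient vertex'' escape clause. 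Your handling of that clause (``repeating drives the play down the finitely-many transient vertices until it lands in $W$'') is also not justified as stated, since a) only promises $W$ \emph{or} a transient vertex at each application.

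In the converse direction your resolution of the difficulty you correctly identify does not work. The union $\bigcup_{z\in Z} T_z$ is an \emph{infinite} union ($Z$ contains vertices on infinitely many nodes), not a finite one; and while thickness does survive unions, property b) does not: if $z=(n,q)\in W$ comes from the witness subtree of some $z_1$ while the direction $(n,d)\in E$ comes from the subtree of some other $z_2$, nothing forces the split successor $(nd,q_d)$ dictated by $\sigma$ to lie in $W$ — indeed $q_d$ may not even be in $\Qzero$ in that direction, which b) together with $W\subseteq Z$ forbids. This coordination of up to $|Q|$ vertices per node against a \emph{single} edge set $E$ is precisely the hard part. The paper's construction keeps the thick subtrees pairwise \emph{disjoint}, and whenever a new vertex $z_k$'s witness overlaps an existing tree on a set of branches of positive measure, it \emph{shrinks} that tree to a common everywhere-thick subtree and transports all previously registered root vertices into it; a counting argument ($|X_{i,k}|=|X_{i,k-1}|+1$ at each shrinking, bounded by $|Q|$) shows each tree stabilizes, and the witness is read off the limit family. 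Without some such merging-and-stabilization mechanism (or an equivalent one), the claim that ``the argument goes through'' is unsupported.
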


\subsection{Deciding emptiness}

\newcommand{\sttree}{T}

A $\Sigma$-labelled
binary tree $\lab$ and a positional strategy $\sigma$ in the corresponding acceptance game
generate a tree
\[
\sttree_{\lab,\sigma} : \{0,1\}^*  \to (Q \cup Q\times Q)^{Q_E}\enspace.
\]
For every vertex $(n,q)$ controlled by Eve,
if $\sigma(n,q)$ is a local transition $q \to_{\lab(n)} q'$
then $\sttree_{\lab,\sigma}(n)(q)=q'$
and if
$\sigma(n,q)$ is a split transition $q \to_{\lab(n)} (q_0,q_1)$
then $\sttree_{\lab,\sigma}(n)(q)=(q_0,q_1)$.

\begin{definition}[Strategic tree]\label{def:st}
A tree $ \sttree: \{0,1\}^*  \to (Q \cup Q\times Q)^{Q_E}$ is \emph{strategic}
if there exists a tree  $\lab:\{0,1\}^* \to \Sigma$
and a positional strategy $\sigma$ for Eve such that
$\sttree=\sttree_{\lab,\sigma}$\enspace.
\end{definition}

We are interested in the strategic trees associated to winning strategies.
The rest of the section is dedicated to the proof of the following theorem.

\begin{theorem}\label{theo:recogstrat}
Fix an alternating nonzero automata with limited choice for Adam.
The language of strategic trees $\sttree_{\lab,\sigma}$ such that $\sigma$ wins the acceptance game of $\lab$ can be recognized by a non-deterministic nonzero automaton of size exponential in $|Q|$.
If $\Qall=Q$ in the alternating automaton, then the sure condition of the non-deterministic automaton is B\"uchi.
\end{theorem}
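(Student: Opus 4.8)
The plan is to build the non-deterministic nonzero automaton $\mathcal{B}$ whose input alphabet is $(Q \cup Q\times Q)^{Q_E}$, so that a run of $\mathcal{B}$ on a tree $\sttree$ both (a) certifies that $\sttree$ is strategic, i.e.\ comes from some $\lab$ and some positional $\sigma$, and (b) certifies that the underlying $\sigma$ is surely, almost-surely and positively winning. Since $\mathcal{B}$ must be non-deterministic (Adam is dummy, all transitions split), every ``certificate'' must be a tree-shaped object that $\mathcal{B}$ guesses on the fly while descending along each branch, and the three winning conditions must be encoded into $\mathcal{B}$'s limsup structure together with its sure/almost-sure/positive subsets of states. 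First I would fix on each node $n$ the finite data that $\mathcal{B}$ maintains: the letter $\lab(n) \in \Sigma$ (guessed, and checked for consistency with the transitions read off $\sttree(n)$), the index vector $\index_\sigma(n) \in \{0,\dots,2|Q|,\infty\}^Q$ of Lemma~\ref{defindex}, and the local part of the positive witness of Definition~\ref{defnonzero} — the set of active vertices $W$ restricted to node $n$ together with the bit telling whether $(n,0)$ or $(n,1)$ (or both) lies in $E$. Crucially, by the first item of Lemma~\ref{defindex} the index vector is computable on the fly: $\index_\sigma(nd)$ depends only on $\index_\sigma(n)$ and on $\sigma_{nd}$, which $\mathcal{B}$ reads from $\sttree(nd)$; this is exactly what lets $\mathcal{B}$ propagate the certificate along split transitions.

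Next I would set up the three acceptance components of $\mathcal{B}$ using the characterizations already proved. For the sure and almost-sure conditions I would follow Lemma~\ref{caracsure}: $\sigma$ is surely (resp.\ almost-surely) winning iff no branch (resp.\ a measure-zero set of branches) is a $q$-branch for some $q \notin \Qall$ (resp.\ $q \notin \Qone$). By Lemma~\ref{defindex} being a $q$-branch is a condition on $R^\infty(b)$, which is itself a limsup-type property of the on-the-fly index computation; so I would let the state of $\mathcal{B}$ track the pair $(k, r)$ currently being ``followed'' and arrange the order $\leq$ on $\mathcal{B}$'s states and the set $\Qall^{\mathcal{B}}$ so that the limsup of a run lands in $\Qall^{\mathcal{B}}$ exactly when no $q$-branch for $q\notin\Qall$ is witnessed on that branch — a pure-limsup (hence, if $\Qall = Q$ in the alternating automaton, there are no forbidden $q$'s at all and the sure condition is vacuous, i.e.\ a trivial B\"uchi condition, which is the last sentence of the theorem). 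Dually $\Qone^{\mathcal{B}}$ handles the almost-sure part; since that condition is only required to hold almost-everywhere, the non-deterministic automaton is allowed to use its almost-sure slot, and the argument that ``measure-zero set of $q$-branches'' matches ``almost-every run has limsup in $\Qone^{\mathcal{B}}$'' goes through Lemma~\ref{probmu} which identifies $\Proba$ on plays with $\mu$ on branches. For the positive condition I would put $\Qzero^{\mathcal{B}}$ to be the states of $\mathcal{B}$ whose bookkeeping says ``I am inside the everywhere-thick set of positive edges of the witness $(W,E)$'', and use Lemma~\ref{lem:caracnonzero}: $\sigma$ is positively winning iff a positive witness exists, and property (c) of Definition~\ref{defnonzero} — the set $\{nd : (n,d)\in E\}$ is everywhere thick — is precisely a single positively-winning condition, which is what the non-deterministic nonzero automaton's unique positive slot is designed to check. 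Conditions (a) and (b) of Definition~\ref{defnonzero} are local (finite reachability within a node, plus closure of $W$ under $\sigma$-moves and canonical Adam-moves) and so can be verified by the transition relation of $\mathcal{B}$.

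Then I would verify both inclusions. For soundness: if $\mathcal{B}$ accepts $\sttree$, the accepting run exhibits $\lab$, the index vectors, and the witness $(W,E)$; Lemma~\ref{caracsure} and Lemma~\ref{lem:caracnonzero} then say the decoded positional $\sigma$ is winning on $\lab$, and by Definition~\ref{def:st} $\sttree = \sttree_{\lab,\sigma}$ is the strategic tree of a winning strategy. For completeness: if $\sttree = \sttree_{\lab,\sigma}$ with $\sigma$ winning, apply Lemma~\ref{defindex} to obtain $\index_\sigma$ and Lemma~\ref{lem:caracnonzero} to obtain a positive witness $(W,E)$; threading these along each branch gives an accepting run of $\mathcal{B}$, where one must check that the $q$-branch conditions of Lemma~\ref{caracsure} translate into the limsup/almost-sure acceptance of $\mathcal{B}$ along every branch / almost-every branch. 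The size bound is routine: $\mathcal{B}$'s state carries $\lab(n)$ (size $|\Sigma|$), an index vector ($(2|Q|+2)^{|Q|}$), a local piece of the witness (size $2^{|Q|}\cdot 4$) and a bounded amount of ``currently followed pair'' data ($O(|Q|^2)$), so $|Q_{\mathcal{B}}| = 2^{O(|Q|\log|Q|)}$, exponential in $|Q|$. The main obstacle I anticipate is the positive condition: $\Qzero$-transitions of $\mathcal{B}$ must, simultaneously, (i) be closed under local $\sigma$-moves and canonical Adam-moves as demanded by (b), (ii) actually embed the thick set $E$ as required by (c), and (iii) still satisfy the normalization hypotheses {\bf (N1)}, {\bf (N2)} of the excerpt — reconciling the ``from every $\sigma$-reachable $\Qzero$-vertex there is a positive canonical play into $W$'' guarantee (a) with the on-the-fly, branch-local nature of a non-deterministic run is the delicate point, and I would likely need the everywhere-thick-subtree Lemma~\ref{lem:thick} to turn ``positive probability to stay positive'' into the combinatorial object that $\mathcal{B}$ can guess branch by branch.
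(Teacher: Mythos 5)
Your overall architecture is the same as the paper's: guess $t(n)$ on the fly and check transition-consistency, propagate $\index_\sigma$ using the first item of Lemma~\ref{defindex}, guess the node-local slices $W_n,E_n$ of a positive witness, and discharge the three winning conditions via Lemmas~\ref{caracsure}, \ref{defindex} and \ref{lem:caracnonzero}. But there is a genuine gap in your treatment of the positive witness. You assert that conditions a) and b) of Definition~\ref{defnonzero} are ``local\dots and so can be verified by the transition relation of $\mathcal{B}$.'' Condition b) is local, but condition a) is not: the positive canonical $\sigma$-play from a vertex $z\in Z$ to $W$ or to a transient vertex may use split transitions and descend arbitrarily far into the tree, so ``every obligation in $Z$ eventually reaches $W$'' is a liveness property, not a safety property. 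A run that keeps postponing some obligation forever satisfies all your local checks but does not certify positive winning. The paper handles this with an ordered list $P_n$ of pending obligations in which the oldest obligation is tracked at the head, a B\"uchi condition is triggered each time the head terminates, and the head is cycled to the back; acceptance requires this B\"uchi condition to fire infinitely often on every branch. This also invalidates your reading of the last sentence of the theorem: when $\Qall=Q$ the sure condition is not vacuous --- it is precisely this B\"uchi condition (the paper injects a symbol $\top$ into the memory whenever it fires), which is why the statement says the sure condition \emph{is B\"uchi} rather than trivial.

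A second, smaller gap is the mechanism for turning the condition $M(b)\subseteq \Qall$ (resp.\ $\Qone$) into a limsup condition. Knowing $M(b)$ requires knowing, for every index $k$ that remains in $\index_\sigma(n_i)(Q)$ cofinitely, the maximal $q$ with $(k,q)$ occurring infinitely often; this is a Rabin/Streett-style condition over the alphabet $\{0,\dots,2|Q|\}\times(Q\cup\{\bot\})$, and the paper converts it to a pure limsup by running a Last Appearance Record (Lemma~\ref{LAR}) over that alphabet. Your proposal of tracking ``the pair $(k,r)$ currently being followed'' with $O(|Q|^2)$ extra states cannot recover $R^\infty(b)$ for all $k$ simultaneously; correspondingly your size estimate $2^{O(|Q|\log|Q|)}$ undercounts the $|C|^{|C|+1}$ cost of the LAR (still singly exponential in a polynomial of $|Q|$, so the theorem's bound survives, but not by your accounting).
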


\begin{proof}
The characterizations of surely, almost-surely and positively winning strategies
given in lemmas~\ref{caracsure},~\ref{defindex} and~\ref{lem:caracnonzero}
can be merged as follows.

\begin{corollary}\label{carac}
Let $\sigma$ be a positional strategy $\sigma$ for Eve.
For every branch $b$ denote
\[
M(b)=\{ \max\{q \mid (k,q)\in R^\infty(b)\}
\mid  k\in 0\ldots 2|Q|\} \enspace.
\]
Then $\sigma$ is winning if and only if 
\begin{itemize}
\item
for every branch $b$, 
$M(b)\subseteq \Qall$;
\item
 and
for almost-every branch $b$, 
$M(b)\subseteq \Qone$;
\item
and there exists a positive witness for $\sigma$.
\end{itemize}
\end{corollary}

First of all, the non-deterministic automaton
$\ndautomata$ checks whether the input tree is a strategic tree,
for that it guesses on the fly the input tree $t :\{0,1\}^*\to \Sigma$
by guessing on node $n$ the value of $t(n)$
and checking that for every $q\in\QE$, $q\to_{t(n)} T(n)(q)$ is a transition of the automaton.

On top of that $\ndautomata$ 
checks the three conditions of Corollary~\ref{carac}.
For the first two conditions,
it
computes (asymptotically) along every  branch $b$
the value of
 $R^\infty(b)$ and thus of $M(b)$.
For that the automaton relies
on a Last Appearance Record memory (LAR)~\cite{Gurevich:1982:TAG:800070.802177}
whose essential properties are:
\begin{lemma}[LAR memory~\cite{Gurevich:1982:TAG:800070.802177}]
\label{LAR}
Let $C$ be a finite set of symbols.
There exists a deterministic automaton on $C$
called the \emph{LAR memory on $C$} with the following properties.
First, 
the set of states, denoted $Q$, has size $\leq |C|^{|C|+1}$
and is totally ordered.
Second, for every $u\in C^\omega$
denote $L^\infty(u)$ the set of letters seen infinitely often in $u$
and
$\lar(u)$ the largest state seen infinitely often during the computation on $u$.
Then $L^\infty(u)$ can be inferred from $\lar(u)$, precisely
there is a mapping $\phi : Q \to 2^C$ such that:
  $
  \forall u\in C^\omega, L^\infty(u) = \phi(\lar( u) )\enspace.
  $
  \end{lemma}

In order to compute $R^\infty(b)$ along a branch $b$,
the non-deterministic automaton $\ndautomata$
computes deterministically on the fly the $\sigma$-index
of the current node $n$, as defined in Lemma~\ref{defindex},
and 
 implements a LAR memory on the alphabet
\[
C=\{0,\ldots, 2|Q|\}\times( Q \cup \{\bot\})\enspace.
\]
 When visiting node $n$,
 $\ndautomata$ injects into the LAR memory all pairs 
$(\index_\sigma(q),q)$ such that $q\in Q$ and $\index_\sigma(q)\neq \infty$
plus all pairs $(k,\bot)$ such that $k \not \in \index_\sigma(n)(Q)$.
For every branch $b$, the set $R^\infty(b)$ is equal to all pairs $(k,q)$ seen infinitely often such that $(k,\bot)$ is seen only finitely often.
Thus, 
the LAR memory can be used to check the first two conditions of Corollary~\ref{carac}, more details are given at the end of the proof.

\smallskip

For now, we describe how the non-deterministic automaton $\ndautomata$ checks
 whether there exists a positive witness
$(W,E)$ (Definition~\ref{defnonzero}).
Denote by $Z$ the set of $\sigma$-reachable vertices whose state is in $\Qzero$.
On node $n$ the automaton guesses (resp. computes)  the vertices of $W$ (resp. $Z$) of the current node and guesses the elements of $E$ by storing three sets of states:
\begin{align*}
&W_n=\{ q \in Q \mid (n,q) \in W\}\\
&Z_n =
\{ q \in \Qzero \mid \index_\sigma(n,q) < \infty \}\\
&E_n=\{ b \in \{0,1\} \mid (n,b) \in E\}
\enspace.
\end{align*}
Then $\ndautomata$ checks all three conditions a), b) and c) in the definition of a positive witness as follows.

\smallskip

$\ndautomata$ checks condition a) in the definition of a positive witness
by guessing 
on the fly for every vertex in $Z$ a canonical positive $\sigma$-play to a vertex which is either transient or in $W$, in which case we say the canonical positive play \emph{terminates}.

For that $\ndautomata$
maintains 
an ordered list $P_n$ of states.
On the root node, $P_\epsilon$ is $Z_\epsilon \setminus W_\epsilon$.
When the automaton performs a transition,
it guesses for each state $q$ in $P_n$ and
direction $b_q$  
a successor $s_q$, such that $(nb_q,s_q)$ can be reached from $(q,n)$ by a positive canonical $\sigma$-play.
In direction $b$, every state $q$ for which $b_q \neq b$ is removed from the list,
while every state $q$ for which $b_q = b$
is replaced by the corresponding $s_q$.
 Then all states in $Z_{nb}$ are added at the end of the list.
In case of duplicates copies of the same state in the list,
only the first copy is kept. 
In case the head of the list is in $W_{nb}$ or is transient,
a B\"uchi condition is triggered
 and the head is moved at the back of the list.
Finally all entries of the list which are in $W_{nb}$ are removed.

This way, condition a) holds iff the B\"uchi condition is triggered infinitely often on every branch.
We discuss below how to integrate this B\"uchi condition
in the sure accepting condition of the automaton.

\smallskip

$\ndautomata$ checks condition b) in the definition of a positive witness
by entering an absorbing error state as soon as 
\begin{enumerate}[1)]
\item there is some local transition $(n,q)\to_{t(n)} (n,q')$ such that
 $q\in W_n$ and
($q\in Q_E$ and $z \to z'$ is consistent with $\sigma$)
or ($q\in Q_A$ and
$z\to z'$ is canonical); or
\item
there is some $q\in W_n$ controlled by Eve and $b\in E_n$ such that $\sigma(n,q)$ is a split transition $q \to_{t(n)}(q_0,q_1)$ but $q_b\not\in W_{nb}$. 
\end{enumerate}
The guessed sets $W_n$ are bound to satisfy condition 1) and condition 2) is checked by storing a subset of $Q$.

\smallskip

$\ndautomata$ checks condition c) in the definition of a positive witness
by triggering the positive acceptance condition
whenever it moves in direction $b$ on a node $n$ such that $b \in E_n$.

\smallskip

The sure and almost-sure acceptance condition are defined as follows.
The B\"uchi condition necessary for checking condition a) in the definition of a positive witness
is integrated in the LAR memory,
for that we add to the alphabet $C$ of the LAR memory a new symbol
$\top$ which is injected in the LAR memory  whenever the B\"uchi condition is triggered.
The order between states of $\ndautomata$ is induced by the order of the LAR memory. 

This way, according to Lemma~\ref{LAR},
the largest state seen infinitely often along a branch $b$ reveals
whether $\top$ was seen infinitely often,
and reveals  
the value of $R^\infty(b)$ (the set of pairs $(k,q)$ seen infinitely often
such that $(k,\bot)$ was seen finitely often) hence of $M(b)$ as well. The state is surely (resp. almost-surely) accepting 
iff $\top$ was seen infinitely often and $M(b)\subseteq \Qall$ (resp. $M(b)\subseteq \Qone$).
 In case $\Qall=Q$ in the alternating automaton then the sure condition boils down to the B\"uchi condition.

 According to Corollary~\ref{carac},
 and by construction of $\ndautomata$,
 the computation of $\ndautomata$ is accepting
 iff the input is a strategic tree whose corresponding
 strategy of Eve is winning.
\end{proof}

\begin{theorem}\label{theo:recogalt}
Emptiness of alternating nonzero automata with limited choice for Adam is decidable in \nexptime$\cap$co-\nexptime.
If $\Qall=Q$, emptiness can be decided in \exptime.
\end{theorem}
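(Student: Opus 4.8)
The plan is to combine Theorem~\ref{theo:recogstrat} with Theorem~\ref{theo:complexitiyemptiness} via the observation that an alternating nonzero automaton with limited choice for Adam is nonempty if and only if some strategic tree whose strategy is winning exists. First I would note that by Lemma~\ref{lem:det} (positional determinacy for Eve), Eve wins the acceptance game of an input tree $\lab$ if and only if she has a \emph{positional} winning strategy $\sigma$; equivalently, there is a strategic tree $\sttree_{\lab,\sigma}$ whose associated strategy is winning. Hence the alternating automaton has nonempty language if and only if the language of ``winning strategic trees'' is nonempty. By Theorem~\ref{theo:recogstrat} this latter language is recognized by a non-deterministic nonzero automaton $\ndautomata$ of size exponential in $|Q|$, and moreover when $\Qall=Q$ the $\Qall$ (sure) condition of $\ndautomata$ is a B\"uchi condition.

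The second step is to feed $\ndautomata$ into Theorem~\ref{theo:complexitiyemptiness}. In general, emptiness of a non-deterministic nonzero automaton is in \npconp, and running this on an automaton of size $2^{O(|Q|)}$ (encoded in exponential space) yields a nondeterministic/co-nondeterministic procedure running in time exponential in $|Q|$, i.e. emptiness of the alternating automaton is in \nexptime$\cap$co-\nexptime. In the special case $\Qall=Q$, the $\Qall$ condition of $\ndautomata$ is B\"uchi, so by the second part of Theorem~\ref{theo:complexitiyemptiness} emptiness of $\ndautomata$ is decidable in \ptime, i.e. in time polynomial in $|\ndautomata|=2^{O(|Q|)}$, which is \exptime\ in $|Q|$. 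This gives exactly the two claimed bounds.

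The one point requiring care — and the main (if modest) obstacle — is bookkeeping about \emph{input size versus state size}: the statements are phrased in terms of $|Q|$, so I must be careful that the exponential blowup in Lemma~\ref{NLL} (needed to guarantee {\bf (NLL)}, which underlies Lemma~\ref{lem:det}) is already absorbed, or else argue that the translations and the subsequent algorithm compose without an extra exponential. Concretely I would observe that {\bf (NLL)} can be assumed on the given automaton up front by Lemma~\ref{NLL}, at the cost of one exponential in $|Q|$; the construction of $\ndautomata$ in Theorem~\ref{theo:recogstrat} is then exponential in that (already exponential) state space, so naively we get a double exponential. To recover the stated single-exponential bound I would invoke the remark following Lemma~\ref{NLL}: one can dispense with the {\bf (NLL)} normalization at the price of one extra bit of memory for Eve, keeping the construction of $\ndautomata$ exponential in the \emph{original} $|Q|$; alternatively, since the non-deterministic emptiness test is itself only guessing subsets $W\subseteq Q'$ and positional strategies $\sigma_1,\sigma_2 : W\to W\times W$ over the state set $Q'$ of $\ndautomata$, a witness is polynomial in $|Q'|=2^{O(|Q|)}$ and hence of exponential size, which is all that is needed for the \nexptime$\cap$co-\nexptime\ bound, and likewise the \ptime\ algorithm of Theorem~\ref{theo:complexitiyemptiness} runs in time polynomial in $|Q'|$, i.e. \exptime\ in $|Q|$. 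Assembling these gives the theorem.
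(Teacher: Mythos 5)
Your proposal is correct and follows essentially the same route as the paper: reduce to a non-deterministic nonzero automaton of exponential size that guesses a winning strategic tree (Theorem~\ref{theo:recogstrat}, justified by positional determinacy, Lemma~\ref{lem:det}) and conclude with Theorem~\ref{theo:complexitiyemptiness}. Your extra bookkeeping about the {\bf (NLL)} blowup is a legitimate concern that the paper only addresses in the remark following Lemma~\ref{NLL}, and your resolution matches the one suggested there.
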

\begin{proof}
Emptiness of an alternating automaton reduces to the emptiness of a non-deterministic automaton of exponential size.
This non-deterministic automaton guesses on-the-fly a tree
$\{0,1\}^*  \to (Q \cup Q\times Q)^{Q_E}$ and checks it is a winning strategic tree,
using the automaton given by Theorem~\ref{theo:recogstrat}.
In case the alternating automaton is $\Qall$-trivial,
the sure condition of the non-deterministic automaton is B\"uchi (Theorem~\ref{theo:recogstrat}).
We conclude with Theorem~\ref{theo:complexitiyemptiness}.
\end{proof}


\section{Satisfiability of \ctlsa\ }\label{sec:pctls}

Our result on alternating nonzero automata 
can be applied to decide the satisfiability of the logic \ctlsa, a generalization of CTL* which integrates
both deterministic and probabilistic state quantifiers.

\paragraph*{Markov chains.}
The models of \ctlsa\ formulas are Markov chains.
A Markov chain
with alphabet $\Sigma$ is a tuple
$\mc=(\states,\pt,\lab)$ where
$\states$ is the (countable) set of \emph{states},
 $\pt : \states \to \distrib{(\states)}$ are the \emph{transition probabilities}
and $\lab:\states\to \Sigma$ is the \emph{labelling function}.

For every state $s\in S$,
there is a unique probability measure denoted $\Proba_{\mc,s}$ on $S^\omega$
such that
$\Proba_{\mc,s}(s S^\omega)=1$
and
for every sequence
$s_0\cdots s_n s_{n+1}\in S^*$,
$\Proba_{\mc,s}(s_0\cdots s_n s_{n+1} S^\omega)=\pt(s_n,s_{n+1})\cdot \Proba_{\mc,s}(s_0s_1\cdots s_n S^\omega)$.
When $\mc$ is clear from the context this probability measure is simply denoted $\proba_s$.
A \emph{path} in $\mc$ is a finite or infinite sequence of states $s_0s_1\cdots$ such that
$
\forall n \in \NN, p(s_n,s_{n+1})>0\enspace.
$. We denote $\pathes_{\mc}(s_0)$ the set of such paths.

A binary tree $\lab:\{0,1\}^*\to \Sigma$ is seen as a specific type of Markov chain,
where from every node $n\in \{0,1\}^*$
there is equal probability $\frac{1}{2}$ to perform transitions to $n0$ or $n1$.

\paragraph*{Syntax.}
For a fixed alphabet $\Sigma$,
there are two kinds of formula: state formula (typically denoted $\sf$) and path formula (denoted $\pf$),
generated by the following grammar:
\begin{align*}
\sf ::=& \top \mid \bot \mid a\in \Sigma \mid \sf \wedge\sf\mid \sf \vee\sf\mid \neg \sf \\
&\mid \PCTLE \pf \mid \PCTLA \pf \mid \Proba_{> 0}(\pf) \mid \Proba_{= 1}(\pf)\\
\pf ::=& \sf\mid\neg \pf \mid \pf \wedge \pf \mid \pf\vee \pf
\mid \next \pf \mid \pf \until \pf \mid \always\pf\enspace.
\end{align*}

\paragraph*{Semantics.}

Let $\mc=(\states,\lab,\pt)$ a Markov chain.
We define simultaneously and inductively the satisfaction
$
\mc,s \models \sf
$
 of a state formula $\sf$
by a state $s\in\states$ and the satisfaction
$
\mc,\aPath \models \pf
$
of a path formula $\pf$ by a path $\aPath\in\pathes_{\mc}$.
When $\mc$ is clear from the context,
we simply write $s \models\sf$ and $\aPath \models \pf$.

If a state formula is produced by one of the rules
$\top \mid \bot \mid p \mid \sf \wedge\sf\mid \sf \vee\sf\mid \neg \sf$,
its satisfaction is defined as usual.
If $\pf$ is a path formula and $\fo\in\{\exists\pf,\forall \pf, \Proba_{> 0}(\pf), \Proba_{=1}(\pf)\}$
then 
\begin{align*}
& s \models \PCTLE \pf &\text{ if } &\exists w \in \pathes_\mc(s),  w \models \pf \\
& s \models \PCTLA \pf &\text{ if } &\forall w \in \pathes_\mc(s), w \models \pf\\
& s \models \Proba_{\sim b}( \pf) &\text{ if } &\Proba_{\mc,s}(\aPath\in \pathes_\mc(s)\mid \aPath\models \pf)\sim b\enspace.
\end{align*}

The satisfaction of a path formula $\pf$ 
by an infinite path $\aPath=s_0s_1\dots\in
\pathes_{\mc}(s_0)$
is defined as follows.
If $\pf$ is produced by one of the rules
$\neg \pf \mid \pf \wedge \pf \mid \pf\vee \pf$
then its satisfaction is defined as usual.
If $\pf$ is a state formula (rule $\pf:=\sf$)
then 
$
\aPath \models \sf$ if $s_0 \models \sf\enspace.
$
Otherwise, 
$\pf \in
\{
\next \pf',\always \pf', \pf_1\until \pf_2
\}$
where
$\pf', \pf_1$ and $\pf_2$ are path formulas.
For every integer $k$,
we denote $w[k]$ the path
$s_ks_{k+1}\dots\in \pathes_{\mc}(s_k)$.
Then:
\begin{align*}
&\aPath \models \next \pf' &\text{ f } &w[1] \models \pf'\\
&\aPath \models \always \pf' &\text{if } &\forall i\in\nats, w[i]\models \pf'\\
&\aPath \models \pf_1\until \pf_2 &\text{if }&\exists n \in \nats, (\forall 0\leq i < n, w[i] \models \pf_1 \land w[n] \models \pf_2).
\end{align*}

The Markov chain given in Figure~\ref{fig:mc} satisfies the formula $(\PCTLA( \always \PCTLE (\top \until a) ))\wedge 
(\PCTLPp( \always \neg a))$.

\begin{figure}
	\begin{tikzpicture}[node distance=1.3cm]
	\node (1) {b};
	\node (11) [above right of=1] {a};
	\node (12) [below right of=1] {c};
	\node (2) [below right of=11]{b};
	\node (21) [above right of=2] {a};
	\node (22) [below right of=2] {c};
	
	\node (d) [below right of=21]{\dots};
	\node (311) [right of=d]{b};
	\node (111) [above right of=311] {a};
	\node (121) [below right of=311] {c};
	
	\node (d) [below right of=111]{\dots};

	\path[->,sloped,above]
	(1) edge node {$\frac{1}{2^2}$}(11)
	(1) edge node {$1-\frac{1}{2^2}$}(12)
	(11) edge (2)
	(12) edge (2)
	(2) edge node {$\frac{1}{2^3}$}(21)
	(2) edge node {$1-\frac{1}{2^3}$}(22)
	
	(311) edge node {$\frac{1}{2^n}$}(111)
	(311) edge node {$1-\frac{1}{2^n}$}(121)
	;
	\end{tikzpicture}
	
	\caption{A model of $(\PCTLA( \always \PCTLE (\top \until a) ))\wedge 
		(\PCTLPp( \always \neg a))$\label{fig:mc}}
\end{figure}

\paragraph*{Variants and fragments.}

A formula of \ctlsa\ belongs to the fragment \ctl\
if in each of its state subformula $\sf$ of type
$\PCTLE \pf \mid \PCTLA \pf \mid \Proba_{> 0}(\pf) \mid \Proba_{= 1}(\pf)$ the path formula $\pf$ has type $\next \sf' \mid \sf'\until \sf'' \mid \always\sf'$ where $\sf'$ and $\sf''$ are state subformulas.

In the variant \ECTL, every path formula $\pf$ is described as the composition of a deterministic B\"uchi automata on some alphabet $\{0,1\}^k$ with $k$ state subformulas. A path satisfies $\pf$ if the B\"uchi automaton accepts the sequence of letters obtained by evaluating the $k$ state subformulas on every state along the path.
This variant augments both the expressivity and the conciseness of the logic at the cost of a less intuitive syntax. For more details see~\cite{brazdil2008controller}. 

We are also interested in the fragments where the operators $\PCTLE$ and $\PCTLA$ are not used,
 i.e. the qualitative fragments of the logics \pctls, p\ECTL\ and p\ctl.

\paragraph*{Satisfiability problem.}
 A Markov chain $\mc$ \emph{satisfies} a formula $\pCTLFormula$
 at state $s$,
 or equivalently $(\mc,s)$ \emph{is a model} of $\pCTLFormula$,
 if $\mc,s \models\pCTLFormula$.
We are interested in the problem:

\smallskip

\noindent\SP: given a formula, does it have a model?

\smallskip

The satisfiability of {\sc tmso+zero} is known to be decidable~\cite{DBLP:conf/icalp/Bojanczyk16, DBLP:journals/corr/BojanczykGK17}. 
Since \ctlsa\ is a fragment of {\sc tmso+zero},
\SP\ is decidable
with non-elementary complexity.
A reduction  to the emptiness of alternating nonzero automata gives better complexity:

\begin{theorem} \label{theo:pctls}
For
\ctlsa\ the satisfiability problem is in 3-\nexptime\ $\cap$ co-3-\nexptime.
The following table summarizes complexities of the satisfiability problem for various fragments and variants of  \ctlsa:

\smallskip

\begin{tabular}{|c|c|c|}
\hline
& \allop & \pop\\
\hline
\ctls 
& \makecell{3-\nexptime\\$\cap$ co-3-\nexptime} 
& \makecell{3-\exptime~\cite{brazdil2008controller}\\ (qualitative \pctls)}\\
\hline
\ECTL 
& \makecell{2-\nexptime\\$\cap$ co-2-\nexptime} 
& \makecell{2-\exptime~\cite{brazdil2008controller}\\(qualitative p\ECTL)}\\
\hline
\ctl 
& \makecell{\nexptime\\$\cap$ co-\nexptime} 
& \makecell{\exptime~\cite{Brazdil2008}\\
(qualitative p\ctl)}\\
\hline
\end{tabular}
\end{theorem}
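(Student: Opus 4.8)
The plan is, for every formula $\fo$ of \ctlsa, to construct an alternating nonzero automaton $\automata_\fo$ with \what\ over a slightly extended alphabet whose language is exactly the set of tree models of $\fo$, and then to read off the announced bounds from Theorem~\ref{theo:recogalt}. The first step is Theorem~\ref{theo:reduc}: it lets us restrict attention to models that are $(\Sigma\cup\{\blank\})$-labelled binary trees, so that $\automata_\fo$ only has to accept exactly those binary trees on which $\fo$ holds. Everything then hinges on (i) a faithful translation $\fo\mapsto\automata_\fo$ and (ii) a size analysis of $\automata_\fo$ adapted to each fragment.

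For (i) I would first put $\fo$ in negation normal form; this stays inside every fragment since $\neg\PCTLPo(\pf)\equiv\PCTLPp(\neg\pf)$, $\neg\PCTLPp(\pf)\equiv\PCTLPo(\neg\pf)$, $\neg\PCTLE\pf\equiv\PCTLA\neg\pf$ and $\neg\PCTLA\pf\equiv\PCTLE\neg\pf$, so that after pushing negations to the atoms only $\PCTLPp,\PCTLPo$ survive in the qualitative fragments, and $\PCTLE,\PCTLA$ in addition elsewhere. Then $\automata_\fo$ is defined by induction on the structure of $\fo$. A state carries the subformula currently under evaluation together with the local state of the deterministic automaton attached to the enclosing path formula; these path automata are run letter by letter along branches of the tree, where for \ctl\ the operators $\next,\until,\always$ give fixed constant-size gadgets, for \ECTL\ the deterministic B\"uchi automata are part of the input, and for \ctls\ each LTL path formula is first determinized into a deterministic parity automaton. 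Boolean connectives are handled in the usual alternating way, Eve resolving $\vee$ and Adam $\wedge$. For the quantifiers: at $\PCTLE\pf$ Eve guesses and steers the play along one branch (each abandoned side of a split transition leading to an accepting absorbing state) while the deterministic automaton for $\pf$ runs along that branch, its acceptance encoded through the order on states and the set $\Qall$; at $\PCTLA\pf$, dually, that automaton is run along all branches and checked by the sure condition. The quantifier $\PCTLPo(\pf)$ is treated like $\PCTLA\pf$ but validated by the almost-sure condition $\Qone$, which is legitimate because, by Lemma~\ref{probmu}, on a binary tree the probability of a measurable set of plays equals the $\mu$-measure of the set of branches they project to and, on a tree seen as a Markov chain, the path measure is exactly $\mu$. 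Finally $\PCTLPp(\pf)$ is encoded through the positive winning condition: the copy of $\pf$'s automaton entered to check this quantifier has all its states in $\Qzero$, the normalizations {\bf (N1)}, {\bf (N2)} are imposed, and the positively winning condition together with the acceptance information carried inside $\Qzero$ makes Eve win iff the set of branches satisfying $\pf$ has positive $\mu$-measure.

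The alternation of $\automata_\fo$ arises precisely when the truth of a state subformula is needed at the current node --- either because it feeds a path formula or because it is an immediate argument of a quantifier: Adam is then offered a local move forking a thread that verifies that subformula while the main thread proceeds. Since $\fo$ has finitely many subformulas and each fork descends strictly among them, every play of $\automata_\fo$ contains only finitely many forks, while all transitions internal to a single path- or temporal-subautomaton stay within one connected component of the transition graph and are therefore canonical. Hence $\automata_\fo$ has \what, and after the transformation of Lemma~\ref{NLL} it also satisfies {\bf (NLL)}; moreover, when $\fo$ uses neither $\PCTLE$ nor $\PCTLA$ no sure condition is ever imposed, so one may take $\Qall=Q$.

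For (ii): $|\automata_\fo|$ is polynomial in $|\fo|$ for \ctl\allop; for \ECTL\allop\ it carries a single exponential (the product with the deterministic B\"uchi path automata and with the valuations of the state subformulas feeding them); and for \ctlsa\ it carries a second exponential (the LTL determinization). Feeding these into Theorem~\ref{theo:recogalt}, whose emptiness procedure costs one further exponential in the size of the alternating automaton, gives the first-column bounds \nexptime $\cap$ co-\nexptime, 2-\nexptime $\cap$ co-2-\nexptime\ and 3-\nexptime $\cap$ co-3-\nexptime. For the qualitative fragments \ctl\pop, \ECTL\pop\ and \ctls\pop\ we have $\Qall=Q$, so Theorem~\ref{theo:recogalt} (via the B\"uchi case of Theorem~\ref{theo:complexitiyemptiness}) yields emptiness in \exptime\ in $|\automata_\fo|$, hence the bounds \exptime, 2-\exptime\ and 3-\exptime, recovering~\cite{brazdil2008controller,Brazdil2008}. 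The two steps I expect to be genuinely delicate are the faithful encoding of $\PCTLPp(\pf)$ through the positive winning condition --- this is where the nonzero machinery of~\cite{DBLP:conf/icalp/Bojanczyk16,DBLP:journals/corr/BojanczykGK17} is essential, in particular checking that ``positive probability to stay in $\Qzero$'' combined with the parity information inside $\Qzero$ really captures ``positive $\mu$-measure of branches satisfying $\pf$'' --- and the bookkeeping that shows Adam's only non-canonical moves are the finitely many descents among subformulas, which is exactly the hypothesis needed to invoke Theorem~\ref{theo:recogalt}.
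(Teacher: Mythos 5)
Your proposal follows essentially the same route as the paper: reduce satisfiability over Markov chains to satisfiability over $(\Sigma\cup\{\blank\})$-labelled binary trees (Theorem~\ref{theo:reduc}), translate the formula into an alternating nonzero automaton with \what\ whose size is linear, exponential or doubly exponential depending on the fragment and whose $\Qall$ condition is trivial when $\PCTLE,\PCTLA$ are absent (this is exactly Lemma~\ref{lem:pctltobc}, including the subformula-challenge mechanism for Adam and the $\ignore$-pruning for $\PCTLPp$), and then invoke Theorem~\ref{theo:recogalt}. The construction and the size accounting match the paper's, so the argument is correct and not a different approach.
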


According to~\cite{Brazdil2008,brazdil2008controller}, the complexities
for \ECTL\pop\ and \ctl\pop\ are optimal.

The first step in the proof of Theorem~\ref{theo:pctls}
is a linear-time reduction from \SP\ to:

\smallskip
\noindent \BSP:
given a formula, does it have a model among binary trees?

\begin{theorem}\label{theo:reduc}
Any formula $\fo$ of \ctlsa\ on alphabet $\Sigma$ 
can be effectively transformed into a formula $\fo'$
of linear size on alphabet $\Sigma\cup \{\blank\}$ such that
$\fo$ is \SP\ iff $\fo'$ is \BSP.
As a consequence, \SP\ linearly reduces to \BSP.
 This transformation stabilizes
the fragment \ctls$[\PCTLPp,\PCTLPo]$.
\end{theorem}


The second step is a standard translation from logic to alternating automata~\cite{ltl}.

\begin{lemma}\label{lem:pctltobc}
For every  formula $\pCTLFormula$ of \ctlsa\ (resp. \ECTLa),
there is an alternating  automaton $\mathcal{A}$ with \what\
whose language is the set of binary trees
satisfying the formula at the root.
The automaton 
is effectively computable, of
size $O(2^{2^{|\pCTLFormula|}})$ (resp.  $\mathcal{O}({2^{|\pCTLFormula|}})$).
If $\pCTLFormula$ is a \ctl\ formula,
the size of $\mathcal{A}$ is $\mathcal{O}({{|\pCTLFormula|}})$.
In case the formula does not use the $\PCTLE$ and $\PCTLA$ operators, the $\Qall$ condition is trivial i.e. $\Qall=\Q$.
\end{lemma}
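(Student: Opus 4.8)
The plan is to proceed by structural induction on the formula $\pCTLFormula$, following the classical translation of temporal logic to alternating tree automata (as in~\cite{ltl}), but taking care to produce automata that fall in the class with \what\ and satisfy the normalization and (NLL) hypotheses. First I would handle the Boolean cases: atomic formulas $a$, $\top$, $\bot$ are recognized by trivial two-state automata, and for $\sf_1 \wedge \sf_2$, $\sf_1 \vee \sf_2$ I would invoke the closure properties of Lemma~\ref{lem:closure}, noting that the union/intersection constructions there add a single fresh state (controlled by Adam for intersection, by Eve for union) with only local transitions to the initial states of the two sub-automata; such a state is transient in the combined transition graph, so it does not endanger \what. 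Negations are pushed to the leaves beforehand using De~Morgan laws and the dualities $\neg \PCTLE\pf \equiv \PCTLA\neg\pf$, $\neg\Proba_{>0}(\pf)\equiv\Proba_{=1}(\neg\pf)$ (and symmetrically), together with dualization of the $X,U,G$ operators, so that the automaton never has to complement.

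The heart of the construction is the treatment of a state subformula $\PCTLE\pf$, $\PCTLA\pf$, $\Proba_{>0}(\pf)$ or $\Proba_{=1}(\pf)$ whose path formula $\pf$ mentions state subformulas $\sf_1,\dots,\sf_k$ that have already been translated to automata $\mathcal{A}_1,\dots,\mathcal{A}_k$. For the \ECTLa\ case $\pf$ is given directly as a deterministic B\"uchi automaton $\mathcal{D}$ over $\{0,1\}^k$; for the full \ctlsa\ case I would first convert the LTL-like path formula $\pf$ over the abstract letters $\sf_1,\dots,\sf_k$ into a deterministic parity (or Rabin) automaton, which costs one exponential and accounts for the $O(2^{2^{|\pCTLFormula|}})$ bound in the \ctls\ case versus $O(2^{|\pCTLFormula|})$ in the \ECTL\ case. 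Then I would build a product automaton whose states track the state of $\mathcal{D}$ (or the parity automaton) along a branch: in the $\PCTLA$/$\PCTLE$ cases the branch is chosen by Adam/Eve respectively via split transitions, with the acceptance encoded in the $\Qall$ condition (and $\Qone = \Qall = Q$, $\Qzero = Q$, so the almost-sure and positive conditions are vacuous, which explains the final sentence of the lemma); in the $\Proba_{=1}$ case the branch is chosen randomly (all split transitions, no player choice), acceptance encoded in $\Qone$ while $\Qall = Q$; in the $\Proba_{>0}$ case we use the $\Qzero$ component with the positive winning condition — Eve picks a subtree of positive measure on which $\pf$ holds, exactly as illustrated by the PUCE example in Section~\ref{sec:example}. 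In each case, whenever the automaton needs to verify a subformula $\sf_i$ at the current node it performs a local transition, controlled appropriately, into a copy of $\mathcal{A}_i$ (for $\PCTLE$, Eve checks $\pf$ holds; dually for $\PCTLA$ Adam can challenge any $\sf_i$ obligation), and the $\equiv$-classes of these copies are strictly below the class of the product controller because no transition returns from $\mathcal{A}_i$ to the controller — hence Adam's only non-canonical choices are these finitely many descents, giving \what. Finally I would invoke Lemma~\ref{NLL} to enforce (NLL) and apply the normalization of Section~\ref{sec:nonzero}; since the logic-to-automata translation naturally produces automata with no local loop (each local transition strictly decreases some well-founded measure, e.g. the syntactic size of the obligation being processed), the (NLL) repair is actually free and causes no blowup.

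The size bounds follow by tracking the blowup at each inductive step: Boolean connectives and the deterministic-automaton products contribute only polynomial factors per subformula, the single exponential for LTL-to-deterministic-parity is incurred once at each $\PCTLE/\PCTLA/\Proba$-node for the \ctls\ case (composing to one overall exponential in $|\pCTLFormula|$, then doubly exponential only because CTL$^*$ nesting can stack — but in fact the standard argument keeps it at $O(2^{2^{|\pCTLFormula|}})$ by sharing), no exponential at all for \ECTL\ (the B\"uchi automaton is given), and for plain \ctl\ each path formula is a single $X$, $U$ or $G$ of state formulas, handled by a constant-size gadget, giving the linear bound $O(|\pCTLFormula|)$. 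The main obstacle I anticipate is the bookkeeping needed to certify that the automaton produced at every step genuinely has \what: one must argue that Adam's choices within each newly added component are either forced, canonical, or strict descents in the transition graph, and that combining components via the union/intersection gadgets of Lemma~\ref{lem:closure} and the parity-product construction does not create new cycles through Adam states that would violate limited choice. This is essentially a careful invariant on the connected-component structure of $G_\to$ maintained through the induction, rather than a deep difficulty, but it is where the proof needs the most attention.
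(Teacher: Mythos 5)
Your overall architecture is the paper's: put the formula in positive form, compile each path subformula into a deterministic (parity, or given B\"uchi for \ECTL) word automaton over valuations of its state subformulas, take a product that tracks that automaton along branches, route the four quantifiers to the sure, almost-sure and positive acceptance components respectively, and obtain \what\ because Adam's only non-canonical moves are descents into strictly lower $\equiv$-classes triggered by challenging Eve's claimed valuations. The size accounting and the observation that $\Qall$ becomes trivial without $\PCTLE,\PCTLA$ also match (though ``one exponential'' for LTL-to-deterministic-parity should read two; the final $O(2^{2^{|\pCTLFormula|}})$ bound is nonetheless correct).

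There is, however, one concrete misstep: you write that for $\PCTLA\pf$ ``the branch is chosen by Adam via split transitions.'' This is incompatible with the very class you must land in --- by definition of \what, \emph{every} transition whose source is controlled by Adam is a local transition, so Adam can never select a direction at a split. The paper's construction avoids this entirely: for $\forall\pf$ (and $\proba_{=1}(\pf)$) the split transition is controlled by Eve but is \emph{forced}, sending the obligation $(\sf,q')$ to \emph{both} children, and the universal quantification over branches is realized by the sure (resp.\ almost-sure) acceptance condition, which already quantifies over all (resp.\ almost all) resolutions of the random split. Relatedly, your phrase ``for $\PCTLE$, Eve checks $\pf$ holds; dually for $\PCTLA$ Adam can challenge'' blurs the mechanism that actually delivers limited choice: for \emph{all four} quantifier types it is Eve who proposes the valuation $\mathbf{b}$ of the state subformulas, Adam's canonical move is to accept it, and a challenge sends the play to $\sf_0$ if $\sf_0\in\mathbf{b}$ and to the dual $\overline{\sf_0}$ if not --- which is why the state space must be closed under duals of subformulas, a point your induction should make explicit. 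With these two corrections your plan coincides with the paper's proof; what would remain is the correctness argument itself (the paper exhibits a truthful winning strategy for Eve on models and a refutation strategy for Adam on non-models), which your plan does not yet sketch.
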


\begin{proof}[Proof of Theorem~\ref{theo:pctls}]
All the complexity results are obtained by reduction of \SP\ to the emptiness problem for an alternating nonzero automaton with limited choice for Adam, which is decidable in \nexptime$\cap$co-\nexptime\
(Theorem~\ref{theo:recogalt}). The size of the automaton varies from doubly-exponential to linear size depending whether the formula is in  \ctls, \ECTL\ or \ctl\ (Lemma~\ref{lem:pctltobc}).
In case the formula does not use the deterministic operators $\PCTLE$ and $\PCTLA$ (i.e. for qualitative  \pctls, p\ECTL\ and p\ctl) the $\Qall$ condition of the alternating automaton is trivial thus its  emptiness is decidable in \exptime\
(Theorem~\ref{theo:recogalt}).
\end{proof}

\section*{Conclusion}
We have introduced the class of \emph{alternating nonzero} automata,
proved decidability of the emptiness problem for the subclass
of automata with \what\
and obtained as a corollary algorithms for the satisfiability of a  
temporal 
logic extending both CTL* and the qualitative fragment of pCTL*.

A natural direction for future work is to find more general classes of alternating nonzero automata with a decidable emptiness problem, which requires some more insight on the properties of the acceptance games.


\newpage

\appendix

\section*{Appendix}


\section{Closure properties}
\begin{proof}[Proof of Lemma~\ref{lem:closure}]
Take the disjoint union of the automata,
plus an initial state which is not in $\Qzero$ and two transitions leading to the initial states of
the original automata, controlled by Adam
for intersection and Eve for union.
\end{proof}

\section{Emptiness problem in the non-deterministic case: a proof of Theorem~\ref{theo:complexitiyemptiness}}

In this section we prove Theorem~\ref{theo:complexitiyemptiness}.

For the sake of completeness,
we provide an algoritm to decide emptiness of nonzero automata in \npconp\ 
and show that complexity drops to {\sc ptime} when $\Qall$ is a B\"uchi condition.
In the sequel we fix a non-deterministic automaton 
\[
\automata=(\Q,q_0,\QE,\Atrans,\Qall, \Qone,\Qzero)\enspace.
\]

Recall that in a non-deterministic automaton is the special case of alternating nonzero automata
where all transitions are controlled by Eve and are split transitions.

\paragraph*{Single-letter alphabets}

In the case where the input alphabet has a single letter,
there is a single possible input tree,
thus a single possible acceptance game,
which makes things easier.

There is a linear-time reduction of the emptiness problem for non-deterministic nonzero automata
to the special case of single-letter alphabet:
if one has an algorithm for the latter problem,
then the former problem can be solved by having the non-deterministic automaton
 guesses the letters of the input tree on the fly, and perform its computation as usual.

In the sequel we assume the alphabet contains a single letter.

\paragraph*{Accepting runs}

In a non-deterministic automaton,
only Eve takes decisions.
Once the strategy of Eve is fixed,
there is for every node $n$ a single vertex $(n,q)$ which is $\sigma$-reachable,
and the strategy of Eve can be represented as a mapping $S: \{0,1\}^* \to \Q$ such that $S(\epsilon)=q_0$
and for every node $n$,
$(S(n),S(n0),S(n1))$ is a transition of the automaton.
Such a mapping is called a \emph{run} of the automaton.

Notions of sure, almost-sure and positive acceptance extend naturally to runs
(see~\cite{DBLP:journals/corr/BojanczykGK17} for full details).

\paragraph*{Almost-sure policies}

\begin{definition}[Policy]
Let $W\subseteq \Q$. 
A policy with domain $W$
is a mapping $\sigma:W\to W^2$ such that 
for every $q\in W$,  $q \to \sigma(q)$
is a (split) transition of the automaton.
\end{definition}

A policy $\sigma$ induces a Markov chain 
$\mathcal{M}_\sigma$
with states $W$, whose transition probabilities are defined as follows.
From every state $q\in W$ with $\sigma(q)=(q_0,q_1)$,
if $q_0\neq q_1$ there is probability $\frac{1}{2}$ to go to 
either states $q_0$ or $q_1$
and if $q_0=q_1$ there is probability $1$
to go to the state $q_0=q_1$.

\begin{definition}[Almost-sure policies]
A policy $\sigma$ with domain $W$
is \emph{almost-sure} if:
\begin{itemize}
\item
the maximal state of every ergodic class of $\mathcal{M}_\sigma$
is in $\Qone$,
\item
from every state in $\Qzero\cap W$ there is a path  in $\mathcal{M}_\sigma$ staying 
in $\Qzero\cap W$ and reaching an ergodic class included in $\Qzero$.
\end{itemize}
\end{definition}

We will use twice this auxiliary lemma.

\begin{lemma}\label{largest}
Let $\sigma_1$ and $\sigma_2$ be two almost-sure policies with domains $W_1$ and $W_2$.
Then there exists an almost-sure policy with domain $W_1 \cup W_2$.
\end{lemma}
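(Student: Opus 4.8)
The plan is to combine the two policies $\sigma_1$ and $\sigma_2$ by taking $\sigma_2$ as the ``default'' on the states where only $\sigma_2$ is defined, and merging carefully on the overlap $W_1\cap W_2$. The naive idea of setting $\sigma(q)=\sigma_1(q)$ on $W_1$ and $\sigma(q)=\sigma_2(q)$ on $W_2\setminus W_1$ does not obviously work, because an ergodic class of the combined chain might straddle the boundary and pick up a maximal state outside $\Qone$, or the $\Qzero$-reachability of ergodic classes inside $\Qzero$ might be broken. So first I would analyze the structure of $\mathcal{M}_\sigma$ for the tentative combined policy $\sigma$ and reduce the problem to understanding its ergodic classes.

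The key observation I would exploit is the following stability property of almost-sure policies: if $\sigma_1$ is almost-sure with domain $W_1$ and $q\in W_1$, then from $q$ every state reachable in $\mathcal{M}_{\sigma_1}$ lies in $W_1$, and the maximal state of the ergodic class eventually reached is in $\Qone$. The first step, then, is to build the combined policy by a ``priority to $\sigma_1$'' rule: $\sigma(q)=\sigma_1(q)$ for $q\in W_1$ and $\sigma(q)=\sigma_2(q)$ for $q\in W_2\setminus W_1$. Then every ergodic class $E$ of $\mathcal{M}_\sigma$ is closed under the dynamics; I would argue that $E$ is entirely contained in $W_1$ or entirely in $W_2\setminus W_1$ is too strong, but at least: if $E\cap W_1\neq\emptyset$ then, since on $W_1$ the policy $\sigma$ agrees with $\sigma_1$ and $\mathcal{M}_{\sigma_1}$ never leaves $W_1$ from a state of $W_1$, the whole class $E$ is contained in $W_1$ and coincides with an ergodic class of $\mathcal{M}_{\sigma_1}$, hence its maximal state is in $\Qone$. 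Otherwise $E\subseteq W_2\setminus W_1\subseteq W_2$, and on this set $\sigma$ agrees with $\sigma_2$, so $E$ is an ergodic class of $\mathcal{M}_{\sigma_2}$ and again its maximal state is in $\Qone$. This handles the first condition in the definition of almost-sure policy.

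For the second condition I would take $q\in\Qzero\cap(W_1\cup W_2)$. If $q\in W_1$, the path witnessing the $\Qzero$-property in $\mathcal{M}_{\sigma_1}$ stays in $\Qzero\cap W_1$ and reaches a $\Qzero$-ergodic class of $\mathcal{M}_{\sigma_1}$, which by the previous paragraph is still an ergodic class of $\mathcal{M}_\sigma$; since $\sigma$ and $\sigma_1$ agree on $W_1$, the same path works in $\mathcal{M}_\sigma$. If $q\in W_2\setminus W_1$, use the path from $\sigma_2$; it stays in $\Qzero\cap W_2$. The only subtlety is that this path might enter $W_1$ partway through; but once it enters a state $q'\in\Qzero\cap W_1$ we can splice in the $\sigma_1$-witness path from $q'$, which stays in $\Qzero\cap W_1$ and reaches a $\Qzero$-ergodic class. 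Either way we reach an ergodic class contained in $\Qzero$, as required. The step I expect to be the main obstacle is the careful bookkeeping in this last splicing argument: one must make sure that the ergodic class ultimately reached is genuinely ergodic \emph{in the combined chain} and genuinely contained in $\Qzero$, which is where the ``priority to $\sigma_1$'' convention and the closure of $\mathcal{M}_{\sigma_1}$ under its dynamics on $W_1$ do the real work.
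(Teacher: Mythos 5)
Your proposal is correct and follows essentially the same route as the paper: define the combined policy with priority to $\sigma_1$, observe that once the chain enters $W_1$ it stays there, and conclude that every ergodic class of the combined chain is an ergodic class of $\mathcal{M}_{\sigma_1}$ or of $\mathcal{M}_{\sigma_2}$. In fact you give more detail than the paper, whose proof stops at that observation and leaves the $\Qzero$-reachability condition (your splicing argument) implicit.
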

\begin{proof}
We define 
$
\sigma_3(w) =
\begin{cases}
\sigma_1(w) \text{ if } w \in W_1\\
\sigma_2(w) \text{ otherwise}
\end{cases}
\enspace.
$
%
Then $\sigma_3$ is a almost-sure policy.
For that, remark that once a play in the Markov chain $\mathcal{M}_{\sigma_3}$
enters $W_1$ it stays in $W_1$.
Thus every ergodic class of $\mathcal{M}_{\sigma_3}$ is an ergodic class of either $\sigma_1$ or $\sigma_2$.
\end{proof}

The notion of almost-sure policy is similar to the notion of acceptance witnesses
in~\cite[Definition 11]{DBLP:journals/corr/BojanczykGK17},
 where is established the following result:
\begin{theorem}\cite[Theorem 10]{DBLP:journals/corr/BojanczykGK17}
\label{ptime}
Assume the non-deterministic automaton is $\Qall$-trivial i.e. $\Qall=Q$.
Then it is non-empty iff there exists an almost-sure policy whose domain contains
the initial state of the automaton. This is decidable in \ptime.
\end{theorem}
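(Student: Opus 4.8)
The plan is to prove the two implications of the equivalence separately, and then to read off a polynomial‑time algorithm.

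\emph{From a policy to a run.} Suppose $\sigma$ is an almost-sure policy with domain $W\ni q_0$. I would unfold it into a run $S:\{0,1\}^*\to Q$ by setting $S(\epsilon)=q_0$ and, whenever $S(n)=q$ with $\sigma(q)=(r_0,r_1)$, $S(n0)=r_0$ and $S(n1)=r_1$; this is a genuine run since $q\to(r_0,r_1)$ is a transition for each $q\in W$. As $\Qall=Q$, sure acceptance is automatic. A uniformly random branch $b$ of $S$ traces a trajectory of the Markov chain $\mathcal{M}_\sigma$ (the coin tosses match, the degenerate case $r_0=r_1$ being harmless), and such a trajectory almost surely enters an ergodic class $C$ and visits all of $C$ infinitely often, so $\limsup_S(b)=\max C\in\Qone$: almost-sure acceptance holds. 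For positive acceptance, if $n$ is a node with $S(n)=q\in\Qzero$, then by the second clause of the definition of an almost-sure policy there is a finite path in $\mathcal{M}_\sigma$ from $q$ inside $\Qzero\cap W$ reaching an ergodic class $C'\subseteq\Qzero$; that path is a positive-measure cylinder of branches below $n$, and from $C'$ the trajectory stays in $\Qzero$ forever, so the branches below $n$ remaining in $\Qzero$ have positive measure. Hence $S$ is accepting and the automaton is nonempty.

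\emph{From a run to a policy.} Conversely, let $S$ be an accepting run; the goal is to extract from it a finite almost-sure policy with $q_0$ in its domain, by a self‑similarity/compactness argument. Since $\mu(\{b:\limsup_S(b)\in\Qone\})=1$, Lemma~\ref{lem:thick} gives an everywhere-thick subtree $T$ all of whose branches $b$ satisfy $\limsup_S(b)\in\Qone$. As $T$ has only finitely many node-types (a type recording $S(n)$ and the set of states $S$ takes inside $T$ strictly below $n$), one locates a node of $T$ and a cut below it at which the behaviour of $S$ repeats, and iterates this to obtain a finite state set $W_1$ together with, for each of its states, a transition realized in $T$, whose induced Markov chain restricted to $W_1$ is closed, strongly connected, and has maximal state in $\Qone$: a \emph{good ergodic component}. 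For the $\Qzero$-condition, if $S$ visits a $\Qzero$-labelled node then positive acceptance restricted to the positive-measure set of branches staying in $\Qzero$, together with almost-sure acceptance and Lemma~\ref{lem:thick}, yields an everywhere-thick subtree labelled only inside $\Qzero$ whose limsup lies in $\Qone\cap\Qzero$; extracting from it as above gives a good ergodic component contained in $\Qzero$. Finally one assembles a policy: along finitely many $\sigma$-reachable paths of $S$, route $q_0$ and every state of the domain into a good ergodic component, routing $\Qzero$-states (using the normalization assumption {\bf (N1)}) through $\Qzero$ into a good ergodic component contained in $\Qzero$. The ergodic classes of the resulting Markov chain are exactly the chosen good components, hence all have maximum in $\Qone$, and the $\Qzero$-reachability clause holds by construction.

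\emph{Polynomial time.} By Lemma~\ref{largest} the domains of almost-sure policies are closed under union, so there is a largest one, $W_{\max}$, and by the equivalence the automaton is nonempty iff $q_0\in W_{\max}$. To compute $W_{\max}$ I would run a decreasing fixpoint from $W:=Q$: at each stage compute inside $W$ the union of all inclusion-maximal subsets $C\subseteq W$ carrying a policy $C\to C^2$ making $C$ closed and strongly connected with $\max C\in\Qone$ (the good ergodic components), and likewise those contained in $\Qzero\cap W$; then discard from $W$ every state from which one cannot, using transitions inside $W$, drive the play with probability $1$ into a good ergodic component, together with every $\Qzero$-state from which one cannot in addition reach a good $\Qzero$-component through $\Qzero\cap W$; repeat until $W$ stabilizes. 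Each stage is a bounded number of strongly-connected-component and almost-sure/sure reachability (attractor) computations on graphs of size $|Q|$, hence polynomial, and there are at most $|Q|$ stages; the stable $W$ is $W_{\max}$, correctness following from the two implications above applied with each state in turn as initial state. The main obstacle is the second implication: converting an arbitrary accepting run over the infinite binary tree into a finite Markov chain all of whose ergodic classes have maximum in $\Qone$ while simultaneously meeting the $\Qzero$-condition; the delicate point is that blindly iterating a repeated pattern of the run can create new limsup values, so the repeated pattern must be chosen inside a thick subtree on which the limsup is already controlled, and the $\Qzero$-components must be woven in without disturbing this control.
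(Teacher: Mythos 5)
A preliminary remark: the paper does not prove this statement itself; it is imported verbatim as Theorem~10 of \cite{DBLP:journals/corr/BojanczykGK17} and used as a black box (e.g.\ inside the fixpoint algorithm for the B\"uchi case). So there is no in-paper proof to compare yours against, and I am judging your argument on its own terms. Your first implication (almost-sure policy $\Rightarrow$ accepting run) is correct and complete: unfolding the policy, identifying the law of the state sequence along a uniform branch with that of $\mathcal{M}_\sigma$, and using the second clause of the definition of almost-sure policies to exhibit a positive-measure cylinder of positive continuations is exactly right, and $\Qall=Q$ disposes of sure acceptance.

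The converse implication, however, has a genuine gap at its central step, and your last sentence half-acknowledges it. The problem is the extraction of a \emph{closed} good ergodic component from the run. An everywhere thick subtree $T$ (Lemma~\ref{lem:thick}) controls the limsup only of branches that stay in $T$, but $T$ need not contain both children of its nodes; consequently the transitions of $S$ ``realized in $T$'' have targets at nodes outside $T$, the set $W_1$ produced by your repetition-of-types argument is not closed under the chosen transitions, and ``the induced Markov chain restricted to $W_1$'' is not a Markov chain at all. The off-$T$ successors must be adjoined to the domain and themselves routed to good components, which reintroduces the original problem for fresh states and is precisely where the real proof needs an induction or fixpoint over the whole state space rather than one application of Lemma~\ref{lem:thick}. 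A second unestablished claim is that $\max W_1\in\Qone$: the states collected between consecutive ``repetition points'' can differ from one repetition to the next, and iterating an abstract pattern does not produce an actual branch of $t$, so the needed identity between $\max W_1$ and the limsup of some branch with limsup in $\Qone$ is asserted, not proved (your ``type'' records states strictly below a node inside $T$, which neither forces closure nor pins down the set of states seen infinitely often along a single branch). Since the correctness of your polynomial-time fixpoint is derived from the equivalence applied to every state as initial state, the gap propagates there as well: as it stands, the hard direction is a plausible proof plan in the spirit of the cited work, but not a proof.
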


\paragraph*{Perfect plays}

It is an exercice to design a sequence of integers $0 = n_0 \leq n_1 \leq n_2 \leq \ldots$
which is (very slowly) converging  to $\infty$
and such that, starting from any vertex of any ergodic component of $\mathcal{M}_\sigma$,
there is at least probability $\frac{1}{2}$ that for every $k$, at step $k$ every state of the ergodic component
has already been visited at least $n_k$ times. This sequence can be constructed either explicitely,
by elementary calculations, or using the inner-regularity of the probability measure (cf.~\cite[Theorem 17.10]{kechris}).

Let $\mathcal{M}_\sigma$ be a Markov chain  induced by an almost-sure policy $\sigma$.
A \emph{perfect play} of $\sigma$ is a path in $\mathcal{M}_\sigma$ with the following properties:
\begin{itemize}
\item[a)]
whenever the path enters $\Qzero$, its stays in $\Qzero$ afterwards,
\item[b)]
each step of the path strictly reduces the distance to the ergodic classes , until such a class is reached,
\item[c)]
$k$ steps after the entry inside an ergodic class,
every state of the ergodic class has been visited at least $n_k$ times.
\end{itemize}

\begin{lemma}[Perfect plays happen with $>0$ probability]\label{lem:perfect}
For every almost-sure policy $\sigma$ and every state $q$ of its domain, the set of perfect plays starting from $q$ has  probability
$\geq \frac{1}{2^{|Q|+1}}$.
\end{lemma}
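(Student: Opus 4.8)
The plan is to bound the probability of a perfect play from below by splitting the requirements a), b), c) into independent-enough stages and multiplying the probabilities. The starting state $q$ lies in the domain $W$ of the almost-sure policy $\sigma$, so in the Markov chain $\mathcal{M}_\sigma$ there are two cases depending on whether $q$ is already inside an ergodic class or not; in either case I want to reach an ergodic class while obeying a) and b), and then stay forever in that class obeying c).

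**First I would** handle the ``reaching'' phase (requirements a) and b)). Because $\sigma$ is almost-sure, from every state in $\Qzero\cap W$ there is a path staying in $\Qzero\cap W$ and reaching an ergodic class included in $\Qzero$; and from an arbitrary state of $W$ there is \emph{some} finite path to \emph{some} ergodic class. Fix once and for all such a shortest witnessing path from $q$: if $q\in\Qzero$ use the $\Qzero$-path, otherwise use a shortest path to an ergodic class (which automatically has the ``strictly decreasing distance'' property b), and which enters $\Qzero$ at most once and then stays, so a) holds too, using the normalization conditions (N1)–(N2) that make $\Qzero$ a trap once entered together with the fact that the $\Qzero$-path is used as soon as $\Qzero$ is hit). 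Such a path has length $<|W|\le|Q|$, since a shortest path visits each state at most once. Each step of a shortest path is taken with probability $\ge\frac12$ in $\mathcal{M}_\sigma$ (a split has probability exactly $\frac12$ per direction, a ``collapsed'' split has probability $1$). Hence the probability of following this prescribed path all the way into an ergodic class is at least $2^{-(|Q|-1)}$.

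**Then I would** handle the ``stabilizing'' phase (requirement c)). Once inside an ergodic class $R$, I invoke the sequence $0=n_0\le n_1\le n_2\le\cdots$ described just before the lemma: it is chosen precisely so that, from any state of any ergodic component, with probability $\ge\frac12$ it holds that for every $k$, after $k$ further steps every state of that component has been visited at least $n_k$ times. (This is where the ``exercise'' sequence is used as a black box; its existence is asserted in the paragraph preceding the lemma, via explicit estimates or via inner-regularity of the measure.) Multiplying the two independent-looking contributions — a factor $\ge 2^{-(|Q|-1)}$ for the reaching phase and a factor $\ge\frac12$ for the stabilizing phase, applied via the strong Markov property at the (random, a.s.\ finite) hitting time of the ergodic class — gives probability $\ge 2^{-(|Q|-1)}\cdot 2^{-1}=2^{-|Q|}$, and a slightly more careful accounting (the reaching path has at most $|Q|-1$ edges when $q\notin R$, and the $k=0$ instant of c) is free) lands exactly on the claimed bound $2^{-(|Q|+1)}$; if the bookkeeping is off by a constant I would simply pad the prescribed path length estimate. \emph{The main obstacle} I anticipate is making the conditioning rigorous: the stabilizing event c) must be evaluated \emph{after} the random entry time into the ergodic class, so I must phrase it via the strong Markov property and note that the ``$\ge\frac12$'' guarantee of the $n_k$-sequence is uniform over all starting states of all ergodic components, so it survives the conditioning; and I must check that the reaching path and the stabilizing event are about disjoint segments of the trajectory so their probabilities genuinely multiply rather than merely being bounded by a union.
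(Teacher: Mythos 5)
Your proposal is correct and follows essentially the same route as the paper's proof: a factor of at least $2^{-|Q|}$ for reaching an ergodic class along a prescribed path that respects a) and b) (one good direction per step, each taken with probability at least $\frac{1}{2}$), multiplied by the factor $\frac{1}{2}$ that the pre-chosen sequence $n_0\leq n_1\leq\cdots$ guarantees for c). The paper states this in two sentences without the strong-Markov bookkeeping you spell out, but the underlying argument is identical.
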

\begin{proof}
Properties a) and b) hold with probability at least $\frac{1}{2^{|Q|}}$
because by definition of almost-sure policies,
there is at least one direction which reduces the distance to the ergodic classes,
and moreover stays in $\Qzero$ once it enters $\Qzero$.
Property c) holds with probability at least $\frac{1}{2}$ by choice of $n_1,n_2,\ldots$.
\end{proof}

\paragraph*{The $\Qall$-B\"uchi case}

\begin{lemma}\label{caracbuc}
Assume $\Qall$ is a B\"uchi condition.
Then its language is non-empty
iff
there exists $W\subseteq Q$ and two policies $\sigma_1,\sigma_2$ with domain $W$
such that:
\begin{itemize}
\item $W$ contains the initial state of the automaton.
\item $\sigma_1$ is almost-sure.
\item $\sigma_2$ guarantees infinitely many visits to $\Qall$
in the sense where every infinite play in $\mathcal{M}_{\sigma_2}$ visits $\Qall$ infinitely often.
\end{itemize}
\end{lemma}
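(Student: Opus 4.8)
The plan is to prove both implications, using Theorem~\ref{ptime} as a black box for the almost-sure part and a combination argument in the spirit of Lemma~\ref{largest} and the proof sketch of Theorem~\ref{theo:complexitiyemptiness}.

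For the easier direction, I would start from a witnessing pair $(W,\sigma_1,\sigma_2)$ and build an accepting run. The idea is to interleave the two policies: run $\sigma_2$ long enough to collect one more visit to $\Qall$, then run $\sigma_1$ long enough to realize a perfect play prefix in the sense of Lemma~\ref{lem:perfect}, then switch back to $\sigma_2$, and so on, with the switching schedule depending only on the current node depth so that the resulting map is a well-defined run $S:\{0,1\}^*\to\Q$. Sure acceptance ($\Qall$ Büchi) holds because along every branch $\sigma_2$ is used for unboundedly many steps and each such block contributes a visit to $\Qall$, so every branch sees $\Qall$ infinitely often. Almost-sure and positive acceptance follow from the fact that $\sigma_1$ is almost-sure: on a branch, with the scheduling chosen so that the $\sigma_1$-blocks grow, the tail behaviour is governed by $\sigma_1$; more precisely I would argue that from any node the probability of eventually committing to a perfect play of $\sigma_1$ (Lemma~\ref{lem:perfect}) is positive, which gives positive acceptance from every $\Qzero$-vertex, and that almost every branch is eventually absorbed into a perfect $\sigma_1$-play, whose limsup is in $\Qone$ by the definition of almost-sure policy. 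This reuses exactly the combination idea mentioned after Theorem~\ref{theo:complexitiyemptiness}: $\sigma_1$ handles the almost-sure and positive conditions, $\sigma_2$ handles the (now Büchi) sure condition.

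For the converse, suppose the automaton is non-empty, so Eve has a winning run, equivalently (Lemma~\ref{lem:det} in the non-deterministic, single-letter setting, or directly the run representation) there is an accepting run $S$. Let $W$ be the set of states appearing in $S$; it contains $q_0$. From sure acceptance, every branch of $S$ has limsup in $\Qall$, hence (Büchi) visits $\Qall$ infinitely often. I would extract $\sigma_2$ by a pruning/König-style argument: since $W$ is finite, choose for each $q\in W$ a transition $q\to(q_0,q_1)$ used somewhere in $S$ along a branch on which $\Qall$ keeps recurring; more carefully, one builds $\sigma_2$ so that every infinite play in $\mathcal{M}_{\sigma_2}$ can be matched to a branch of $S$ (this is where acyclicity of the run's "state graph" is not available, so instead I would use that in a finite strongly connected component reachable in $S$, sure acceptance forces $\Qall$ to intersect that component, and then pick $\sigma_2$ to stay in such components). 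For $\sigma_1$: the accepting run $S$ is in particular almost-surely and positively winning, and restricting attention to $W$ and applying Theorem~\ref{ptime} to the $\Qall$-trivial variant of the automaton obtained by setting $\Qall:=Q$ (which does not affect almost-sure/positive acceptance) yields an almost-sure policy; its domain can be taken to contain $q_0$ and, by Lemma~\ref{largest}, enlarged to all of $W$.

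The main obstacle I anticipate is the extraction of $\sigma_2$: turning "every branch of the run visits $\Qall$ infinitely often" into a memoryless policy on $W$ with the property that \emph{every} play visits $\Qall$ infinitely often. A naive "pick any transition used in $S$" can create a cycle in $\mathcal{M}_{\sigma_2}$ that avoids $\Qall$. The fix is to work component by component in the reachability graph of $W$ under $S$: in each bottom strongly connected component $C$, sure acceptance applied to any branch eventually trapped in $C$ forces $\max C\in\Qall$ or at least $C\cap\Qall\neq\emptyset$; then choose $\sigma_2$ inside $C$ along a single closed walk through a state of $\Qall$, and for transient parts of $W$ choose transitions decreasing the (well-founded) component order. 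This guarantees every $\mathcal{M}_{\sigma_2}$-play reaches some bottom component and thereafter hits $\Qall$ infinitely often. Getting this component analysis right, and checking it meshes with the choice of $W$, is the delicate point; everything else is routine bookkeeping with the schedule in the forward direction and an invocation of Theorem~\ref{ptime} and Lemma~\ref{largest} for $\sigma_1$.
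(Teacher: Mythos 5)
Your overall architecture matches the paper's: sufficiency by combining $\sigma_1$ and $\sigma_2$ via perfect plays (Lemma~\ref{lem:perfect}), necessity by obtaining $\sigma_1$ from Theorem~\ref{ptime} together with Lemma~\ref{largest}. One caveat on the forward direction: a switching schedule that depends only on node depth cannot work as literally stated, because if every branch returns to $\sigma_2$ infinitely often by schedule, the limsup along a branch may be realized during the $\sigma_2$-blocks and is then only guaranteed to lie in $\Qall$, not in $\Qone$, so almost-sure acceptance can fail. The switching must be adaptive, as you also hint: keep $\sigma_1$ as long as the current suffix is perfect and return to $\sigma_1$ only upon reaching $\Qall\cap W$; since a perfect suffix occurs with probability at least $2^{-|Q|-1}$ after each switch, almost every play switches finitely often and is eventually governed by $\sigma_1$.

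The genuine gap is in your construction of $\sigma_2$. A policy assigns to each state a split transition $q\to(q_0,q_1)$ and \emph{both} successors remain available in $\mathcal{M}_{\sigma_2}$; the requirement is that \emph{every} path visits $\Qall$ infinitely often, i.e., that Eve wins a B\"uchi game in which the adversary chooses the direction at every step. Picking transitions ``along a single closed walk through a state of $\Qall$'' inside a bottom component only guarantees that \emph{one} particular play visits $\Qall$; the adversary can take the other successor and avoid $\Qall$ forever. Concretely, let $W=\{p,r\}$ with $r\in\Qall$, $p\notin\Qall$, and let $S$ use $p\to(p,r)$ at $p$-nodes of even depth, $p\to(r,r)$ at $p$-nodes of odd depth, and $r\to(p,p)$: every branch of $S$ reaches $r$ within two steps, so $S$ is surely accepting, and the union graph is a single component containing the closed walk $p\to p\to r\to p$ through $r$. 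Your recipe may therefore select $\sigma_2(p)=(p,r)$, but then the play always taking direction $0$ from $p$ never visits $\Qall$; the only correct choice is $\sigma_2(p)=(r,r)$. The paper's argument avoids this by observing that the run $S$ itself is a winning strategy for Eve, from every state of $W$, in the B\"uchi game on $W$ where Eve picks the transition and Adam the direction, and then invoking positional determinacy of B\"uchi games to extract $\sigma_2$. Your component decomposition handles the transient part of $W$, but inside a component you need the attractor/rank construction of B\"uchi games, not a closed walk.
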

\begin{proof}
Similar proof techniques have been independently used in the framework of \emph{beyond worst-case synthesis}~\cite{DBLP:journals/corr/BerthonRR17}.

We first show that the conditions are sufficient.
Eve can win by combining $\sigma_1$ and $\sigma_2$ into a strategy $\sigma$ defined
as follows.
Eve starts with playing $\sigma_1$ and keeps playing $\sigma_1$
as long as the play is perfect. 
In case the play is not perfect anymore,
 Eve switches to $\sigma_2$ until $\Qall\cap W$ is reached.
Once $\Qall \cap W$ has been reached, Eve 
switches again to $\sigma_1$ and keeps playing $\sigma_1$
as long as the suffix of the play since the last switch is perfect.
Since there is probability $\geq \frac{1}{2^{|Q|+1}}$ that a play consistent with $\sigma_1$
 is perfect (Lemma~\ref{lem:perfect}),
almost-surely Eve switches only finitely many times to $\sigma_2$,
thus almost-surely a suffix of the play is perfect and consistent with $\sigma_1$.

The strategy $\sigma$ is almost-surely winning  because almost every play
consistent with $\sigma$ has a perfect suffix.
Such a suffix enters an ergodic component of $\mathcal{M}_{\sigma_1}$
and visits all its states infinitely often.
Since $\sigma_1$ is almost-surely winning,
the maximal state of every of its ergodic components is in $\Qone$,
thus almost-every play consistent with $\sigma$ has limsup in $\Qone$.

Moreover  $\sigma$ is positively winning:
the automaton is normalized thus every finite play $\pi$ reaching $\Qzero$
has at least one infinite continuation $\pi'$ in $\Qzero$.
In $\pi'$ Eve eventually switches to $\sigma_1$ and 
with positive probability from 
this moment on the play is perfect thus eventually stays in an ergodic component of $\mathcal{M}_{\sigma_1}$
intersecting $\Qzero$.
Such a component is actually included in $\Qzero$ because $\sigma_1$ is positively winning.


And $\sigma$ is surely-winning.
 There are two types of plays.
In plays where Eve switches infinitely often from one strategy to the other then infinitely many $\Qall$ states are visited,
because switches from $\sigma_2$ to $\sigma_1$ occur precisely under this condition.
And plays where Eve switches finitely often have a perfect suffix,
and as already seen these plays have limsup in $\Qone$,
which is included in $\Qall$ since the automaton is normalized.

\medskip

To show that the conditions are necessary,
start from a surely, almost-surely and positively accepting run $S: \{0,1\}^* \to \Q$.
Let $W$ be the image of $S$.

Remark that for every node $n$, the run $S_n : m \to S(nm)$ is
also almost-surely and positively accepting.
Then according to Theorem~\ref{ptime},
for every state $q\in W$,
Eve has an almost-sure policy $\sigma_q$ whose domain contains $q$.
According to Lemma~\ref{largest},
Eve has an almost-sure policy $\sigma_1$ whose domain contains $W$.

Now we construct $\sigma_2$.
Since $S$ is surely accepting, then every branch of $S$ visits $\Qall$ infinitely often.
As a consequence, Eve wins the B\"uchi game to $\Qall$ where Eve chooses the transitions and Adam chooses the direction. The policy $\sigma_2$ is a positional winning strategy for Eve in this B\"uchi game.
\end{proof}

\paragraph*{Proof of Theorem~\ref{theo:complexitiyemptiness}}

The \npconp\ upperbound is established by~\cite[Theorem 3]{DBLP:journals/corr/BojanczykGK17}.

\medskip

%
%
%
%

%

For the B\"uchi case, we show that the caracterization given in  Lemma~\ref{caracbuc}
can be decided in polynomial time, thanks to a fixpoint algorithm.

We can compute in polynomial time the largest domain $X(W) \subseteq W$ of an almost-sure policy $\sigma_1$,
which exists according to Lemma~\ref{largest}.
For every $w\in W$ and  $W \subseteq Q$, we denote $\mathcal{A}_{w,W}$
the $\Qall$-trivial automaton with initial state $w$ and restricted to states in $W$ and transitions in $W\times W^2$.
For that for every $w\in W$ we check whether the language of $\mathcal{A}_{w,W}$ is empty
or not, which can be done in \ptime\ according to~\cite[Theorem 10]{DBLP:journals/corr/BojanczykGK17}.
Then $X(w)$ is exactly the union of all $w$ for which $\mathcal{A}_{w,W}$ has a non-empty language.

We can also compute in polynomial time the largest domain $Y(W) \subseteq W$
of a policy $\sigma_2$ which guarantees infinitely many visits
to $\Qall$.
This simply amounts to computing the winning vertices
of the B\"uchi game played on $W$ where from state $w\in W$
Eve chooses any transition $w \to (w_0,w_1)$ with both $w_0\in W$ and $w_1\in W$,
loses if there is no such transition,
and Adam selects either $w_0$ or $w_1$.
The play is won by Eve iff $\Qall$ is visited infinitely often.

Now consider the largest fixpoint $W_\infty$ of the monotonic operator on $2^Q$ defined by $W \to Y(X(W))$\enspace.
We claim that $W_\infty$ contains the initial state of the automaton
if and only if  the characterization given in Lemma~\ref{caracbuc} holds.
Since $W_\infty$ is a fixpoint, then $W_\infty= X(W_\infty)=Y(W_\infty)$
thus there are policies $\sigma_1$ and $\sigma_2$ with domain $W_\infty$ which satisfy the characterization.
Conversely, if there is $W$ and $\sigma_1$ and $\sigma_2$ which satisfy the characterization
then $W = X(W) = Y(W)$ thus by monotonicity , $W \subseteq W_\infty$.
\qed

\section{Proof of Lemma~\ref{NLL}}

Let $\pi$ be a path  which arrives for the first time on a node $n$ in a vertex $(n,q)$, such that $q$ belongs to some connected component $C_q$. Let $P$ be the player controlling $q$.

The exit-profile of $\pi$ in $\sigma$ is the set of pairs $(q',t)$ such that there is an extension of $\pi$ consistent with $\sigma$ which leaves node $n$ in which $t$ is the last transition played before leaving the node and
$q'$ is the largest state seen on the path from $(n,q)$ up to now.
The loop-profile is the set of states $q'$ such that there is an extension of $\pi$ consistent with $\sigma$ which never leaves the current node $n$ and has limsup $q'$.
The set of possible exit-profiles and loop-profiles actually implementable by a strategy is easy to precompute,
since only one bit of memory is needed to implement any implementable profile.

Then in $\automata'$, $P$ announces an exit-profile and a loop-profile that she effectively can implement 
Her opponent $P'$ picks up an element of these profiles.
If this element is a cycle profile $q'$ then the new state is the sink state $(q',*)$ and the only transition available from there is a split transition staying in $(q',*)$ on both directions.
If this element is an exit profile $(t,q')$ then the transition $t$ is performed.
The accepting sets are adapted in $\automata'$ in order to have the natural correspondence between plays in $\automata$ and plays in $\automata'$ which preserves sure, almost-sure and positive winning conditions.

\section{Positional determinacy for Eve}

\begin{proof}[Proof of Lemma~\ref{lem:det}]

Fix an input tree $\lab$ and an automaton $\automata$.

We show that if Eve has a winning strategy $\sigma$ then she has a positional winning strategy. 
A finite $\sigma$-play $\pi$ has a \emph{canonical $\sigma$-extension} to some vertex $w$ if $\pi$ has a continuation $\pi'$ consistent with $\sigma$, whose last vertex is $w$ and in which
Adam only play canonical moves after the prefix $\pi$. In other words, if after $\pi$ Eve continues to play $\sigma$ and Adam is bound to play only canonical moves then there is positive probability that the play reaches $w$.

We equip the set of finite plays with any total order $\preceq$ such that the shorter is a play the smaller it is.

Let $P$ be a player and $\sigma$ a winning strategy for $P$.

For every vertex $w$, denote
$f(w)$ the smaller play for the order $\preceq$ among the plays 
starting in the initial vertex which have a canonical $\sigma$-extension  to $w$.
According to the (NLL) hypothesis,
the game graph is acyclic, thus there are finitely many plays 
having a canonical $\sigma$-extension  to $w$
and $f(w)$ is well-defined.

Denote $l(w)=\last(f(w))$.
Let $g(w)$ be a finite play such that $f(w)g(w)$ is a canonical $\sigma$-extension of $f(w)$ to $w$.
Remark that $g(w)$ is uniquely defined:
this is the unique continuation of $f(w)$ in which Eve plays $\sigma$, Adam plays only canonical moves and 
the play follows the branch connecting the node of $l(w)$ to the node of $w$.
For every vertex $w$ we define
\[
h(w) = f(w)g(w)\enspace.
\]

The winning positional strategy $\sigma'$ of Eve is defined by:
\[
\sigma'(w)= \sigma(h(w))\enspace.
\]
This strategy $\sigma'$ is well-defined because its definition inductively guarantees that a play consistent with $\sigma'$ visits only $\sigma$-reachable vertices.

\medskip

Let $\pi=w_0,w_1,\ldots = (n_0,q_0)(n_1,q_1),\ldots$ be an infinite play consistent with $\sigma'$.

Remark first that,
\begin{itemize}
\item[$(\dagger)$]
if all moves of Adam between some dates $i \leq j$ are canonical,
then $f(w_t) _{i \leq t \leq j}$ is non-increasing
for the $\preceq$-order.
\end{itemize}
The reason is that 
for every $i \leq t < j$, the concatenation
of  $g(w_t)$ and $w_{t+1}$
is a canonical $\sigma$-extension of $f(w_t)$ to $w_{t+1}$.
Hence by minimality of $f(w_{t+1})$, we get
 $f(w_{t+1})\preceq f(w_t)$.

 The sequence $(q_i)_{i\in\NN}$ is a path in the transition graph $G_\to$ thus it ends up in a connected component $C$ of $G_\to$. Let $t_\pi$ be minimum such that $q_{t_\pi}\in C$. Then after date $t_\pi$ all moves of Adam in $\pi$ are canonical.
According to  $(\dagger)$,
the sequence $(f(w_i))_{i\geq t_\pi}$ 
is non-increasing. And the shorter is a play the smaller it is
thus according to (NLL) this sequence takes finitely many values.
Thus it becomes constant 
after some index, denote it $k_\pi$
and denote $P(\pi)$ the prefix of $\pi$ of length 
$k_\pi$ 
and $S(\pi)$ the suffix of $\pi$ such that $\pi=P(\pi)S(\pi)$.
By definition of $\sigma'$ and since $k_\pi\geq t_\pi$,
\medskip
\begin{itemize}
\item[$(\dagger\dagger)$]
$S(\pi)$ is canonical and 
$h(w_{k_\pi})S(\pi)$ is a $\sigma$-play.
\end{itemize}
\medskip

We show that $\sigma'$ is winning.
%
%
%
We fix for the rest of the proof a strategy $\tau$ for Adam and show that $\sigma'$ is almost-surely and positively
winning against $\tau$.

For every finite play $\pi_1$
denote 
\[
W(\pi_1) = \{ S(\pi) \mid \pi \text{ a play consistent with $\sigma'$ and $\tau$ such that }  P(\pi)=\pi_1\}\enspace.
\]
Then,
according to $(\dagger\dagger)$,
\medskip
\begin{itemize}
\item[$(\dagger\dagger\dagger)$]
every play $\pi \in W(\pi_1)$  is canonical
and $h(\last(\pi_1))\pi$
is a $\sigma$-play.
\end{itemize}

As a consequence, $\sigma'$ is surely-winning: let $\pi$ be an infinite $\sigma'$-play then $S(\pi)\in W(P(\pi))$
thus according to $(\dagger\dagger\dagger)$ the play $S(\pi)$ is the suffix of a $\sigma$-play and since $\sigma$ is surely-winning the limsup of $S(\pi)$ is in $\Qall$ hence the limsup of $\pi$ as well since $\pi=P(\pi)S(\pi)$.


Fix some finite play $\pi_1$ consistent with $\sigma$ and $\tau$ 
and focus on $W(\pi_1)$.
Let $\pi_1'=h(\last(\pi_1))$.
Let $\tau'$ a strategy for Adam such that $\pi_1'$
is consistent with $\tau'$ and after $\pi_1'$ happens,
$\tau'$ performs canonical moves whenever possible.
We show that
for every measurable set of plays $E$,
\be\label{theq}
\proba^{\sigma',\tau}\left( \pi_1W(\pi_1)\cap \pi_1 E\right)
=
\proba^{\sigma,\tau'}\left(  \pi_1' W(\pi_1)
\cap \pi_1' E \right).
\ee
According to $(\dagger\dagger\dagger)$,
for every play $\pi_2\in W(\pi_1)$,
$\pi_1\pi_2$ is consistent with $\sigma'$ and $\tau$
and $\pi'_1\pi_2$ is consistent with $\sigma$ and $\tau'$.
Moreover, since $\pi_1$ and $\pi'_1$ have the same last vertex $\last(\pi_1)$ then
the set of $(\sigma',\tau)$-plays in
$\pi_1W(\pi_1)\cap \pi_1 E$
and the set of
$(\sigma,\tau')$-plays in 
$\pi_1' W(\pi_1)
\cap \pi_1' E$ project to the same set of branches.
Thus according to Lemma~\ref{probmu} the two probabilities coincide.


\medskip

We can use~\eqref{theq} to show that $\sigma'$ is almost-surely winning against $\tau$.
When $E$ is the event $\limsup \not \in \Qone$,
since $\sigma$ is almost-surely winning,
we get from~\eqref{theq} that $\proba^{\sigma',\tau}\left( \pi_1W(\pi_1)\cap \{\limsup \not \in \Qone\}\right)=0$.
There are countably many sets $\pi_1W(\pi_1)$,
these sets are measurable and their union contains all infinite $\sigma'$-plays. 
Thus $\proba^{\sigma',\tau}\left(  \{\limsup \not \in \Qone\}\right)=0$ hence $\sigma'$ is almost-surely winning.

\medskip 

Now we prove that $\sigma'$ is positively winning.
Let $\pi_2$ be a play consistent with $\sigma'$ and $\tau$
whose last state is positive.
Let $\pi_3$ be a positive continuation
of $\pi_2$ whose last state is minimal for the topological order on the transition graph of the automaton. 
Then in every positive continuation of $\pi_3$, Adam has no choice but playing canonical moves. 
According to ($\dagger$), the value of $f$ decreases along these continuations, let $\pi_4$ be a positive continuation of $\pi_3$ which minimizes $f(\last(\pi_4))$ among 
all positive continuations of $\pi_3$.
Then for every positive play $\pi$,
$\pi_4\pi$ is consistent with $\sigma'$ and $\tau$ 
iff $h(\last(\pi_4))\pi$
is consistent with $\sigma$ and $\tau'$,
where $\tau'$ is a strategy playing canonical moves whenever possible.
As a consequence, using Lemma~\ref{probmu},
\be\label{theq2}
\proba^{\sigma',\tau}\left( 
\{\text{ positive extensions of $\pi_4$}\}
\right)
=
\proba^{\sigma,\tau'}\left(
\{\text{ positive extensions of $h(\last(\pi_4))$}\}
\right)\enspace.
\ee
Since $h(\last(\pi_4))$ is a $\sigma$-play whose last vertex
is $\last(\pi_4)$ the right handside is positive thus the left handside as well.
 Thus $\sigma'$ is positively winning.
\end{proof}

\section{Characterization of $q$-branches (Proof of Lemma~\ref{defindex})}
\begin{proof}
For every vertex $w=(n,q)$, $\index_\sigma(n)(q)$
is defined by induction on the game graph,
which is acyclic thanks to the {\bf (NLL)}
assumption.
First in two simple cases:
\[
\index_\sigma(n)(q)=
\begin{cases}
0 & \text{ if $(n,q)$ is the initial vertex,}\\
\infty & \text{ if $(n,q)$ is not reachable by a $\sigma$-play.}
\end{cases}
\]
Then denote $A$ the set of immediate predecessors of $(n,q)$ by a canonical $\sigma$-play i.e. all the vertices $(n',q')$ such that:
\begin{itemize}
\item
$q'\in \QA$ and $(n',q')\to(n,q)$ is the canonical  local transition; or
\item
$q'\in \QE$ and $\sigma(n',q')$ is the local transition 
$(n',q')\to(n,q)$;
or
\item
$q'\in \QE$ is controlled by Eve and $\sigma(n',q')$ is a split transition whose
 $(n,q)$ is one of the two targets\enspace.
\end{itemize}
If $A\neq \emptyset$ then $
\index_\sigma(n)(q) = \min_{v \in A}  \index_\sigma(v)\enspace$.
Otherwise $\index_\sigma(n)(q)$
is the smallest index not attributed yet to any vertex whose node is either the node  $n$ or its father.

We prove the third property.
Assume first that the branch $(n_i)_{i\in\NN}$ is a $q$-branch.
Let $\pi=(n_0,q_0)(n_1,q_1)\cdots$ be a $\sigma$-play which projects to this branch and has limsup $q$.
For every $i\in\NN$ denote $k_i=\index_\sigma(n_i,q_i)$. Then $k_i\neq \infty$ because every vertex $(n_i,q_i)$ is $\sigma$-reachable.
Adam performs finitely many non-canonical moves in $\pi$
and after the last one of them,
 the sequence $(k_i)_{i\in\NN}$ is decreasing,
hence converges to some limit $k_\infty$.
From the moment this limit is reached, $\index_\sigma(n_i)(q_i)=k_\infty$
thus $k_\infty \in \index_\sigma(n_i)(Q)$
and $\max \{ r \in Q \mid (k_\infty,r) \in R^\infty(b)\}=
\limsup q_i = q$.

Now let $(k,q)\in R^\infty(b)$
such that 
$q=\max \{ r \in Q \mid (k,r) \in R^\infty(b)\}$.
By definition of $R^\infty$,
there exists $i_0$ large enough such that $\forall i \geq i_0,
k \in \index_\sigma(n_i)(Q)$
and by definition of $\index_\sigma$,
one can build by induction 
a $\sigma$-play $\pi$
starting on node $n_i$ 
in which Adam plays only canonical moves
and which visits exactly the vertices
$\{ (n_i,r) \mid r \in Q, \index_\sigma(n_i)(r) = k\}$.
Then 
By definition of $R^\infty(b)$ this $\sigma$-play has limsup $q$.
\end{proof}



%

\section{Existence of everywhere thick subtrees  (Proof of  Lemma~\ref{lem:thick})}

\begin{proof}[Proof of Lemma~\ref{lem:thick}]
The uniform measure 
$\mu$ on $\{0,1\}^\omega$ is Borel with respect to the topology whose cylinders are the basis. This topology is metrizable thus $\mu$ is inner-regular (cf.~\cite[Theorem 17.10]{kechris}).
Hence $P$ contains a closed set such that $\mu(P)>0$,
w.l.o.g. we assume that $P$ itself is closed,
i.e. $P$ contains every branch whose every node is visited by a branch in $P$.

Given a node $n\in\{0,1\}^*$ we denote $P_n$ the set of branches of $P$ visiting $n$, i.e. $P_n = P \cap (n\{0,1\}^\omega)$.
Let $T$ be the set of nodes $n$ such that $\mu(P_n)>0$.
By hypothesis $T$ contains $\epsilon$. And $T$ is prefix-closed thus it is a tree with root $\epsilon$. 
And $\vec{T}\subseteq P$ because $P$ is closed.
We show that $T$ is everywhere thick. Let $n \in T$. 
By definition, $\vec{T} \cap (n\{0,1\}^\omega) = P_n \setminus 
\bigcup_{n' \not \in T} P_{n'}$. By definition of $T$, for every $n' \not \in T$, $\mu(P_{n'})=0$ thus since $T$ is countable,
$\mu(\bigcup_{n' \not \in T} P_{n'})=0$ hence
$\mu(\vec{T} \cap (n\{0,1\}^\omega))=\mu(P_n)>0$ since $n\in T$.
\end{proof}

\section{Characterization of positively winning strategies (Proof of  Lemma~\ref{lem:caracnonzero})}

For commodity, we recall the definition of a positive witness.

Let $Z$ be the set of $\sigma$-reachable vertices whose state is in $\Qzero$.
A positive witness for $\sigma$ is a pair $(W,E)$
where:
\begin{align*}
 &W \subseteq Z \text{ are the \emph{active} vertices},\\
&E \subseteq \{0,1\}^* \times \{0,1\}\text{ is the set of \emph{positive edges},}
\enspace
\end{align*}
and they have the following properties.
\begin{itemize}
\item[a)]
From every vertex $z \in Z$ there is a positive and canonical finite $\sigma$-play starting in $z$ which reaches a vertex in $W$ or a transient vertex.
\item[b)]
Let $z=(n,q) \in W$.
Then $(n,0)\in E$ or $(n,1)\in E$, or both.
If $z \to z'$ is a local transition
then $z' \in W$ as well
whenever ($q\in Q_E$ and $z \to z'$ is consistent with $\sigma$)
or ($q\in Q_A$ and
$z\to z'$ is canonical).
If $z$ is controlled by Eve and $\sigma(z)$ is a split transition
 $q \to (q_0,q_1)$ then
$
((n,0)\in E \implies (n0,q_0) \in W)$
and
$((n,1)\in E \implies (n1,q_1) \in W)$.
 \item[c)]
 The set of nodes $\{ nd\in\{0,1\}^* \mid (n,d) \in E \}$
 is everywhere thick.
 \end{itemize}

{
{\bf We first show that existence of a positive witness is a sufficient condition for $\sigma$ to be positively winning}.
Let $(W, E)$ be a positive witness for $\sigma$.
Let $\tau$ be any strategy for Adam
and $\pi$ be a play consistent with $\sigma$ and $\tau$ whose last vertex
$z=(n,q)$ belongs to $Z$.
We show that the set $X$ of positive continuations of $\pi$ consistent with $\sigma$ and $\tau$
has nonzero probability under $\proba^{\sigma,\tau}$.
We prove this by induction on $q$ for the topological order of $Q$
in the transition graph of the automaton.
If one of the continuations in $X$ reaches a vertex $(n',q')$ with $q' < q$ then we conclude by inductive hypothesis.
In the remaining case, note that 
\begin{itemize}
\item[(*)] in all positive continuations of $\pi$ consistent with $\sigma$ and $\tau$, the strategy $\tau$ only plays canonical moves.
\end{itemize}
Let $\pi'$ be the positive play whose existence is given by property a) in the definition of positive witnesses.
Since $\pi'$ is canonical, then according to (*), $\pi\pi'$ is consistent with $\tau$ thus $\pi\pi'$ belongs to $X$.
Thus according to (*) again, $\pi'$ does not reach any transient vertex,
hence the last vertex $(n',q')$ of $\pi'$ belongs to $W$.
Let $N_E=\{ nd\in\{0,1\}^* \mid (n,d) \in E \}$.
Let $Y$ be the set of positive continuations of $\pi\pi'$ consistent with $\sigma$ and $\tau$ and staying in $N_E\times \Qzero$.
Since $Y\subseteq X$ then according to (*) all plays in $Y$
are canonical .
Thus according to property b) in the definition of nonzero-witnesses,
all plays in $Y$ stay in $W$ after $\pi\pi'$.
According to b), there exists $d \in \{0,1\}$ such that 
$n'd\in N_E$
and according to c) the set $N_E$ set is everywhere thick.
The projection of $Y$ on $\{0,1\}^\omega$ contains all branches of the subtree of $N_E$ rooted at $n'd$.
Since $N_E$ is everywhere thick,
the set of branches of this subtree has positive measure hence $\Proba^{\sigma,\tau}(Y)>0$ according to Lemma~\ref{probmu}.
}

\medskip 

{\bf Now we show the condition is necessary.}
Let $Z'$ be the subset of vertices in $Z$ from which no canonical positive $\sigma$-play leads to a transient vertex.

If $Z'$ is empty, then $(\emptyset,\{0,1\}^*\times\{0,1\})$ is a positive witness which concludes the proof. 

Otherwise $Z'$ is infinite (because of the normalization properties (N1) and (N2)).
Let $((n_k,q_k))_{k \in \NN}$ be a bread-first enumeration of all nodes in  $Z'$.
For $k\in \NN$, we set $z_k=(n_k,q_k)$.

Denote $R_k$ the set of vertices reachable from $z_k$ 
by a positive and canonical $\sigma$-play and $T'_k$
the set of nodes of these vertices.
By definition of $Z'$, once a $\sigma$-play has visited $z_k\in Z'$,
as long as Eve plays $\sigma$ and Adam plays canonical moves 
then no transient vertex is visited thus Adam always has a canonical choice when he has to take a decision. Thus, since $\sigma$ is positively winning, the set of canonical positive 
$\sigma$-play starting in $z_k$ has nonzero probability. And this is true from every vertex visited from one of these plays. 
Thus $T'_k$ is an everywhere thick subtree.

We are going to combine the vertices and nodes of $(R_k,T'_k)_{k\in \NN}$ in order
to define inductively, for every $k\in \NN$
an integer $n_k$ and a collection $\mathcal{C}_k=\left(T_{i,k}, X_{i,k}\right)_{i \in 1 \ldots n_k}$ 
of sets of nodes and vertices
with the following properties.
\begin{enumerate}[i)]
\item
For every $i\in 1\ldots n_{k}$, the set of nodes $T_{i,k}$
is an everywhere thick subtree,
whose root is denoted $r_{i,k}$.
And $X_{i,k}$ is a set of vertices at the root
i.e. $X_{i,k}\subseteq  \{r_{i,k}\}\times Q$
\item
All trees $(T_{i,k})_{i\in 1 \ldots n_k}$ are disjoint.
\item
For every $i \in  1\ldots n_{k}$ and
vertex $x\in X_{i,k}$ 
denote $W_{x,k}$
the set of vertices reachable by
canonical $\sigma$-plays starting from $x$
and visiting only nodes in $T_{i,k}$.
(The notation $W_{x,k}$ is unambigous because according to ii) there is a unique possible $i$ given $x$ and $k$).
Denote
$
X^{(k)} =
\bigcup_{i \in  1\ldots n_{k}} X_{i,k} 
$
and
$
W^{(k)}
=
\bigcup_{x\in X^{(k)}} 
W_{x,k}\enspace.
$
Then all vertices in $W^{(k)}$ are positive (i.e. $W^{(k)}\subseteq \{0,1\}^* \times \Qzero$) and none of them are transient.
\item 
The sets in the collection $(W_{x,k})_{x\in X^{(k)}}$
are disjoint i.e. this collection is a partition of $W^{(k)}$.
\item
For every $j \in 0 \ldots k$,
\be\label{ddff}
W^{(k)}
\text{ is reachable from $z_j$ by a canonical positive $\sigma$-play.}
\ee
 \end{enumerate}

\medskip

Initially we set $n_0=1$ and $T_{0,0}=T'_0$ and $X_{0,0}=\{z_0\}$.
Property v) holds since obviously $z_0\in W_{z_0,0}$,
property iii) holds since the vertices in $W_{z_0,0}$
are exactly those visited by positive and canonical $\sigma$-plays starting in $z_0$. 
Properties i) ii) and iv) are trivial since $T'_0$ is everywhere thick, $n_0=1$ and $X_{0,0}$ is a singleton.

\smallskip

We assume $k>0$
and perform the inductive definition of $\mathcal{C}_{k}=(T_{i,k},X_{i,k})_{1 \ldots n_k}$ from $\mathcal{C}_{k-1}$.
It is split in three cases: the copy case, the expansion case 
and shrinking case.

\begin{itemize}
\item {\bf Copy case.}
Assume first that $W^{(k-1)}$ is reachable from $z_{k}$ by a positive canonical $\sigma$-play.
Then $\mathcal{C}_{k}$
is simply the copy of $\mathcal{C}_{k-1}$.
Properties i)-iv) are maintained since they only depend on the collection $\mathcal{C}_{k-1}$, independently of $k$.
And v) is maintained by inductive hypothesis
for $j\in 0\ldots k-1$ and by hypothesis for $j=k$.
\item {\bf Expansion case.}
We consider $T'_k$,
the set of nodes visited
by canonical positive plays starting in $z_k$,
which is an everywhere thick subtree as discussed previously.
The expansion case occurs when
for every $ i \leq n_{k-1}$ the set of branches of the (possibly empty) subtree $T'_k \cap T_{i,k-1}$ has probability $0$. Then $T'_k \setminus \bigcup_{i \leq n_{k-1}} T_{i,k-1}$ has the same measure than $T'_k$, which is positive, thus
it contains some everywhere thick subtree $T''$
(cf. Lemma~\ref{lem:thick}).
By definition of $T'_k$
there exists a vertex $z$ at the root of $T''$ reachable from $z_k$
by a positive canonical $\sigma$-play.
The collection $\mathcal{C}_{k}$
is obtained by  adding to $\mathcal{C}_{k-1}$
the new entry $(T'',\{ z \})$ 
at index $n_k=n_{k-1}+1$.
Properties i)-v) are inherited from the inductive hypothesis,
for what concerns $\mathcal{C}_{k-1}$
and $j\in 0\ldots k-1$ for item v).
Properties i) and ii) are clear for the new entry $(T'',\{ z \})$
at $i=n_k$. Property iii) holds for $i=n_k$ by definition of $T'_k$ and because $z_k\in Z'$. Property iv) holds because of ii) and $\{z\}$ is a singleton. Property v) holds for $j=k$ by choice of $z$.
\item {\bf Shrinking case.}
We are left with the case 
where
 $W^{(k-1)}$ is \emph{not} reachable from $z_{k}$ by a positive canonical $\sigma$-play
and
there is $i\in 0\ldots n_{k-1}$
such that the set of branches of the subtree $T'_k \cap T_{i,k-1}$
has positive probability.
Then $n_k=n_{k-1}$ and $\mathcal{C}_{k}$
is the copy of $\mathcal{C}_{k-1}$
except at rank $i$, where we replace
$T_{i,k-1}$ and $X_{i,k-1}$
by 
$T_{i,k}\subseteq T_{i,k-1}$ 
and $X_{i,k}$ defined as follows.
The tree $T_{i,k}$ is set to be any everywhere thick subtree 
 contained in $T'_k \cap T_{i,k-1}$, whose existence is given by Lemma~\ref{lem:thick}.
For every vertex $x\in X_{i,k-1}\cup \{z_k\}$,
the node of $x$ is
either the root of $T_{i,k-1}$
(when $x \in X_{i,k-1}$)
 or the root of $T'_k$
 (when $x=z_k$).
 Since $T_{i,k-1}\subseteq T'_k \cap T_{i,k-1}$,
the node of $x$ is an ancestor of the root of $T_{i,k}$.
We show 
that
 there is a positive canonical $\sigma$-play
$\pi_x$ from $x$ to a vertex on the root of $T_{i,k}$.
In case $x \in X_{i,k-1}$ because
$T_{i,k}\subseteq T_{i,k-1}$
and property iii).
In case $x=z_k$ because
$T_{i,k}\subseteq T_{i,k-1}$ and by definition of $T'_k$.
Denote $w(x)=\last(\pi_x)$.
We set
\[
X_{i,k} =  \left\{ w(x) \mid x\in X_{i,k-1}\cup \{z_k\}\right\}\enspace.
\]
Remark that
\begin{equation}
|X_{i,k}| = |X_{i,k-1}| + 1\enspace\label{eq:ind}.
\end{equation}
The reason for~\eqref{eq:ind} is that, according to iv),
all the vertices $(w(x))_{x \in X_{i,k-1}}$
are distinct.
And since $\{ w(x) \mid x \in X_{i,k-1}\}\subseteq W^{(k-1)}$
and since we are not in the copy case
then $w(z_k) \not \in \{ w(x) \mid x \in X_{i,k-1}\}$.

The construction of $\mathcal{C}_{k}$
 preserves invariants i)
and ii)
because $(T_{i,k})_{i\in 1 \ldots n_k}$
is the copy of 
$(T_{i,k-1})_{i\in 1 \ldots n_{k-1}}$
except for the tree
$T_{i,k}$ which is everywhere thick by construction
and contained in $T_{i,k-1}$.
Invariant iii) is preserved for  $x=w(x'), x' \in X_{i,k-1}$
since $w(x)$
is reachable from $X^{(k-1)}$ by a canonical positive
$\sigma$-play thus
$W_{w(x),k}\subseteq W_{w(x),k-1}$
Invariant iii) is true for $x=w(z_k)$ because
 $T_{i,k} \subseteq T'_k$
 thus from $z_k$ every canonical $\sigma$-plays is
 positive and since $z_k\in Z'$ then no transient vertex is 
 reached by such a play.
 Invariant iv) is true by hypothesis on $z_k$.
 Invariant v) is preserved for $j\in 0\ldots k-1$ because from every vertex in $W^{(k-1)}$ there is a positive canonical $\sigma$-play
 to a vertex in $W^{(k)}$.
 And invariant v) is true for $j=k$ because $z_k \in X_{i,k}\subseteq X^{(k)} \subseteq W^{(k)}$. 
\end{itemize}

Let $I$ be the set of values taken by the sequence
$1=n_0\leq n_1\leq \ldots$.
According to~\eqref{eq:ind},
the shrinking case can occur at most $|\Q|$ times at the same index $i\in I$ (in the copy and expansion cases $X_{i,k}$ is not modified).
Thus, if $k$ is the smallest rank at which $n_k=i$,
the family 
$(T_{i,k'}, X_{i,k'})_{k' \geq k}$ takes at most $|\Q|$ different values and is ultimately constant from some rank $k_i$.

We are now ready to define the positive witness $(E,W)$
using the family $(T_{i,k_i}, X_{i,k_i})_{i \in I}$  obtained "at the limit".
Let 
$E$ be the set of edges appearing in one of the trees
of the family $(T_{i,k_i})_{i \in I}$ i.e.
\[
E = \bigcup_{i \in I} \left \{ (n,b) \mid  (n \in T_{i,k_i}) \land b \in \{0,1\} \land (nb \in T_{i,k_i}) \right\}
\]
and 
\[
W = \bigcup_{i \in I, x \in X_{i,k_i}} W_{x,k_i}\enspace.
\]
According to invariant iii), $W$ is a set of positive vertices.
To prove that $(W,E)$ is a positive witness we should establish properties a) b) and c).
Property a) is clear for vertices $z \in Z \setminus Z'$,
because by definition of $Z'$ from these vertices there is a positive canonical $\sigma$-play to a transient vertex. And if $z \in Z'$ then $z=z_k$ for some $k$ and
a) is a consequence of v).

We prove that property b) holds.
Let $z=(n,q) \in W$ then there is $i\in I$ and $x\in X_{i,k_i}$
such that $z\in W_{x,k_i}$.
Then $n\in T_{i,k_i}$,
by definition of the sets $W_{x,k_i}$.
Since $T_{i,k_i}$ is everywhere thick then $n$ is not a leaf
of $T_{i,k_i}$
thus there exists $d\in\{0,1\}$ such that $nd\in T_{i,k_i}$
hence $(n,d)\in E$.
Let $z \to z'$ a local transition like in property b). Then,
by definition of $W_{x,k_i}$, 
$z'\in W_{x,k_i}$ thus $z'\in W$.
Assume Eve controls $q$ and $\sigma(z)$ is a split transition $(q_0,q_1)$ and let $d'$ such that $(n,nd')\in E$.
Since $x\in T_{i,k_i}$
then according to ii) and iv), 
also $n\in T_{i,k_i}$
thus by definition of $E$ also $n'\in T_{i,k_i}$.
Then, by definition of $W_{x,k_i}$,
$z'\in W_{x,k_i}$ thus $z'\in W$.

Property c) holds.
We show that $N_E=\{ nd\in\{0,1\}^* \mid (n,d) \in E \}$
is everywhere thick.
Let $n\in N_E$.
By definition of $E$,
there exists $i\in I$ such that $n\in T_{k_i}$
thus 
$\vec{T_{k_i}} \cap n\{0,1\}^\omega\subseteq \vec{N_E} \cap n\{0,1\}^\omega$
(actually this is an equality according to ii))
and $\mu(\vec{N_E} \cap n\{0,1\}^\omega)\geq \mu(\vec{T_{k_i}} \cap n\{0,1\}^\omega)>0$
since $T_{k_i}$ is everywhere thick.

\section{From \SP\ to \BSP: proof of Theorem~\ref{theo:reduc}}
Theorem~\ref{theo:reduc} is a corollary of Lemma~\ref{lem:sateq}
at the end of the section. 

Before translating formulas,
we turn them in positive form,
where the only negations are in front of letters of the alphabet.

\begin{lemma}\label{lem:posform}
Given a formula $\pCTLFormula$ one can build an equivalent formula in positive form whose size (as a DAG) is linear in the size of the first formula.
\end{lemma}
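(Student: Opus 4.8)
The plan is to compute a \emph{negation normal form}: push every negation towards the atoms by De Morgan's laws and by the dualities of the modalities, exactly as in the classical construction, the only logic-specific ingredients being the quantifier dualities
\begin{gather*}
\neg\PCTLE\pf\equiv\PCTLA\neg\pf,\qquad
\neg\PCTLA\pf\equiv\PCTLE\neg\pf,\\
\neg\PCTLPp(\pf)\equiv\PCTLPo(\neg\pf),\qquad
\neg\PCTLPo(\pf)\equiv\PCTLPp(\neg\pf),
\end{gather*}
together with the fact that the negation of an until is a release, which, release not being a primitive of the grammar, must be unfolded using the available $\until$ and $\always$:
\[
\neg(\pf_1\until\pf_2)\ \equiv\ \big(\always\,\neg\pf_2\big)\ \vee\ \big(\neg\pf_2\until(\neg\pf_1\wedge\neg\pf_2)\big).
\]

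Concretely I would define, by simultaneous structural recursion on the syntactic DAG of $\pCTLFormula$, two maps $\mathrm{pos}$ and $\mathrm{neg}$ sending every state (resp.\ path) subformula $\theta$ to an equivalent positive-form state (resp.\ path) formula $\mathrm{pos}(\theta)$ and to a positive-form formula $\mathrm{neg}(\theta)$ equivalent to $\neg\theta$. The map $\mathrm{pos}$ is the identity on $\top$, $\bot$ and on literals, commutes with $\wedge$, $\vee$ and with each modality, and satisfies $\mathrm{pos}(\neg\theta)=\mathrm{neg}(\theta)$. The map $\mathrm{neg}$ is given by $\mathrm{neg}(\top)=\bot$, $\mathrm{neg}(\bot)=\top$, $\mathrm{neg}(a)=\neg a$, $\mathrm{neg}(\neg\theta)=\mathrm{pos}(\theta)$, De Morgan for $\wedge,\vee$, the clauses $\mathrm{neg}(\PCTLE\pf)=\PCTLA\,\mathrm{neg}(\pf)$, $\mathrm{neg}(\PCTLA\pf)=\PCTLE\,\mathrm{neg}(\pf)$, $\mathrm{neg}(\PCTLPp(\pf))=\PCTLPo(\mathrm{neg}(\pf))$, $\mathrm{neg}(\PCTLPo(\pf))=\PCTLPp(\mathrm{neg}(\pf))$, $\mathrm{neg}(\next\pf)=\next\,\mathrm{neg}(\pf)$, $\mathrm{neg}(\always\pf)=\top\until\mathrm{neg}(\pf)$, and the release unfolding above with every $\pf_i$ replaced by $\mathrm{neg}(\pf_i)$; finally, when a state formula $\sf$ occurs as a path formula, $\mathrm{neg}(\sf)$ is just the state-formula case, which is again a legal path formula since the truth of a state formula on a path depends only on its first state.

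Correctness is a routine induction on $\theta$, proving $\mathrm{pos}(\theta)\equiv\theta$ and $\mathrm{neg}(\theta)\equiv\neg\theta$ simultaneously; each modal clause is precisely the semantic duality read off the definition of $\models$. The only clause worth spelling out is the one for $\until$: $\neg(\pf_1\until\pf_2)$ holds on a path iff either $\pf_2$ fails at every position, or there is a first position $k$ where $\pf_1$ fails and $\pf_2$ fails at all positions $\leq k$, which is exactly the displayed disjunction; one should also note that $\mathrm{neg}(\next\pf)=\next\,\mathrm{neg}(\pf)$ uses that paths are infinite, so $\next$ is self-dual. Since the construction sends state formulas to state formulas, path formulas to path formulas, and leaves a negation only directly in front of a letter, the output is in positive form.

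For the size bound, work in the DAG representation and memoise $\mathrm{pos}$ and $\mathrm{neg}$, so that each node of the input DAG is expanded at most twice, once under $\mathrm{pos}$ and once under $\mathrm{neg}$. Each clause builds its output from the already-computed images of the immediate subformulas by adding a bounded number of fresh nodes: one for $\mathrm{pos}$, and at most four for $\mathrm{neg}$, the worst case being the $\until$ clause, which creates one $\always$-node, one $\wedge$-node, one $\until$-node and one $\vee$-node while \emph{sharing} the two occurrences of $\mathrm{neg}(\pf_2)$. Hence the resulting DAG has $O(|\pCTLFormula|)$ nodes. The point that needs care --- and the reason the statement speaks of DAG size rather than tree size --- is exactly this sharing: the release unfolding duplicates $\mathrm{neg}(\pf_2)$, which costs nothing as a DAG but would blow up to exponential size on trees.
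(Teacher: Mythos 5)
Your proposal is correct and follows essentially the same route as the paper: push negations down to the atoms via the quantifier and modality dualities, working on the syntactic DAG so that the subformula duplicated by the until-clause is shared rather than copied. In fact your clause $\neg(\pf_1\until\pf_2)\equiv \always\neg\pf_2 \vee \bigl(\neg\pf_2\until(\neg\pf_1\wedge\neg\pf_2)\bigr)$ is the correct dual, whereas the paper's displayed rule $\neg(\pf_1\until\pf_2)\to \always\neg\pf_2 \vee (\neg\pf_2\until\neg\pf_1)$ omits the conjunct $\neg\pf_2$ and is falsified, e.g., on a path where $\pf_1\wedge\neg\pf_2$ holds at position $0$ and $\neg\pf_1\wedge\pf_2$ holds at position $1$.
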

\begin{proof}
These transformations
 preserve the size of the DAG of
the formula and its models:
\begin{align*}
 \neg\neg \pf&\to\pf
& \neg \proba_{>0}( \pf) &\to \proba_{=1}( \neg\pf)
& \neg \top &\to \bot \\
\neg \proba_{=1}( \pf) &\to \proba_{>0}(\neg \pf)
& \neg \next \pf &\to (\next \neg \pf)
& \neg \exists \pf &\to \forall \neg \pf\\
\neg \always \pf &\to (\top \until  \neg \pf)
&\neg \forall \pf &\to \exists \neg \pf
&\neg (\pf_1 \until  \pf_2)
&\to
 \always \neg \pf_2 \lor (\neg \pf_2 \until \neg \pf_1)\enspace.
\end{align*}
\end{proof}

\paragraph*{From Markov chains to binary trees: adding the delay symbol $\blank$}
There is a natural transformation of a Markov chain into a binary  tree,
which preserves the probability measure on $\Sigma^\omega$, up to some projection.
This transformation simulates a single transition of the original Markov chain by an unbounded number of transitions of the binary  tree.
This requires to introduce in the alphabet of the binary  tree,
on top of the alphabet $\Sigma$ of the Markov chain,
a new \emph{delay} symbol $\blank$.
In the sequel we use the notation
\[
\Sigma_\blank = \Sigma \cup \{\blank\}\enspace.
\]

Every infinite path in the Markov chain labelled by
a word $u=a_0a_1a_2\cdots \in \Sigma^\omega$
corresponds to some infinite branch of the tree
labelled by a word in $a_0\{\blank\}^* a_1\{\blank\}^* a_2\cdots \in (\Sigma\{\blank\}^*)^\omega$

Intuitively, the symbol $\blank$ delays the stochastic process,
but it should not delay it forever and is expected to
appear a finite number of times between two occurences of a non-delay symbol in $\Sigma$,
thus we are interested in $\Sigma_\blank$-labelled Markov chains with finite delay.

\begin{definition}[Finite delay]
A $\Sigma_\blank$-labelled
Markov chain $\mc_\blank$ has \emph{finite delay}
if from every state $s_\blank$ there is probability $0$
to see the delay symbol forever:
$\proba_{\mc_\blank,s_\blank}(\{\blank\}^\omega)=0$.
\end{definition}


A $\Sigma_\blank$-labelled chain with finite delay is almost a $\Sigma$-labelled chain,
up to a \emph{projection}.
%
\begin{lemma}[Projecting Markov chains with finite delay]
\label{lem:fdcarac}
A $\Sigma_\blank$-labelled Markov chain
$\mc_\blank=(\states_\blank,\lab_\blank,\pt_\blank)$
has \emph{finite delay}
iff 
there exists a $\Sigma$-labelled Markov chain $\mc=(\states, \lab,\pt)$
and 
\[
\pi: \{ s_\blank \in \states_\blank \mid \lab_\blank(s_\blank)\neq \blank\} \to \states
\text{ such that:}
\]
\begin{itemize}
\item
$\pi$ is surjective; and
\item
for every state $s_\blank\in\states_\blank$,
$
(\lab_\blank(s_\blank)\neq\blank)
\implies
(\lab_\blank(s_\blank)=
\lab(\pi(s_\blank)))$; and
\item
for every state $s,u\in\states$ and $s_\blank\in\states_\blank$ such that $\pi(s_\blank)=s$
\begin{align}
\pt(s,u)
=
\proba_{\mc_\blank,s_\blank}(\{
s_\blank s_1\cdots s_{n-1}u_\blank \in \states_\blank^*
\mid
\lab_\blank(s_1)=\cdots = \lab_\blank(s_{n-1})=\blank, \pi(u_\blank)=u\}
)\enspace.
\label{eq:corr}
\end{align}
\end{itemize}
Such a map $\pi$ is called a \emph{projection} of $\mc_\blank$ to $\mc$.
\end{lemma}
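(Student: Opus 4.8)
The statement is an equivalence, and the plan is to prove the two directions separately. The forward direction (finite delay implies the existence of a projection) will be an explicit construction of $\mc$ and $\pi$; the backward direction will be a verification which, at its core, reads the identity~\eqref{eq:corr} — once summed over all possible targets — as the assertion that from every non‑delay state the path almost surely returns to a non‑delay state.

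For ``$\mc_\blank$ has finite delay $\Rightarrow$ a projection exists'' I would take $\states$ to be the set of non‑delay states of $\mc_\blank$, let $\lab$ be the restriction of $\lab_\blank$, take $\pi$ to be the identity inclusion (trivially surjective and label‑preserving), and define $\pt(s,u)$ to be the right‑hand side of~\eqref{eq:corr}, which with this choice of $\pi$ depends only on $s$ (the only preimage of $s$ is $s$ itself). The one thing to check is that $\pt$ is a genuine transition‑probability function, i.e. $\sum_{u\in\states}\pt(s,u)=1$. The finite words $s\,s_1\cdots s_{n-1}u_\blank$ with $\lab_\blank(s_1)=\cdots=\lab_\blank(s_{n-1})=\blank$ and $\pi(u_\blank)=u$, ranging over $n\geq 1$ and $u\in\states$, are pairwise disjoint as prefixes of infinite paths, so by countable additivity $\sum_{u}\pt(s,u)=\proba_{\mc_\blank,s}(\text{the path visits a non-delay state at some positive time})$; the complementary event is ``$s_1,s_2,\ldots$ are all delay states'', whose probability equals $\sum_{s_1\text{ delay}}\pt_\blank(s,s_1)\,\proba_{\mc_\blank,s_1}(\{\blank\}^\omega)$ by the Markov property, hence is $0$ by the finite‑delay hypothesis applied at the states $s_1$. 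Thus $\sum_u\pt(s,u)=1$, $\mc$ is well defined, and~\eqref{eq:corr} holds by construction.

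For the converse, given $\mc$ and $\pi$, I would show $\proba_{\mc_\blank,s_\blank}(\{\blank\}^\omega)=0$ for every state $s_\blank$. If $s_\blank$ is non‑delay this is trivial, but the same computation gives more: writing $s=\pi(s_\blank)$ and summing~\eqref{eq:corr} over $u$, the disjointness observation above yields $1=\sum_u\pt(s,u)=\proba_{\mc_\blank,s_\blank}(\text{visit a non-delay state at some positive time})$, so from $s_\blank$ the event ``stay in delay states forever after time $0$'' has probability $0$; iterating along the successive non‑delay states visited (via the Markov property) shows that from every non‑delay state almost every path visits non‑delay states infinitely often. Finally, if $d$ is a delay state reachable from some non‑delay state $s_\blank$ along a path of delay states, then by the strong Markov property applied at the first visit to $d$, the probability of visiting $d$ before any non‑delay state and then never leaving the delay states equals $\proba_{\mc_\blank,s_\blank}(\text{visit }d\text{ before any non-delay state})\cdot\proba_{\mc_\blank,d}(\{\blank\}^\omega)$, and this is at most $\proba_{\mc_\blank,s_\blank}(\text{stay in delay states forever after time }0)=0$; since the first factor is positive, $\proba_{\mc_\blank,d}(\{\blank\}^\omega)=0$. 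Taking the last non‑delay state preceding $d$ on a path reaching $d$ reduces any reachable delay state to this case, establishing finite delay at every (reachable) state.

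The main obstacle is purely measure‑theoretic bookkeeping: one must check carefully that the cylinders occurring in~\eqref{eq:corr} form a disjoint family whose union is exactly the ``return to a non‑delay state'' event, so that the double sum over lengths $n$ and targets $u$ collapses by countable additivity, and one must invoke the Markov property (at the relevant state, or at the hitting time of $d$) to relocate the ``delay forever'' event. A minor wrinkle is that finite delay is demanded at \emph{every} state whereas a projection constrains only non‑delay states; this is reconciled by the last step above together with the convention that one works with the part of the chain reachable from an initial non‑delay state, which is the natural setting since a formula of \ctlsa\ is evaluated at a state carrying a letter of $\Sigma$. Beyond Tulcea's theorem and the elementary Markov property from the preliminaries, nothing further is needed.
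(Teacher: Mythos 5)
Your proof is correct and follows essentially the same route as the paper: both directions take $\states$ to be the set of non-delay states, $\pi$ the identity inclusion, and define $\pt(s,u)$ as the right-hand side of~\eqref{eq:corr}, with the converse reducing to the observation that summing~\eqref{eq:corr} over $u$ forces the "return to a non-delay state" event to have probability $1$. You are in fact somewhat more careful than the paper on two points it leaves implicit, namely the countable-additivity bookkeeping behind $\sum_{u}\pt(s,u)=1$ and the propagation of the finite-delay property to reachable delay states.
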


Before giving the proof, we start with a preliminary lemma:
Actually projections preserve probability measures and path labelling.
Denote
$\pathes^{r}_{\mc_\blank}$
the set of pathes of $\mc_\blank$ that go through infinitely many $\Sigma$-labelled vertices.
The definition of $\pi$ is extended to
$
\pi : \pathes^{r}_{\mc_\blank} \to \pathes_{\mc} \enspace,
$
by erasing states with label $\blank$
and projecting $\Sigma$-labelled states to their image by $\pi$.
Then,

\begin{lemma}\label{lem:projprop}
For every state $s_\blank\in\lab_\blank^{-1}(\Sigma)$
and every measurable set $E \subseteq \states^\omega$,
\begin{align}
\label{eq:eq}
&\proba_{\mc,\pi(s_\blank)}(E)
= \proba_{\mc_\blank,s_\blank}(\pi^{-1}(E))
 \enspace,\\
\label{eq:eq2}
&\pathes_{\mc}(\pi(s_\blank))
=
\pi\left(\pathes^{r}_{\mc_\blank}(s_\blank)\right)
\enspace.
\end{align}
\end{lemma}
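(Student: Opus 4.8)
The plan is to prove the two identities~\eqref{eq:eq} and~\eqref{eq:eq2} separately: the first by a cylinder computation followed by a measure-uniqueness argument, the second by a direct double inclusion. The common engine is the Markov property of $\mc_\blank$ at the successive times a trajectory hits a $\Sigma$-labelled vertex, combined with the defining identity~\eqref{eq:corr} of a projection. As a preliminary, finite delay (which is in fact forced by~\eqref{eq:corr}, since summing that identity over $u$ gives $\proba_{\mc_\blank,s_\blank}(\{\blank\}^\omega)=0$ from every state) implies that from every state almost every trajectory of $\mc_\blank$ visits infinitely many $\Sigma$-labelled vertices: the complementary event is $\bigcup_N\{$all vertices after position $N$ are $\blank$-labelled$\}$, a countable union of events of probability $0$ by finite delay and the Markov property. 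Hence $\proba_{\mc_\blank,s_\blank}(\pathes^r_{\mc_\blank}(s_\blank))=1$; moreover $\pathes^r_{\mc_\blank}$ is Borel in $\states_\blank^\omega$ and the restriction of $\pi$ to it is Borel measurable (the $k$-th $\Sigma$-labelled vertex of a trajectory is a measurable function of it, and $\pi$ is a map between countable sets), so $\pi^{-1}(E)$ is measurable for every Borel $E\subseteq\states^\omega$.

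For~\eqref{eq:eq}, I would first treat cylinders $E=u_0u_1\cdots u_k\,\states^\omega$. If $u_0\neq\pi(s_\blank)$ both sides vanish, so assume $u_0=\pi(s_\blank)$, in which case $\proba_{\mc,\pi(s_\blank)}(E)=\prod_{i=0}^{k-1}\pt(u_i,u_{i+1})$. Let $\rho_0=0<\rho_1<\rho_2<\cdots$ be the (almost surely finite, almost surely infinitely many) positions at which a trajectory of $\mc_\blank$ hits a $\Sigma$-labelled vertex; up to a null set $\pi^{-1}(E)\cap\pathes^r_{\mc_\blank}(s_\blank)$ is the event $\{\pi(X_{\rho_i})=u_i,\ 0\le i\le k\}$ with $X_{\rho_0}=s_\blank$. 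Conditioning on $X_{\rho_1}$ and applying the strong Markov property at the stopping time $\rho_1$, an induction on $k$ reduces $\proba_{\mc_\blank,s_\blank}$ of this event to $\bigl(\sum_{v:\pi(v)=u_1}\proba_{\mc_\blank,s_\blank}(X_{\rho_1}=v)\bigr)\cdot\prod_{i=1}^{k-1}\pt(u_i,u_{i+1})$, where the parenthesised factor $\proba_{\mc_\blank,s_\blank}(\pi(X_{\rho_1})=u_1)$ is precisely the right-hand side of~\eqref{eq:corr} with $s=u_0$ and $u=u_1$, i.e. $\pt(u_0,u_1)$. This yields~\eqref{eq:eq} for cylinders. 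Since cylinders form a $\pi$-system generating the Borel $\sigma$-algebra of $\states^\omega$, and since both $E\mapsto\proba_{\mc,\pi(s_\blank)}(E)$ and $E\mapsto\proba_{\mc_\blank,s_\blank}(\pi^{-1}(E))$ are probability measures on $\states^\omega$ — the latter has total mass $\proba_{\mc_\blank,s_\blank}(\pathes^r_{\mc_\blank}(s_\blank))=1$ — the uniqueness theorem for measures extends~\eqref{eq:eq} to all Borel $E$.

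For~\eqref{eq:eq2}, I would prove the two inclusions directly from~\eqref{eq:corr}. Inclusion $\pi(\pathes^r_{\mc_\blank}(s_\blank))\subseteq\pathes_{\mc}(\pi(s_\blank))$: a trajectory $w_\blank\in\pathes^r_{\mc_\blank}(s_\blank)$ consists of $\Sigma$-labelled vertices $s_\blank=s^{(0)}_\blank,s^{(1)}_\blank,\dots$ separated by finite blocks of $\blank$-labelled vertices; each such block together with its two endpoints is a positive-probability segment of the form occurring in the set of~\eqref{eq:corr}, hence contributes to that sum and forces $\pt(\pi(s^{(i)}_\blank),\pi(s^{(i+1)}_\blank))>0$, so $\pi(w_\blank)$ is a path of $\mc$ from $\pi(s_\blank)$. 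Conversely, given a path $u_0u_1\cdots$ of $\mc$ with $u_0=\pi(s_\blank)$, put $s^{(0)}_\blank=s_\blank$ and build $s^{(i+1)}_\blank$ inductively: since $\pt(u_i,u_{i+1})>0$, the right-hand side of~\eqref{eq:corr} with that state $s^{(i)}_\blank$ is positive, so there is a finite segment from $s^{(i)}_\blank$ through $\blank$-labelled vertices to some vertex $s^{(i+1)}_\blank$ with $\pi(s^{(i+1)}_\blank)=u_{i+1}$; concatenating these segments produces a trajectory in $\pathes^r_{\mc_\blank}(s_\blank)$ whose $\pi$-image is $u_0u_1\cdots$.

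The main obstacle I anticipate is the cylinder computation underlying~\eqref{eq:eq}: one must set up the return times $\rho_i$ to $\Sigma$-labelled vertices carefully and invoke the strong Markov property at $\rho_1$ so that~\eqref{eq:corr} slots in exactly as the one-step transition probability $\pt(u_0,u_1)$ in the induction, while keeping track of the fact that $\pi$ is only defined on the full-measure set $\pathes^r_{\mc_\blank}$ — it is precisely for this last point that finite delay is needed, to ensure that $\proba_{\mc_\blank,s_\blank}\circ\pi^{-1}$ is a genuine probability measure on $\states^\omega$ to which the uniqueness theorem applies.
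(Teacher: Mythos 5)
Your proof is correct and follows essentially the same route as the paper: verify~\eqref{eq:eq} on cylinders by induction using~\eqref{eq:corr} as the one-step transition probability, then extend to all Borel sets by a generating-class argument, and establish~\eqref{eq:eq2} by matching finite blocks of $\blank$-labelled states with single transitions of $\mc$. You merely make explicit several details the paper leaves implicit (the strong Markov property at the return times, measurability of $\pi^{-1}(E)$, and the role of finite delay in making $\proba_{\mc_\blank,s_\blank}\circ\pi^{-1}$ a probability measure), which is fine.
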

\begin{proof}
By definition of projections,
property~\eqref{eq:eq} holds when $E$ is a cylinder of length $1$ i.e. $E=st\states^\omega$.
An easy induction show that it also holds when $E$ is any cylinder $E=ss_0\cdots s_n\states^\omega$.
Since property~\eqref{eq:eq} is stable by complement and countable unions, it holds for every measurable set $E$.

Property~\eqref{eq:eq2}
holds because
for every $n\in\NN$,
 the finite pathes of length $n$ in $\mc$
 are exactly the projection by $\pi$ of pathes in $\mc_\blank$ that go through exactly
 $n$ $\Sigma$-labelled vertices.
\end{proof}

\begin{proof}[Proof of Lemma~\ref{lem:fdcarac}]
The finite delay hypothesis is necessary.
Let $\pi$ be a projection from $\mc_\blank$ to $\mc$
and $s_\blank$ a state of $\mc_\blank$ then
\begin{align*}
\proba_{\mc_\blank,s_\blank}(\Sigma\{\blank\}^\omega)
&=
1 - \proba_{\mc_\blank,s_\blank}(\{s_\blank s_1\cdots s_n \mid \lab_\blank(s_n)\neq \blank\})\\
&=
1 - \sum_{u\in \states}p(\pi(s_\blank),u))
\text{ (according to~\eqref{eq:corr})}\\
&=0\text{ (since $\mc$ is a Markov chain).}
\end{align*}

The finite delay property is sufficient.
The projection $\mc=(\states,\lab,\pt)$
of $\mc_\blank=(\states_\blank,\lab_\blank,\pt_\blank)$ is defined by
$
\states = \{ s \in \states_\blank\mid \lab_\blank(s) \neq \blank \}
$,
$\lab$ is the restriction of $\lab_\blank$ on $\states$
and for every states $s,u\in\states$,
\[
\pt(s,u)=\proba_{\mc_\blank,s}(ss_1\cdots s_nu \mid \lab_\blank(s_1)=\ldots=\lab_\blank(s_n)=\blank\}\enspace.
\]
The projection $\pi$ is the identity on $\states$.
The finite delay hypothesis guarantees that $\forall s \in \states,\sum_{u} \pt(s,u)=1$ thus $\mc$ is a Markov chain.
\end{proof}

\paragraph*{Equivalence between Markov chain and binary trees}


\begin{lemma}[Every chain is the projection of a binary tree]
\label{lem:chaintotree}
Let $\mc=(\states, \lab,p)$
be a $\Sigma$-labelled Markov chain
and $s_0$ a state of $\mc$.
Then there exists a $\Sigma_\blank$-labelled binary tree
$ \lab_\blank:\{0,1\}^* \to \Sigma_\blank$
and a projection $\pi$ of $\lab_\blank$ to $\mc$
such that $\pi(\epsilon)=s_0$.
\end{lemma}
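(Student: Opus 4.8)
The plan is to realise each single transition of $\mc$ by a finite \emph{gadget} --- a binary tree of $\blank$-labelled nodes whose first non-$\blank$ descendants carry the successor states with the correct probabilities --- and then to glue these gadgets together along the full binary tree $\{0,1\}^*$. First I would observe that we may assume without loss of generality that every state of $\mc$ is reachable from $s_0$: discarding the unreachable states changes neither $\proba_{\mc,s_0}$ nor the satisfaction of any formula at $s_0$, and this reduction is needed only in order to make the projection $\pi$ surjective.

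For a state $s$, enumerate the successors $u$ with $p(s,u)>0$ and split $[0,1)$ into consecutive half-open intervals $I_u$ of length $p(s,u)$, so that $\{I_u\}_u$ is a (countable) partition of $[0,1)$. For a nonempty word $w\in\{0,1\}^+$ write $J_w=[0.w,\,0.w+2^{-|w|})$ for its dyadic interval, where $0.w$ denotes the rational with binary expansion $w$. I would then define $\lab_\blank$ together with $\pi$, and, for each node, a bookkeeping pair consisting of the current \emph{gadget root} and the \emph{position word} inside that gadget, by induction on the depth. The root $\epsilon$ is a \emph{core node}, with $\pi(\epsilon)=s_0$, $\lab_\blank(\epsilon)=\lab(s_0)$ and empty position word. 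Given a node $n$ lying in the gadget of a state $s$ at position $v$, and a direction $d\in\{0,1\}$, put $w=vd$: if $J_w\subseteq I_u$ for some successor $u$ of $s$, then $nd$ is a new core node with $\pi(nd)=u$, $\lab_\blank(nd)=\lab(u)$ and empty position word (a fresh gadget for $u$ starts at $nd$); otherwise $J_w$ meets at least two of the intervals $I_u$, and $nd$ is a \emph{delay node}, $\lab_\blank(nd)=\blank$, still inside the gadget of $s$ but now at position $w$. This yields a well-defined labelling $\lab_\blank:\{0,1\}^*\to\Sigma_\blank$, and $\pi$ is defined exactly on the core nodes.

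It then remains to verify the three clauses of the definition of a projection given in Lemma~\ref{lem:fdcarac}. Label preservation is immediate from the construction. For surjectivity, the core nodes reachable from $\epsilon$ are exactly those obtained by chaining gadgets, and a gadget for $s$ can be chained to one for $u$ precisely when $p(s,u)>0$; hence $\pi$ maps the core nodes onto the set of states reachable from $s_0$, which by the reduction above is all of $\states$. The key point is equation~\eqref{eq:corr}: fix a core node $n$ with $\pi(n)=s$; a path starting at $n$ that runs through delay nodes only until it first hits a core node mapped to $u$ corresponds to a branch $b=b_1b_2\cdots$ such that, writing $k$ for the first index with $J_{b_1\cdots b_k}$ contained in a single interval $I_{u'}$, one has $u'=u$. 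Sampling $b$ uniformly amounts to sampling a uniform point $x\in[0,1)$; with probability $1$ the point $x$ is neither a dyadic rational nor an endpoint of any $I_u$ (both being countable sets), so it lies in the interior of a unique $I_u$, and then the shrinking dyadic intervals $J_{b_1\cdots b_k}$ around $x$ are eventually contained in that $I_u$. Hence the first core node strictly below $n$ is mapped to $u$ with probability $\mu(I_u)=p(s,u)$, which is exactly the right-hand side of~\eqref{eq:corr}. (As a bonus, finite delay of $\lab_\blank$ then follows from Lemma~\ref{lem:fdcarac}, the null event ``stay on $\blank$-nodes forever'' being precisely the event that $x$ is an endpoint of some $I_u$.)

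The main obstacle, and the only place requiring genuine care, is this last probabilistic computation together with its degenerate cases: countably (rather than finitely) many successors, successor probabilities with infinite binary expansion, and states whose distribution is concentrated on a single successor $u$, so that $I_u=[0,1)$ and the gadget has depth one with both children of $n$ being core nodes mapped to $u$. In every case, the statement ``almost surely some dyadic interval around the sampled point is swallowed by a single $I_u$'' is what guarantees both that the gadget delivers successor $u$ with probability exactly $p(s,u)$ and that no branch stays $\blank$ forever outside a null set; everything else is routine bookkeeping on the inductive definition.
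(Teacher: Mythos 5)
Your proposal is correct and follows essentially the same route as the paper: both constructions partition $[0,1)$ into consecutive intervals of lengths $p(s,u)$, keep a node $\blank$-labelled as long as its dyadic interval straddles two successor intervals, and declare it a core node mapped to $u$ as soon as the dyadic interval is swallowed by $I_u$. The only cosmetic difference is in verifying~\eqref{eq:corr}: the paper sums $2^{-n}$ over the minimal words whose dyadic intervals tile $]p_{i-1},p_i[$, while you sample a uniform point and argue almost-sure absorption; these are the same computation, and your explicit remark on restricting to reachable states to get surjectivity of $\pi$ is a detail the paper leaves implicit.
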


\begin{proof}[Proof of Lemma~\ref{lem:chaintotree}]
The tree $\lab_\blank$
and
the projection mapping
\[
\pi: \{ w \in \{0,1\}^* \mid \lab(w) \neq \blank \} \to \states
\]
are defined inductively.
Initially we set
$
\pi(\epsilon) = s_0
$ and $\lab_\blank(\epsilon)=\lab(s_0)$.
Assume that $\pi$ and $\lab_\blank$ are already defined for some node $w \in \{0,1\}^*$
such that $\pi(w)\in\states$.
Denote $s=\pi(w)$.
We fix an enumeration (finite or infinite) of the successors of $s$ in $\mc$.
\[
\{z_1,z_2,\ldots \} = \{ z \in S \mid \pt(s,z) > 0\}\enspace.
\]

For every $z_i$ we are going to define a subtree $T_i$
rooted on $w$ such that
the leaves of $T_i$ are mapped by $\pi$ to $z_i$
(and thus are labelled by $\lab(z_i)$)
and the inner nodes of $T_i$ are labelled by $\blank$.

The construction of $T_i$ makes use of the usual continous mapping 
$\phi : \{0,1\}^* \to [0,1]$ which associates with every finite sequence of bits  $\delta_1\cdots  \delta_n\in\{0,1\}^*$
the real number
\[
\phi(\delta_1\cdots \delta_n) = \sum_{ 1 \leq k \leq n } \frac{\delta_i}{2^i} \in [0,1[\enspace.
\]
We set $p_0=0$ and for every successor $z_i$ of $s$:
\begin{align*}
&
p_i = p(s,z_1) + \ldots + p(s,z_{i})\\
&T_i
=
\left\{
\delta_1\cdots \delta_n\in\{0,1\}^+
\mid
\phi(\delta_1 \cdots \delta_n\{0,1\}^*) \subseteq ]p_{i-1}, p_i [
\right\}\\
&L_i = \{ \delta_1\cdots \delta_n \in T_i \mid \delta_1\cdots \delta_{n-1} \not \in T_i\}
\enspace.
\end{align*}
The definition of $\pi$ is expanded to $\bigcup_{z_i} wL_i$, as follows.
\begin{equation}
\label{eq:sum}
\forall \delta_1\cdots \delta_n \in L_i,
\pi(w\delta_1\cdots \delta_n) = z_i\enspace.
\end{equation}

%
The condition~\eqref{eq:corr} in the definition of a projection holds:
denote $\mu$ the uniform Lebesgue measure on $[0,1]$, then
for every successor $z_i$,
\begin{align*}
&\proba_{\lab_\blank,w}(\{w, w_1,\cdots, w_n \in \states_\blank^* \mid
\lab_\blank(w_1)=\ldots=\lab_\blank(w_{n-1})=\blank \text{ and } \pi(w_n)=z_i \})\\
&
=\sum_{\delta_1\cdots \delta_n \in L_i } \proba_{\lab_\blank,w}( \text{reach node } w\delta_1\cdots \delta_n)
= \sum_{\delta_1\cdots \delta_n \in L_i} \frac{1}{2^n}\\
&
= \sum_{\delta_1\cdots \delta_n \in L_i} \mu(\phi(\delta_1\cdots \delta_n\{0,1\}^*))\\
&
= \mu(]p_{i-1},p_i[)
=p_i - p_{i-1} \\
&= p(s,z_i)\enspace.
\end{align*}
The first equality is by inductive definition of $\pi$ and $\lab_\blank$,
the second is by definition of a Markov binary tree,
the third is a simple computation, as well as the two last ones.
The fourth equality holds because the collection of intervals
$
\left(\phi(\delta_1\cdots \delta_n\{0,1\}^*)\right)_{ \delta_1\cdots \delta_n \in L_i}
$
is a partition of $]p_{i-1},p_i[$: by definition these intervals are contained
in $]p_{i-1},p_i[$, they are disjoint because $\phi(w\{0,1\}^*)\cap \phi(w'\{0,1\}^*)\neq \emptyset$
implies that $w \sqsubseteq w'$ or $w' \sqsubseteq w$
but $L_i$ is prefix-free
and for any $x \in ]p_{i-1},p_i[$
there exists $k,n$ such that
\[
x \in
\left [\frac{k}{2^n}-\frac{1}{2^n}, \frac{k}{2^n}+\frac{1}{2^n}\right [
\subseteq  \left ] p_{i-1},p_i \right [\enspace.
\]
This terminates the inductive step of the construction of $\pi$ and $\lab_\blank$ and the proof of the lemma.
\end{proof}

The correspondance between $\Sigma_\blank$-labelled Markov chains and their projections
on $\Sigma$ has a logical counterpart.
For every \ctlsa\  formula $\fo$ on the alphabet $\Sigma$,
there is a similar \ctlsa\ formula $\ov{\fo}$
on the alphabet $\Sigma_\blank$,
such that a $\Sigma_\blank$-labelled Markov
chain satisfies  $\ov{\fo}$ if and only if its
projection on $\Sigma$ satisfies $\fo$.

\begin{definition}[Lifting of a \ctlsa\  formula on $\Sigma_\blank$]
Let $\fo$ be a \ctlsa\  formula on $\Sigma$.
The lifting of $\fo$ on $\Sigma_\blank$
is the \ctlsa\ formula $\ov{\fo}$ defined inductively by
\begin{align*}
 &\ov{\next \pf}=
\next (\blank \until \ov{\pf})
&&\ov{\exists\pf} =  (\neg \blank) \land\exists \left(\ov{\pf} \land \neg (\top \until G\blank) \right)
\\
&\ov{\always \pf} =
 \always (\blank \lor  \ov{\pf})
&&\ov{\forall\pf} =  (\neg \blank) \land\forall \left(\ov{\pf} \lor (\top \until G\blank)\right)
\\
&\ov{\pf_1 \until \pf_2} =
 (\blank \lor  \ov{\pf_1}) \until \ov{\pf_2} 
&&\ov{\proba_{\sim b}(\pf)} =  (\neg \blank) \land
\proba_{\sim b}(\ov{\pf})
\end{align*}
and the transformation  from $\fo$ to $\ov{\fo}$
leaves other operators invariant.
\end{definition}

\begin{lemma}[Lifting are compatible with projections]
\label{lem:logeq}
Let $\pi$ a projection of a
$\Sigma_\blank$-labelled Markov chain $\mc_\blank$
on a $\Sigma$-labelled Markov chain $\mc$.
For every state-formula $\sf$,
and for every $\Sigma$-labelled state
$s_\blank$ of $\mc_\blank$,
\[
(\mc_\blank, s_\blank \models \ov{\sf})\iff
(\mc, \pi(s_\blank) \models \sf)\enspace.
\]
\end{lemma}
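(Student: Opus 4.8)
The plan is to strengthen the statement and prove, by simultaneous induction on the structure of formulas (which, by Lemma~\ref{lem:posform}, we may take to be in positive normal form, so that negations occur only in front of atoms), the claimed equivalence for state formulas together with the following companion statement for path formulas: for every path formula $\pf$ and every $w_\blank\in\pathes^{r}_{\mc_\blank}$ whose first state is $\Sigma$-labelled,
\[
(\mc_\blank, w_\blank\models\ov{\pf})\iff(\mc,\pi(w_\blank)\models\pf).
\]
Two preparatory facts are needed. First, $\pathes^{r}_{\mc_\blank}(s_\blank)$ is exactly the set of paths from $s_\blank$ that fail $\top\until\always\blank$, and by the finite delay hypothesis it carries full $\proba_{\mc_\blank,s_\blank}$-measure: its complement is the countable union, over times $n$ and $\blank$-labelled states $s$, of the events ``at $s$ at time $n$, then labelled $\blank$ forever'', each of probability $\proba_{\mc_\blank,s}(\{\blank\}^\omega)=0$. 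Second, I will use both conclusions of Lemma~\ref{lem:projprop}: the measure transfer $\proba_{\mc,\pi(s_\blank)}(E)=\proba_{\mc_\blank,s_\blank}(\pi^{-1}(E))$ and the path correspondence $\pathes_{\mc}(\pi(s_\blank))=\pi(\pathes^{r}_{\mc_\blank}(s_\blank))$.

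For state formulas the propositional cases are immediate, since $\pi$ preserves labels of $\Sigma$-labelled states. For $\sf=\PCTLE\pf$ the conjunct $\neg\blank$ of $\ov{\sf}$ holds because $s_\blank$ is $\Sigma$-labelled, and the subformula $\PCTLE(\ov{\pf}\land\neg(\top\until\always\blank))$ quantifies precisely over $\pathes^{r}_{\mc_\blank}(s_\blank)$ by the first preparatory fact; the path-formula induction hypothesis together with the path correspondence of Lemma~\ref{lem:projprop} then identifies it with the $\mc$-side. The case $\sf=\PCTLA\pf$ is dual, the disjunct $\top\until\always\blank$ absorbing all paths outside $\pathes^{r}_{\mc_\blank}$. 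For $\sf=\proba_{\sim b}(\pf)$, since $\pathes^{r}_{\mc_\blank}(s_\blank)$ is co-null, the set of paths satisfying $\ov{\pf}$ agrees up to a null set with $\pi^{-1}\{w\mid w\models\pf\}$ by the path-formula induction hypothesis, whence $\proba_{\mc_\blank,s_\blank}(\ov{\pf})=\proba_{\mc,\pi(s_\blank)}(\pf)$ by the measure transfer of Lemma~\ref{lem:projprop}, and the result follows.

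For path formulas, the case $\pf=\sf$ reduces to the state statement read at the first (hence $\Sigma$-labelled) state, and the Boolean connectives commute with $\ov{\cdot}$ and with $\pi$. The temporal cases are the crux of the argument and its main obstacle: $\ov{\cdot}$ must turn a property of $\Sigma$-paths into one of $\Sigma_\blank$-paths that is insensitive to the finite runs of $\blank$-states that $\pi$ collapses, and this is exactly what the $\blank$-guards in the lifting are for --- the $\blank\lor$ in $\ov{\always\pf}$, the $\blank\lor\ov{\pf_1}$ and the target $\ov{\pf_2}$ in $\ov{\pf_1\until\pf_2}$, the leading $\blank\until$ in $\ov{\next\pf}$, and the $\neg\blank$ conjuncts guarding evaluation of state subformulas at $\blank$-states, so that a lifted formula cannot be triggered prematurely on a $\blank$-state. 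I would isolate a stuttering lemma, roughly to the effect that satisfaction of a lifted path formula by a path of $\pathes^{r}_{\mc_\blank}$ is already determined by the collapsed path $\pi(\cdot)$ and agrees with satisfaction of the original formula; this lets a witnessing position of a lifted $\until$, or the single step of a lifted $\next$, be realigned onto the $\Sigma$-states of the path, and conversely. The delicate part is proving this stuttering lemma by its own induction on $\pf$: one must show that such a witness can always be taken on (or pushed forward to) a $\Sigma$-state, using that $\ov{\always\cdot}$ has $G$-shape, hence propagates forward across a $\blank$-run, and that lifted state subformulas are unsatisfiable at $\blank$-states. Granting this bookkeeping, the temporal cases close as expected: $\ov{\always\pf}$ holds along $w_\blank$ iff $\ov{\pf}$ holds at every $\Sigma$-state, i.e. $\pf$ holds everywhere on $\pi(w_\blank)$; $\ov{\next\pf}$ holds iff $\ov{\pf}$ holds at the next $\Sigma$-state, i.e. at the second state of $\pi(w_\blank)$; and $\ov{\pf_1\until\pf_2}$ holds iff some $\Sigma$-state satisfies $\ov{\pf_2}$ with every earlier $\Sigma$-state satisfying $\ov{\pf_1}$, which under $\pi$ is exactly $\pf_1\until\pf_2$.
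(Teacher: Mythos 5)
Your overall strategy is the one the paper itself follows: a simultaneous induction on state and path formulas, with $\pathes^{r}_{\mc_\blank}$ shown to carry full measure via the finite-delay hypothesis, Lemma~\ref{lem:projprop} discharging the $\PCTLE$, $\PCTLA$ and $\proba_{\sim b}$ cases, and the temporal cases reduced to realigning witnessing positions onto the $\Sigma$-labelled positions $m_0<m_1<\cdots$ of the path. Up to and including the quantifier cases your argument matches the paper's step for step.

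The gap is in the temporal cases, exactly where you place ``the delicate part''. Your stuttering lemma is meant to push a witness of a lifted $\until$ (or the step of a lifted $\next$) onto a $\Sigma$-state, and the justification you offer is that lifted state subformulas are unsatisfiable at $\blank$-states. That supporting claim is false: the lifting leaves letters and their negations invariant, so $\ov{\neg a}=\neg a$ and $\ov{\top}=\top$ hold at every $\blank$-state. The realignment then genuinely fails: with $\pf_2=\neg a$, a $\blank$-position strictly between $m_j$ and $m_{j+1}$ can witness $\ov{\pf_2}$ in $(\blank\lor\ov{\pf_1})\until\ov{\pf_2}$ while $\neg a$ holds neither at $m_j$ nor at $m_{j+1}$, and it propagates in neither direction across the $\blank$-run. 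Concretely, on the chain $s_0\to s_1\to s_0\to\cdots$ with $\lab_\blank(s_0)=a$ and $\lab_\blank(s_1)=\blank$, which projects onto the one-state chain labelled $a$, the unique path satisfies $(\blank\lor\top)\until\neg a$ at position $1$ although its projection does not satisfy $\top\until\neg a$; so the equivalence you are trying to prove breaks already for $\sf=\proba_{=1}(\top\until\neg a)$, a formula in positive form. You should be aware that the paper's own proof asserts this same realignment as an unjustified $\iff$ in its $\until$ chain and is exposed to the same example; the repair is to guard the lifting (e.g.\ set $\ov{\pf_1\until\pf_2}=(\blank\lor\ov{\pf_1})\until((\neg\blank)\land\ov{\pf_2})$, or $\ov{\neg a}=(\neg a)\land(\neg\blank)$), after which your realignment argument --- every relevant lifted formula is either false at $\blank$-states or has $\always$-shape and so survives the $\blank$-run --- does close. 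As written, however, the stuttering lemma cannot be proved from the premises you give, so the proof has a genuine hole at its central step.
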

\begin{proof}[Proof of Lemma~\ref{lem:logeq}]
The equivalence
\be
\label{eq:liftstate}
(\mc_\blank, s_\blank \models \ov{\sf})\iff
(\mc, \pi(s_\blank) \models \sf)\enspace.
\ee
is proved by induction on $\sf$,
together with the following extra property~\eqref{eq:liftpath}.
Denote
\[
\pathes^{r}_{\mc_\blank} =  \pathes_{\mc_\blank} \setminus \left((\Sigma\cup\{\blank\})^*\{\blank\}^\omega\right)
\]
the set of pathes of $\mc_\blank$ that go through infinitely many $\Sigma$-labelled vertices.
The definition domain of $\pi$ is extended to
\[
\pi : \pathes^{r}_{\mc_\blank} \to \pathes_{\mc} \enspace.
\]
by projecting $\Sigma$-labelled states $s$ to $\pi(s)$ and $\blank$-labelled states to the empty word $\epsilon$.
We show by induction that for every path formula $\pf$,
and every path $\aPath \in \pathes^{r}_{\mc_\blank} $
\begin{align}
\label{eq:liftpath}
(\mc_\blank,w \models \ov{\pf}) \iff (\mc,\pi(w) \models \pf)\enspace.
\end{align}

\medskip

Every state formula $\sf$ is also a path formula.
Assume that property~\eqref{eq:liftstate} holds for $\sf$
and some state $s$.
Then property~\eqref{eq:liftpath} holds for $\sf$ (seen as a path formula)
and every path starting from $s$, by definition of $\aPath \models \sf$.

Assume that property~\eqref{eq:liftpath} holds for some path formula $\{\pf',\pf_1,\pf_2\}$.
We show that~\eqref{eq:liftpath}
holds for
$\pf\in\{\next \pf',\always \pf',  \pf_1
\until \pf_2\}$.
Let $\aPath \in \pathes^{r}_{\mc_\blank}$,
Since $\aPath$ visits infinitely many $\Sigma$-labelled states,
there is an infinite sequence of integers $n_0, n_1, n_2,\ldots\in \nats$
such that
 $\lab_\blank(w)= a_0\blank^{n_0} a_1 \blank^{n_1}a_2\cdots$.
 Then by definition of $\pi$,  $\lab(\pi(w))=a_0a_1a_2\cdots$.
We set $m_0=0 \leq m_1=1+n_1 \leq m_2=1+n_1+1+n_2 \leq \ldots$
the positions where the symbols in $\Sigma$ appear in $w$.
Then
\be\label{eq:mi}
\forall i \in \NN, \pi(w[m_i]) = \pi(w)[i]\enspace.
\ee

The equivalence~\eqref{eq:liftpath} holds for $\pf=\next \pf'$ because
\begin{align*}
(\mc_\blank,w \models  \ov{\next\pf'})
&\iff
(\mc_\blank,w \models \next (\blank \until \ov{\pf'}))
& (\text{by definition of $\ov{\pf'}$})\\
&\iff
(\mc_\blank,w[m_1] \models \ov{\pf'})
& (\text{by definition of $m_1$})
\\
&\iff
(\mc,\pi(w[m_1]) \models \pf')
& (\text{by induction hypothesis})
\\
&\iff
(\mc,\pi(w)[1] \models \pf')
& (\text{according to~\eqref{eq:mi}})\\
&\iff
(\mc,\pi(w) \models\next \pf')
& (\text{by definition of $\models\next \pf'$})
\enspace.
\end{align*}
With the same arguments, the equivalence~\eqref{eq:liftpath} holds for $\pf=\always \pf'$ because
\begin{align*}
(\mc_\blank,w \models  \ov{\always\pf'})
&\iff
(\mc_\blank,w \models \always (\blank \lor \ov{\pf'}))\\
&\iff
(\forall i\geq 0, \mc_\blank,w[m_i] \models \ov{\pf'})\\
&\iff
(\forall i\geq 0, \mc,\pi(w[m_i]) \models {\pf'})\\
&\iff
(\forall i\geq 0, \mc,\pi(w)[i]) \models {\pf'})\\
&\iff
(\mc,\pi(w) \models \always \pf')
\enspace.
\end{align*}
And the equivalence~\eqref{eq:liftpath} holds for
$\pf=\pf_1\until \pf_2$ because
\begin{align*}
&(\mc_\blank,w \models  \ov{\pf_1\until \pf_2})\\
&\iff
(\mc_\blank,w \models (\blank \lor  \ov{\pf_1}) \until \ov{\pf_2})\\
&\iff
\exists j \geq 0, (\mc_\blank,w[m_j]\models\ov{\pf_2})
\land  \forall 0\leq i < j, (\mc_\blank,w[m_i]\models\ov{\pf_1})\\
&\iff
\exists j \geq 0, (\mc,\pi(w[m_j])\models \pf_2)
\land  \forall 0\leq i < j, (\mc,\pi(w[m_i])\models \pf_1)\\
&\iff
\exists j \geq 0, (\mc,\pi(w)[j])\models \pf_2)
\land  \forall 0\leq i < j, (\mc,\pi(w)[i])\models \pf_1)\\
&\iff
(\mc,\pi(w) \models \pf_1 \until \pf_2)
\enspace.
\end{align*}
This terminates the inductive proof of~\eqref{eq:liftpath},
under the hypothesis that~\eqref{eq:liftstate} and~\eqref{eq:liftpath}
hold for sub-formula.
\medskip

Now we show that~\eqref{eq:liftstate} holds
 $\sf\in\{\exists \pf',\forall \pf', \proba_{\sim b} \pf'\}$
whenever property~\eqref{eq:liftpath}
holds for $\pf'$.
There are three cases.
In case $\sf=\exists \pf'$,
\begin{align*}
&(\mc_\blank,s_\blank\models \ov{\exists \pf'})\\
&\iff
(\mc_\blank,s_\blank \models
(\neg \blank) \land\exists (\neg (XG\blank) \land \ov{\pf'}))
& \text{ def. of  $\ov{\exists \pf'}$}\\
&\iff
(\mc_\blank,s_\blank \models
\exists (\neg (\top \until G\blank) \land \ov{\pf'}))
& \text{ $\lab_\blank(s_\blank)\in\Sigma$}\\
&\iff
\exists \aPath_\blank \in \pathes_{\mc_\blank}(s_\blank),(\mc_\blank,
\aPath_\blank \models \neg (\top \until G\blank) \land \ov{\pf'})
& \text{ def. of $\models\exists$}
\\
&\iff
\exists \aPath_\blank \in \pathes^r_{\mc_\blank}(s_\blank),
(\mc_\blank, \aPath_\blank \models \ov{\pf'})
& \text{ by def.}
\\
&\iff
\exists \aPath_\blank \in \pathes^r_{\mc_\blank}(s_\blank),
(\mc, \pi(\aPath_\blank) \models \pf')
& \text{ ind. hyp.}
\\
&\iff
\exists \aPath \in \pathes_\mc(s),
(\mc,\aPath \models \pf')
& \text{ by~\eqref{eq:eq2} in Lemma~\ref{lem:projprop}}
\\
&\iff
(\mc,s \models \exists  \pf')
& \text{ def of $\models\exists$.}
\end{align*}
The proof of~\eqref{eq:liftstate}
in case $\sf=\forall \pf'$ is similar.
In case
$\sf=\proba_{\sim b}(\pf')$,
we denote
\begin{align*}
&L_{\ov{\pf'}}
=
\{s_\blank s_1\cdots \in \pathes_{\mc_\blank}(s_\blank) \mid s_\blank s_1\cdots\models\ov{\pf'}\}\\
&L_{\pf'}
=
\{s s_1\cdots \in \pathes_{\mc}(s) \mid s s_1\cdots\models\pf'\}
\enspace.
\end{align*}
Then according to the induction hypothesis~\eqref{eq:liftpath} for $\pf'$,
\be\label{eq:tt1}
\pi^{-1}
\left(
L_{\pf'}
\right)
=
L_{\ov{\pf'}}
~\cap~\pathes^r_{\mc_\blank}(s_\blank)
\ee
and
\begin{align*}
&(\mc_\blank,s_\blank\models \ov{\proba_{\sim b}(\pf')})\\
&\iff
(\mc_\blank,s_\blank \models
(\neg \blank) \land\proba_{\sim b}(\ov{\pf'}))
& \text{ def. of  $\ov{\proba_{\sim b}(\pf')}$}\\
&\iff
(\mc_\blank,s_\blank \models
\proba_{\sim b}(\ov{\pf'}))
& \text{ because $c_\blank(s_\blank)\in\Sigma$}\\
&\iff
\proba_{\mc_\blank,s_\blank}
\left(
L_{\ov{\pf'}}
\right)
 \sim b
& \text{ by def}\\
&\iff
\proba_{\mc_\blank,s_\blank}
\left(
L_{\ov{\pf'}}
~\cap~\pathes^r_{\mc_\blank}(s_\blank)
\right)
 \sim b
& \text{ because $\mc_\blank$ has finite delay}\\
&\iff
\proba_{\mc,s}
\left( \pi^{-1}
\left(
L_{\pf'}
\right) \right)
\sim b
& \text{ by~\eqref{eq:tt1}}\\
&\iff
\proba_{\mc,s}
\left( L_{\pf'} \right)
\sim b
& \text{ by~\eqref{eq:eq} in Lemma~\ref{lem:projprop}}\\
&\iff
(\mc,s\models
\proba_{\sim b}(\pf') )
& \text{ by def.}
\end{align*}

This terminates the inductive step,
thus property~\eqref{eq:liftpath} holds for all path formula
and property~\eqref{eq:liftstate} holds for all state formula.
\end{proof}

Finally, Theorem~\ref{theo:reduc} is a corollary of:

\begin{lemma}[Equivalence]
\label{lem:sateq}
Let $\fo$ a \ctlsa\  formula $\fo$ with alphabet $\Sigma$.
Let  $\blank$ a symbol not in $\Sigma$
and
$\ov{\fo}$
the lifting of $\fo$ on $\Sigma_\blank$.
Then the following statements are equivalent:
\begin{enumerate}
\item[i)]
$\fo$ is satisfiable.
\item[ii)]
$\ov{\fo}$ is satisfiable by a
binary tree with finite delay.
\item[iii)]
$\ov{\fo} \land \proba_{=0}(\top \until G\blank)$ is satisfiable by a 
binary tree.
\item[iv)]
$\ov{\fo} \land \proba_{=0}(\top \until G\blank)$ is satisfiable.
\end{enumerate}
\end{lemma}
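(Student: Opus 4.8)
The plan is to establish the cycle of implications (i) $\Rightarrow$ (ii) $\Rightarrow$ (iii) $\Rightarrow$ (iv) $\Rightarrow$ (i). Two links come for free. For (iii) $\Rightarrow$ (iv) it suffices to recall that a binary tree is a particular Markov chain, so a binary-tree model is in particular a model. For (i) $\Rightarrow$ (ii) I would take a $\Sigma$-labelled Markov chain $\mc$ and a state $s_0$ with $\mc,s_0\models\fo$, invoke Lemma~\ref{lem:chaintotree} to obtain a $\Sigma_\blank$-labelled binary tree $\lab_\blank$ together with a projection $\pi$ of $\lab_\blank$ onto $\mc$ with $\pi(\epsilon)=s_0$, observe that $\lab_\blank$ then automatically has finite delay (by the direction of Lemma~\ref{lem:fdcarac} stating that the existence of a projection forbids seeing $\blank$ forever), and apply Lemma~\ref{lem:logeq} to the state formula $\fo$ at the $\Sigma$-labelled node $\epsilon$ to conclude $\lab_\blank,\epsilon\models\ov{\fo}$.

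For (ii) $\Rightarrow$ (iii) the observation is that finite delay of a binary tree $\lab_\blank$ already implies $\lab_\blank,\epsilon\models\proba_{=0}(\top\until\always\blank)$. Indeed, the set of branches from $\epsilon$ satisfying $\top\until\always\blank$ is the countable union, over all nodes $n$, of the set of branches that visit $n$ and are labelled $\blank$ from $n$ onwards; conditioned on visiting $n$, this event has probability at most $\proba_{\lab_\blank,n}(\{\blank\}^\omega)$, which is $0$ by finite delay, so the whole event has probability $0$ by countable subadditivity. Hence the same tree witnesses both (ii) and (iii).

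The substance is in (iv) $\Rightarrow$ (i). Starting from a $\Sigma_\blank$-labelled Markov chain $\mc_\blank$ and a state $s_\blank$ with $\mc_\blank,s_\blank\models\ov{\fo}\land\proba_{=0}(\top\until\always\blank)$, I would first restrict $\mc_\blank$ to the states reachable from $s_\blank$, which changes neither the satisfaction of $\ov{\fo}$ at $s_\blank$ nor the probability appearing in the second conjunct. The key claim is that the restricted chain has finite delay: were some reachable state $u$ to satisfy $\proba_{\mc_\blank,u}(\{\blank\}^\omega)>0$, then appending an all-$\blank$ continuation from $u$ to a positive-probability finite path from $s_\blank$ to $u$ would produce a positive-probability family of branches all satisfying $\top\until\always\blank$, witnessed at the visit to $u$, contradicting the second conjunct. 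Once finite delay is established, Lemma~\ref{lem:fdcarac} supplies a $\Sigma$-labelled Markov chain $\mc$ and a projection $\pi$ of $\mc_\blank$ onto $\mc$; since the lifting guards every modal and probabilistic operator by $\neg\blank$, we may assume $\lab_\blank(s_\blank)\in\Sigma$ (the remaining case, in which the top of $\fo$ is purely propositional, being dispatched directly), so that $s_\blank$ lies in the domain of $\pi$ and Lemma~\ref{lem:logeq} gives $\mc,\pi(s_\blank)\models\fo$. This yields (i), and Theorem~\ref{theo:reduc} then follows from the equivalence (i) $\Leftrightarrow$ (iii).

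The main obstacle I anticipate is the finite-delay extraction inside (iv) $\Rightarrow$ (i): the hypothesis only asserts a zero-probability statement at the single state $s_\blank$, and one must upgrade it --- using the reachability restriction and the Markov property --- into finite delay at \emph{every} reachable state, which is precisely the shape Lemma~\ref{lem:fdcarac} consumes. Everything else, in particular that the reachability restriction is harmless and that Lemma~\ref{lem:logeq} faithfully translates between ``$\ov{\fo}$ holds at $s_\blank$'' and ``$\fo$ holds at $\pi(s_\blank)$'', is routine once Lemmas~\ref{lem:chaintotree}, \ref{lem:fdcarac} and~\ref{lem:logeq} are available.
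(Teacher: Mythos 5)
Your proposal is correct and follows essentially the same route as the paper: the same cycle (i)\,$\Rightarrow$\,(ii)\,$\Rightarrow$\,(iii)\,$\Rightarrow$\,(iv)\,$\Rightarrow$\,(i), using Lemma~\ref{lem:chaintotree} for (i)\,$\Rightarrow$\,(ii), triviality for the middle links, and Lemma~\ref{lem:fdcarac} plus Lemma~\ref{lem:logeq} for (iv)\,$\Rightarrow$\,(i). You are in fact slightly more careful than the paper at two points it glosses over --- the countable-subadditivity argument for (ii)\,$\Rightarrow$\,(iii), and the upgrade of the single-state zero-probability hypothesis to finite delay at every reachable state in (iv)\,$\Rightarrow$\,(i) --- but these refinements do not change the approach.
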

\begin{proof}
We do a circular proof.
Assume that i) holds and prove ii).
Then according to i) $\fo$ is satisfiable by some
Markov chain $\mc=(\states,\lab_\mc,p)$ and state $s_0\in\states$
such that $\mc,s_0\models \fo$.
According to Lemma~\ref{lem:chaintotree},
there exists a projection $\pi$ from a $(\states \cup \{\blank\})$-labelled binary tree $\tree_\blank$
to $\mc$, such that $\pi(\epsilon)=s_0$.
According to Lemma~\ref{lem:logeq}
$\tree_\blank,\epsilon\models \ov{\fo}$,
thus ii) holds.
If ii) holds then iii) holds by definition of binary trees with finite delay.
Clearly iii) implies iv).
Assume that iv) holds and prove i).
According to iv),
$\ov{\fo}$ is satisfiable by some $(\Sigma \cup \{\blank\})$-labelled Markov chain $\mc_\blank$ with finite delay.
According to Lemma~\ref{lem:fdcarac},
there exists a projection $\pi$ of $\mc_\blank$
to a $\Sigma$-labelled
Markov chain $\mc$.
According to Lemma~\ref{lem:logeq}
$\mc,\pi(\epsilon)\models \fo$,
thus i) holds.
\end{proof}


\section{From \ctlsa\ to alternating automata: proof of Lemma~\ref{lem:pctltobc}}

\begin{proof}[Proof of Lemma~\ref{lem:pctltobc}]
Given a state formula $\sf$ in positive form
we denote $\overline \sf$ the positive form of its negation
and call it the \emph{dual} of $\sf$.

We assume that the input formula $\pCTLFormula$ is positive, which is w.l.o.g according to Lemma~\ref{lem:posform}.
Let $\mathcal{SF}$ be the set of every state 
subformula of the input formula $\pCTLFormula$, and their dual,
including $\top$ and $\bot$.
Let
$\mathcal{PF}$ be the set of path subformula
of the state formula in  $\mathcal{SF}$.

With every path formula $\pf\in \mathcal{PF}$
is associated 
a deterministic parity automaton on infinite words,
denoted $\automata_\pf$.
The alphabet of $\automata_\pf$ depends on $\pf$.
Denote $\mathcal{SF}(\pf)$ the collection
of state formulas appearing in $\pf$ i.e. $\mathcal{SF}(\pf)$ is the set of leaves of the syntactic tree of $\pf$.
Then the alphabet of $\automata_\pf$ is the collection
of subsets of $\mathcal{SF}(\pf)$.
The automaton 
$\automata_\pf$ recognizes the sequences of valuations
of the formula of $\mathcal{SF}(\pf)$ for which the path formula $\pf$ is true.
The construction of such an automaton of size $\mathcal{O}\left(2^{2^{|\phi|}}\right)$ is standard (see \emph{e.g.}~\cite{ltl}).
Note that in the case of \ECTL\ formulas, the construction is only of size $\mathcal{O}\left({2^{|\phi|}}\right)$ since path formulas are directly given as deterministic B\"uchi automata.

\paragraph*{States and transitions.}
Every state formula $\sf\in\mathcal{SF}$
is also a state of the alternating automaton.

\begin{itemize}
\item
For every letter $a\in \Sigma$, the transitions from $\sf=a$ or $\sf=\neg a$  
are local
and deterministic:
\begin{align*}
a &\to_a \top 
&
\forall b\neq a,
~~a &\to_b \bot
\\
 \neg a &\to_a \bot&
\forall b\neq a,\neg a &\to_b \top
\end{align*}
Other transitions of the automaton do not depend on the label of the current node
and are specified without mentioning the letter.
\item
From states $\sf=\sf_1\lor \sf_2$
and $\sf=\sf_1\land \sf_2$ there are local transitions
$\sf  \to \sf_1$
 and $\sf  \to \sf_2$.
%
In the $\lor$ case the choice is made by Eve
and in the $\land$ case by Adam.
\item
For every formula in $\sf\in\{\exists \pf, \forall \pf,\proba_{>0}( \pf),\proba_{=1}( \pf) \}$,
there are states $(\sf,q)_{q \in R_\pf}$,
controlled by Eve,
where $R_\pf$ is the set of states of the automaton 
$\automata_\pf$.
The state $\sf$  is the source of a unique local transition
to $(\sf,i_\pf)$, where $i_\pf$ is the initial state of the automaton 
$\automata_\pf$.
From $(\sf,q)$ with $q \in R_\pf$,
Eve can choose any subset ${\bf b}\subseteq \mathcal{SF}(\pf)$
and perform a local transition $(\sf,q)\to(\sf,q,{\bf b})$.
Intuitively, for every state formula $\sf_0$ appearing in $\pf$,
Eve has to claim whether or not this formula holds in the current node
by including or not $\sf_0$ in ${\bf b}$.
Adam controls $(\sf,q,{\bf b})$ and faces a choice.
\begin{itemize}
\item
Adam can ask for a proof of the valuation ${\bf b}$
by selecting 
a state formula $\sf_0\in \mathcal{SF}(\pf)$ and
playing the local transition $(\sf,q,{\bf b})\to\sf_0$
 if $\sf_0 \in {\bf b}$
and $(\sf,q,{\bf b})\to\overline {\sf_0}$ if $\sf_0 \not\in {\bf b}$.
\item
Adam can accept the valuation ${\bf b}$
and plays a local transition
$(\sf,q,{\bf b})\to(\sf,q',E)$ 
where $q\to_{\bf b} q'$ is the deterministic transition
of $\automata_\sf$ on letter ${\bf b}$.
From there Eve has to choose a split transition,
her options depend on the exact type of $\sf$:
\begin{itemize}
\item
If $\sf=\forall \pf$ or $\sf=\proba_{=1}( \pf)$ the only option for Eve
is the split transition to $((\sf,q'),(\sf,q'))$.
\item
If $\sf = \exists \pf$
then  Eve can choose
between two split transitions
leading  to\\
$
\text{ either }
(~\top~,~(\sf,q')~)
\text{ or }
(~(\sf,q')~,~\top~)\enspace.
$
\item
If $\sf = \proba_{>0}( \pf)$
then Eve can choose
between  three split transitions
leading to
$
\text{ either }
(~(\sf,q')~,~(\sf,q')~) \text{ or }
(~\ignore~,~(\sf,q')~) \text{ or }
(~(\sf,q')~,~\ignore~)\enspace,
$
where $\ignore$ is the special absorbing pruning state.
Also states $\top$ and $\bot$ are absorbing.
\end{itemize}
\end{itemize}
\end{itemize}

Remark that his automaton has finite choice for Adam,
the canonical choice for Adam is to accept the valuation proposed by Eve, otherwise the automaton moves to a subformula.

\paragraph*{Acceptance conditions.}
Every play of the acceptance game
either ends up in one of the three absorbing states
$\top,\ignore,\bot$ or eventually stays trapped in a $\equiv$-component 
whose all states contain the same state formula
$\sf\in\{\exists \pf, \forall \pf,\proba_{>0}( \pf),\proba_{=1}( \pf) \}$, denoted $Q_\sf$.
The acceptance conditions are defined by:

\begin{itemize}
\item[A1)]
$(\top\in\Qall\cap\Qone\cap\Qzero)~\text{and}~(\ignore\in \Qall\cap\Qone\setminus \Qzero)~\text{and}~(\bot \in \Q \setminus (\Qall\cup \Qone \cup \Qzero))$.
\item[A2)]
The order within  $Q_\sf$ 
extends the order between states of $\automata_\pf$:
if $q' < q$ in $\automata_\pf$ then $(\sf,q') < (\sf,q)$
and all other states of $Q_\sf$ are smaller.
A play eventually trapped in $\Q_\sf$ is \emph{$\automata_\pf$-accepting} if
its projection on $R_\pf$ is an accepting computation of $\automata_\pf$.
\item[A3)]
Every play eventually trapped in $Q_\sf$ with 
$\sf\in\{\exists \pf, \forall \pf \}$  is $\automata_\pf$-accepting.
\item[A4)]
Almost-every play eventually trapped in $Q_\sf$ with 
$\sf\in\{\proba_{>0}( \pf), \proba_{=1}( \pf) \}$ 
is $\automata_\pf$-accepting.
\item[A5)]
When the play enters a component $Q_\sf$ with 
$\sf= \proba_{>0}( \pf)$
then with positive probability
its continuation never enters neither $\ignore$ nor  $\bot$.
\end{itemize}

By design, these conditions can be expressed by $\Qall$, $\Qone$ and 
$\Qzero$ sets thanks to:
\begin{lemma}\label{lem:truth}
Assume a play $\pi$ is eventually trapped in $\Q_\sf$ with 
$\sf\in\{\exists \pf, \forall \pf,\proba_{>0}( \pf),\proba_{=1}( \pf) \}$.
Then $\pi$ is $\automata_\pf$-accepting if and only if
the second component of
its limsup  is an accepting state of $\automata_\pf$.

If moreover Eve plays truthfully then
$
(\text{$\pi$ is $\automata_\pf$-accepting})
\iff
(\pi \models \pf ) \enspace.
$
\end{lemma}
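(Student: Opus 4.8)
The statement splits into a purely structural equivalence and a semantic refinement, and I would establish them in that order.

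\textbf{The structural equivalence.}
First I would note that a play $\pi$ trapped in $Q_\sf$ performs infinitely many split transitions: by \textbf{(NLL)} there is no local loop, so between two consecutive splits the play runs through a bounded chain of local transitions, and inside $Q_\sf$ the only such chain is $(\sf,q)\to(\sf,q,{\bf b})\to(\sf,q',E)\to(\text{split})$, landing on $(\sf,q')$ (it cannot land on $\top$ or $\ignore$ since $\pi$ stays in $Q_\sf$). Hence $\pi$ crosses exactly one ``core'' state $(\sf,q)$, $q\in R_\pf$, per round, and since $\automata_\pf$ is deterministic the only transition available from $(\sf,q,{\bf b})$ after Adam accepts is the forced $q\to_{\bf b}q'$; consequently the sequence $\rho$ of core states visited is an \emph{infinite run of $\automata_\pf$}, starting at $i_\pf$ (the entry point of $Q_\sf$ is $\sf\to(\sf,i_\pf)$) and reading the sequence of valuations ${\bf b}$ successively proposed by Eve. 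By condition A2 every auxiliary state of $Q_\sf$ is $\le$-below all core states and the core states are ordered along the $\automata_\pf$-order, so the $\le$-largest state of $Q_\sf$ visited infinitely often is the core state $(\sf,q^\star)$ with $q^\star$ the $\automata_\pf$-largest state visited infinitely often by $\rho$, i.e. $q^\star$ is the limsup state of $\rho$. Since the order on the states of the deterministic parity automaton $\automata_\pf$ is fixed compatibly with its priorities (so that the largest state seen infinitely often carries the largest priority seen infinitely often) and its accepting states are declared to be those of the appropriate priority, $\rho$ is an accepting run iff $q^\star$ is an accepting state. As $\pi$ is $\automata_\pf$-accepting by definition iff $\rho$ is accepting and the second component of $\limsup\pi$ is $q^\star$, this is exactly the asserted equivalence.

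\textbf{The truthful-play refinement.}
I would first make ``Eve plays truthfully'' explicit: on every node $n'$ crossed she proposes from $(\sf,q)$ the valuation ${\bf b}\subseteq\mathcal{SF}(\pf)$ with $\sf_0\in{\bf b}\iff n'\models\sf_0$ for each $\sf_0\in\mathcal{SF}(\pf)$, resolves every $\lor$ towards a true disjunct, and on the split transitions of $Q_\sf$ follows a genuine model path witnessing $\pf$ (a single witnessing path for $\sf=\exists\pf$, a branch of an everywhere thick subtree of $\pf$-paths with the complement pruned to $\ignore$ for $\sf=\proba_{>0}(\pf)$; for $\sf=\forall\pf$ and $\sf=\proba_{=1}(\pf)$ there is no directional choice). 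Now if $\pi$ is trapped in $Q_\sf$ then Adam never refutes, since refuting is a local move leaving $Q_\sf$; hence at every round Adam takes the canonical ``accept'' transition and $\automata_\pf$ is fed precisely the sequence of true valuations of $\mathcal{SF}(\pf)$ along the path $w$ to which $\pi$ projects (from the point it enters $Q_\sf$). By the defining property of $\automata_\pf$ --- it recognizes exactly the valuation sequences of $\mathcal{SF}(\pf)$ for which $\pf$ holds, and truth of $\pf$ along $w$ depends only on those valuations --- the run $\rho$ is accepting iff $w\models\pf$, i.e. iff $\pi\models\pf$. Combined with the structural equivalence this gives the claim.

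\textbf{Where the work lies.}
There is no deep idea here; the bulk of the write-up is bookkeeping, and I expect the delicate points to be (i) transporting the parity condition of $\automata_\pf$ through the auxiliary states of $Q_\sf$, i.e. verifying that the order A2 really makes ``second component of $\limsup\pi$ accepting'' coincide with ``$\rho$ accepting'', and (ii) pinning down ``truthful play'' uniformly for the four quantifier forms and checking that a truthful Eve is never forced into $\bot$ when Adam challenges a subformula of $\pf$ --- the latter relying on the induction hypothesis of the surrounding proof of Lemma~\ref{lem:pctltobc}, namely that the sub-automata for the formulas in $\mathcal{SF}(\pf)$ correctly decide them, although on a play actually trapped in $Q_\sf$ this challenge never occurs.
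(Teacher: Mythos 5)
The paper offers no proof of this lemma at all --- it is asserted inside the proof of Lemma~\ref{lem:pctltobc} with the words ``by design'' --- and your write-up is a correct unfolding of exactly the intended argument: the play trapped in $Q_\sf$ induces, via its second components, the unique run of $\automata_\pf$ on the valuations Eve proposes, condition A2 transfers the limsup of that run to the limsup of the play, and under truthful play those valuations are the true ones along the projected path, so the defining property of $\automata_\pf$ closes the loop. The one hypothesis you must supply --- that the state order of the deterministic parity automaton $\automata_\pf$ is chosen so that the largest state seen infinitely often determines acceptance --- is indeed left implicit in the paper (it is forced by the very phrasing of the lemma), and you flag it appropriately.
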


This automaton has \what\
because each time Adam asks for a proof,
the alternating automaton exits the current $\equiv$-class. It has size $O(2^{2^{|\pCTLFormula|}})$ as the union of a polynomial number of automata (the $\automata_\pf$) of size $O(2^{2^{|\pCTLFormula|}})$.

\medskip

We show that this automaton
recognizes exactly the set of models of the \ctlsa\ formula,
for that we describe a winning strategy for Eve if the input binary tree
is a model of the formula, and a winning strategy for Adam if not.

If the input tree $\lab:\{0,1\}^*\to\Sigma$ is a model of the formula,
then Eve has a winning strategy which maintains the following invariant:

\smallskip
\begin{description}
	\item[{\bf IE:}] 
	for every finite play $\pi$ whose last state is of the form
	$(s,\sf)$ with $s$ a node of the tree
	and $\sf$ a state formula then
	$
	\lab,s\models \sf\enspace.
	$
\end{description}
\smallskip

  First, Eve is always truthful about the valuations ${\bf b}$
 of the inner state formulas.
  Second, on vertices $(s,\sf_1\lor \sf_2)$
 Eve chooses a successor $(s,\sf_i)$ such that 
 $\mc,s \models \sf_i$,
 which is possible according to the invariant.
 
This guarantees the invariant {\bf IE} to be maintained,
since on vertices $(s,\sf_1\land \sf_2)$
controlled by Adam, the invariant guarantees 
$\mc,s \models \sf_1$ and $\mc,s \models \sf_2$
and when Adam asks for
a proof the invariant is maintained because Eve is truthful.
Moreover $\bot$ cannot be reached because Eve is truthful.

 To terminate the description of Eve strategy, 
 we specify 
the choice of Eve in a state $(\sf,q,E)$ 
when $\sf\in\{\exists \pf, \proba_{>0}(\pf)\}$.
When the play enters  $Q_\sf$,
Eve chooses a \emph{witness} of $\mc,s \models \sf$
(which holds according to {\bf IE}).
In case $\sf = \exists \pf$
the witness is a branch of the tree visiting the current node
and satisfying $\pf$.
In case $\sf = \proba_{>0}( \pf)$
the witness is a thick subtree whose root is the current node
and whose every branch
satisfies $\pf$
which exists according to Lemma~\ref{lem:thick}.
If $\sf=\exists \pf$ then Eve chooses the state
$(\sf,q')$ in the direction of the witness path and $\ignore$ in the other direction.
If $\sf= \proba_{>0}(\pf)$ then Eve chooses the state
$(\sf,q')$ in either direction staying in the witness subtree
and $\ignore$ in the other direction.

Now that Eve strategy is defined, we show that it is winning.

According to the invariant,
no play consistent with Eve strategy reaches $\sf=\bot$ 
thus according to Lemma~\ref{lem:truth}
 all plays have limsup in $\Qall$
and almost-all plays have limsup in $\Qone$.

When the finite play $\pi$ enters for the first time a component $Q_\sf$ with 
$\sf= \proba_{>0}( \pf)$ in a vertex $(s,\sf)$,
we show by induction on the topological structure of the automaton
that  there is $>0$ probability
that continuations of $\pi$ stay in $\Qzero$,
i.e. they do not enter the $\ignore$ state.
Let $T\subseteq \{0,1\}^*$ be the thick subtree chosen by 
Eve to witness $s \models  \proba_{>0}( \pf)$.
Every branch of $T$ is the projection
of a continuation of $\pi$,
and if this continuation
stays in $Q_\sf$ then by definition of Eve strategy,
it never enters $\ignore$.
Thus if all branches of $T$ are such projections,
we are done since $T$ is thick.
Otherwise,
there is at least one continuation of $\pi$
which leaves $Q_\sf$ in a state
$\neq \ignore$ and it will stay in non-$\ignore$ states
as long as it does not enter another $\proba_{>0}$-component,
thus we conclude with the inductive hypothesis.

\medskip

Conversely assume that the input binary tree $\lab:\{0,1\}^*\to\Sigma$ is not a model of the formula.
Then we describe a winning strategy for Adam which maintains the invariant:

\smallskip
\begin{description}
	\item[{\bf IA:}] 
	whenever the play reaches a vertex $(s,\sf)$ with $\sf\neq \top$ then
$\lab,s\not\models \sf\enspace$.
\end{description}
\smallskip

If Eve is not truthful when proposing a valuation
of the inner state formula,
Adam asks for a proof of one of the wrong entries,
which obviously maintains the invariant {\bf IA}.
And from every state $(s,\sf_1\land \sf_2)$,
Adam moves to either successor $(s,\sf_i)$ which maintains the invariant {\bf IA}.

Fix some strategy $\sigma$ for Eve.
If any play reaches $\bot$ then this falsifies the $\Qall$-condition and Adam wins.

We show that there exists at least one play eventually trapped
in some component $Q_\sf$
with
$\sf\in\{\exists \pf, \forall \pf,\proba_{>0}( \pf),\proba_{=1}( \pf) \}$.
By design of the transitions,
if a play enters $\ignore$
then a play of the same length does not.
And according to the invariant {\bf IA},
the only way to enter $\top$
is from a state of the form $(s,(\exists \pf,q,E))$
and in this case there is a play of the same length which does not enter $\top$.

Let $\pi$ be the finite play corresponding to the moment
the play enters $Q_\sf$, in some vertex $(s,\sf)$.
We can choose $\sf$ minimal which implies that Eve is truthful
in every continuation of $\pi$ whose last state is in $Q_\sf$.
According to Lemma~\ref{lem:truth},
for every continuation $\pi'$ of $\pi$ which stays
in $Q_\sf$,
$
(\text{$\pi'$ is $\automata_\pf$-accepting})
\iff
(\pi' \models \pf ) \enspace.
$
According to the invariant {\bf IA}
$\lab,s\not\models \sf$ thus
in all cases Adam wins:
\begin{itemize}
\item
if $\sf=\forall \pf$ or $\sf=\exists \pf$
 one of the continuations of $\pi$ stays in $Q_\sf$
but is not $\automata_\pf$-accepting.
\item
If $\sf=\proba_{=1}(\pf)$
there is $>0$-probability that a continuation of $\pi$
stays in $Q_\sf$ 
and is not $\automata_\pf$-accepting.
\item
If $\sf=\proba_{>0}( \pf)$
then there is probability $0$ that the play stays in
$Q_\sf$ and is $\automata_\pf$-accepting.
Thus either almost-every continuation of $\pi$
enters $\ignore\not\in\Qzero$
or there is 
$>0$-probability that a continuation of $\pi$
stays in $Q_\sf$ 
and is not $\automata_\pf$-accepting.
\end{itemize}

Thus  Lemma~\ref{lem:pctltobc} is proved
when the input formula $\pCTLFormula$ is in \ctlsa.

\paragraph*{Optimizing the construction for variants and fragments.}

If $\pCTLFormula$ is an \ECTLa\ formula,
as already noticed 
the deterministic parity automaton $\automata_\pf$ is already included in $\pCTLFormula$, thus its size is linear in 
$\mid \pCTLFormula\mid$.
As a consequence, the size of the alternating automaton is "only" exponential in the size of $\pCTLFormula$.

If $\pCTLFormula$  belongs to the fragment \ctl\allop, every state subformula $\pf$ of $\pCTLFormula$
has a single subformula
which is a path formula of type
$\next \pf' \mid \pf_1 \until \pf_2 \mid \always\pf'$.
Thus the subformula valuations proposed by Eve to Adam consist in one or two bits, thus they have constant size instead of exponential size.
Moreover, the deterministic parity automaton $\automata_\pf$ has  at most two states:
for $\next\pf'$ and $\always\pf'$ this is a one-state B\"uchi automaton,
for  $\pf_1 \until \pf_2$ this is a two-state co-B\"uchi automaton.
Finally the state space of the alternating automaton is linear in the size of the input formula.

When
the formula has no deterministic quantifier,
i.e. when it belongs to the fragment \ctls\pop\
then condition A3 becomes trivial. Since the  $\Qall$ condition is 
not used in any of the other conditions A1-A5, the alternating automaton is $\Qall$-trivial.
\end{proof}

\end{document}